\newcommand{\PP}{{\mathbb{P}}}
\newcommand{\C}{{\mathbb{C}}}
\newcommand{\bT}{{\mathbf{T}}}
\newcommand{\E}{{\mathcal{E}}}
\newcommand{\Z}{{\mathbb{Z}}}
\newcommand{\CR}{{\mathrm{CR}}}
\newcommand{\M}{\overline{\mathcal{M}}}
\newcommand{\bp}{{\mathbf p}}
\newcommand{\ev}{{\mathrm{ev}}} 
\newcommand{\half}{\frac12}
\newcommand{\si}{S_\mathrm{inf}}
\newcommand{\Id}{\mathrm{Id}}
\newcommand{\Sc}{\mathcal{S}}
\newcommand{\un}{\mathrm{un}}
\newcommand{\diag}{\mathrm{diag}}
\newcommand{\ndiag}{\mathrm{ndiag}}
\newtheorem*{theorem2.1}{Theorem 2.1}
\newtheorem*{theorem2.1'}{Theorem 2.1'}
\newtheorem*{theorem2.2'}{Theorem 2.2'}
\newtheorem*{theorem2.2}{Theorem 2.2}
\newtheorem*{proposition2.2}{Proposition 2.2}
\newtheorem*{lemma2.1}{Lemma 2.1}
\newtheorem*{theorem3.1}{Theorem 3.1}
\newtheorem{theorem}{Theorem}[section]
\newtheorem{corollary}{Corollary}[section]
\newtheorem{lemma}{Lemma}[section]
\newtheorem{remark}{Remark}[section]
\newtheorem{proposition}{Proposition}[section]
\newtheorem{definition}{Definition}[section]
\author{Chongyao Chen }
\email{chenchongyao@pku.edu.cn}
\address{Chongyao, Chen, Department of Physics and State
Key Laboratory of Nuclear Physics and Technology, Peking
University, Beijing}
\author{Shuai Guo}
\email{guoshuai@math.pku.edu.cn}
\address{Shuai, Guo, School of Mathematical Sciences, Peking University, Beijing}
\title[{Quantum curve and  bilinear Fermionic form for {$\mathbb{P}$\lowercase{$[r]$}}}]{Quantum curve and  bilinear Fermionic form for  the orbifold Gromov-Witten theory of $\mathbb{P}[r]$}
\begin{document}
\maketitle

\begin{abstract}
	We construct the quantum curve for the Baker-Akhiezer function of the orbifold Gromov-Witten theory of the weighted projective line $\mathbb P[r]$. Furthermore, we deduce the explicit bilinear Fermionic formula for the (stationary) Gromov-Witten potential via the lifting operator contructed from the Baker-Akhiezer function. 
\end{abstract}

\setcounter{section}{-1}
\setcounter{tocdepth}{1}
\tableofcontents

\section{Introduction}
\label{s0}

It is a general phenomenon that the generating functions of various kinds of enumerative problems are the tau-functions of certain integrable hierarchies, after suitable change of variables. Furthermore, when the integrable hierarchies are reductions of ($n\/$-component) KP hierarchies, it is conjectured that the Baker-Akhiezer  functions $\Psi(t,x)$ multiplied by a factor of ``unstable contribution" is the solution of a differential equation of the form
$$
\hat H_t(x,-\hbar \partial_x) e^{S_{\un}(t,x)/\hbar} \Psi(t,x) =0 ,
$$
such that the operator $\hat H_t(x,\hbar \partial_x)$ is a quantization of some function $H_t(x,p)$ over a two dimensional affine space $\mathbb A$, which defines a family of spectral curves
$$
C_t = \{ H_t(x,p)\in \mathbb A:  H(t,x)=0 \} .
$$
The operator $\hat H_t$ is called \emph{quantum curve} in the literature \cite{ADKMV,dijkgraaf2007two,dijkgraaf2008supersymmetric,dijkgraaf2009quantum,gukov2012polynomial}.

The quantum curves have been thoroughly studied for many cases in the past few years, including: various types of Hurwitz numbers\cite{mulase2012spectral,zhou2012quantum,mulase2013spectral,liu2013quantum,bouchard2014mirror,alexandrov2016ramifications}, Kontsevich-Witten and r-spin intersection numbers\cite{zhou2012intersection,ding2016r}, etc.
Yet for Gromov-Witten theories only limited results have been reached mathematically\cite{dunin2017quantum,norbury2016quantum}. Also, most of the existing examples of quantum curves have underlying rational classical curves.  Therefore, we hope our aim will be shading some lights in these directions. \par

\medskip

For $r\in \mathbb Z_+$, we denote by $\mathbb P[r]$  the weighted projective line which has a single stack point of order $r$ at $\infty$.
Let  $\M_{g,m,\gamma}(\mathbb P[r],d)$ be the moduli space of stable maps to $\mathbb P[r]$ of degree $d$, where $\gamma=(\gamma_1,\cdots,\gamma_n)$ with $\gamma_i \in \mathbb Z_r$ gives the $n$-tuple of monodromies.

We introduce the genus $g$, $n$-point stationary correlators
\begin{equation}\label{gwinvariant}
\left<h \bar{\psi}_1^{k_1},\cdots, h \bar{\psi}_n^{k_n} \right>_{g,m+n}^{\mathbb P[r]} :=
\sum_{d\geq 0} q^d \int_{[\M_{g,n,\emptyset}(\mathbb P[r],d)]^{Vir}} \prod_{i=1}^n (\ev_i^* h) \bar{\psi}_i^{k_i} .
\end{equation}
where $h$  is the hyperplane class  of $\mathbb P[r]$, and $\bar{\psi}_i$ the first Chern class of the cotangent line
bundle $L_i$ on the moduli space of stable maps.\footnote{ Here we use  $\bar{\psi}_k$ to denote the psi-classes,  since we will denote by $\psi_k$  the free Fermions in this paper.}
Let 
$$\bT = \bT(z) = \sum_{k\geq 0} T_k \, h\cdot z^k\in H^2_{\CR}({\mathbb P[r]} )[[z]],$$
we define the stationary
Gromov-Witten potential function by
$$
F(\hbar,\bT) := \sum_{g,n\geq 0} \frac{\hbar^{g-1}}{n!} \left< \bT(\bar{\psi}_1),\cdots \bT(\bar{\psi}_n) \right>_{g,n}^{\mathbb P[r]} ,
$$
and we define the Gromov-Witten partition function by
\begin{equation}\label{gwp}
Z(\hbar,\bT)= \exp F(\hbar,\bT).
\end{equation}

\subsection{From Gromov-Witten potential to integrable hierarchy and quantum curve} \label{secquantumcurve}

The first main result of this paper, is to prove that the quantum curve for the Gromov-Witten theory of the weighted projective line $\mathbb P[r]$ is given by 
\begin{equation} \label{QC}
\hat H_t(x, -\hbar\partial_x) = e^{ -\hbar \partial_x} + q^r e^{r\,t\,\hbar}\, e^{r \hbar \partial_x} -x   + \textstyle \frac{1}{2}\hbar  .
\end{equation}

We now give the precise statement. 
Let $\Psi(t,x)$ be the Baker-Akhiezer  function defined from the Gromov-Witten partition function \eqref{gwp} via the following change of variables: \footnote{
	See also \eqref{BAfunc} in Section~\ref{bafunction} for the standard definition via the Fermionic Fields.
	The  $t=0$ case of
	the specialization above, is also called principal specialization in the literature.}
\begin{equation}\label{kpprin}
\Psi(t,x) = e^{tx}\cdot  \frac{Z(\hbar,\bT)|_{T_0=t-x^{-1}, \ T_k = -k!x^{-k-1} \ \forall k>0}}{Z(\hbar,\bT)|_{T _0=t , \ T_k = 0 \ \forall k>0}}
\end{equation}\par




\begin{theorem2.2}
	We define  the wave function of the Gromov-Witten theory of $\mathbb{P}[r]$ by
	\begin{equation}
	\Phi(t,x) = e^{S_\un(t,x)/\hbar}\Psi(t,x)
	\end{equation}
where $\Psi(t,x)$ is the Baker-Akhiezer function \eqref{kpprin} and the unstable contribution is  defined as
$$
S_\un(t,x) = -x\ln x+x-t\hbar x.
$$
	then the wave function $\Phi(t,x)$ satisfyies the following ``quantum curve" equation
	$$
	\hat{H}_t(x,-\hbar\partial_x) \, \Phi(t,x) = 0.
	$$
	where the quantum curve $\hat{H}_t(x,-\hbar\partial_x)$ is the differential operator  defined in \eqref{QC}.
\end{theorem2.2}\par
\begin{remark}
This quantum curve can be regarded as a quantization of the classical spectral curve
$$
\{(p,x)\in \C \times \C : \  e^{ p} + q^r \, e^{-rp} -x  =0\} .
$$
In particular, our result specialize to those obtained in \cite{dunin2017quantum} for stationary $\PP^1$ ($r=1$).
\end{remark}

The above theorem is   a direct consequence of  the following explicit formula
\begin{theorem2.1}
	The wave function for the orbifold Gromov-Witten theory of $\mathbb{P}[r]$ has the following closed form
	\begin{equation*}
	\Phi(t,x) = e^{S_\un(t,x)/\hbar}\Psi(t,x) \sim\left(2\pi\right)^{\half} \sum_{d=0}^{\infty}\frac{(-1)^dq^{rd}e^{t\hbar rd}}{r^dd!}\frac{\hbar^{-\frac{x}{\hbar}-(r+1)d}}{\Gamma(\frac{x}{\hbar}+rd+\frac{1}{2})},
	\end{equation*}
	where $\sim$ refers to the asymptotic expansion at $x\rightarrow\infty,|\arg(x)|<\pi$.
\end{theorem2.1}\par

\subsection{The canonical bilinear Fermionic form} \label{sectwopointfunction}

To connect the Gromov-Witten partition function with the KP hierarchy, we need to perform the following change of variables
$
T _k \mapsto k! p_{k+1}
$. 
We denote the resulting generating function by
\begin{equation}
\tau(\hbar,\bp):=Z(\hbar,\bT) |_{T_k \mapsto k! p_{k+1}}.
\end{equation}
By the result of \cite{okounkov2006equivariant,johnson2009equivariant}, $\tau(\hbar,\bp)$ is a $\tau$-function of the KP hierarchy. In the literature, the $\tau$-functions are also considered as functions of  KP times variables $t_k:=\frac{p_k}{k}$.

\smallskip

By Sato's theory, any $\tau$-function of the KP hierarchy  is specified by a point in the semi-infinite Grassmannian of $V$, which can be Pl{\"u}cker embedded into  the projectivization of $\Lambda_0^{\frac{\infty}{2}}V$ as a cone. Such a point can be considered as a transformation from the vacuum $|0 \rangle$ by an element $G\in \widehat{GL}(\infty)$. Although $G$ does not correspond to an unique element in $\widehat{\mathfrak{gl}}(\infty)$ via exponential map, there is a canonical way to put it, i.e. \emph{the canonical bilinear Fermionic form}, which gives a canonical basis of $V$ (c.f. Section~\ref{canonicalbasis}, see also \cite{kharchev1998kadomtsev,ADKMV,alexandrov2015enumerative}).\par

\smallskip

Now we construct the canonical bilinear Fermionic form:
An element $G\in \widehat{GL}(\infty)$ is called a \emph{Bogoliubov transformation} if it is of the form
\begin{equation}\label{uniform}
G =  e^{\sum_{i,k=0}^{\infty}b_{i,k}\psi^*_{-k-\half}\psi_{i+\half}}  .
\end{equation}
As we have mentioned,  any solution of the KP hierarchy  can be considered as an element $|V \rangle$ in  $\Lambda_0^{\frac{\infty}{2}}V$
(see Section~\ref{s1.1} for the precise definition). Then the canonical bilinear Fermionic form of $|V\rangle$ is given by the following Bogoliubov transformation
 of the vacuum:
 $$
|V\rangle = G |0 \rangle
 $$
where $b_{i,k}$ are determined uniquely by the following Fermionic two point function
$$
B(x,y) :=\sum_{i,j \geq 0}b_{i,j} x^{-i-1} y^{-j-1}  = \frac{\braket{0|\psi^*(y)\psi(x)|V}}{\braket{0|V}}.
$$

The second main result gives the closed form of the  canonical Fermionic bilinear  form:

\begin{theorem3.1}
	The tau function of Gromov-Witten theory of $\mathbb P[r]$ can be written as the following explicit Fermionic bilinear \footnote {{ Here we have identified the space  $\mathbb C[[\hbar,q]][\mathbf p]$  with the Fermionic fock space by using the Boson-Fermion correspondence, see for example Equation~\eqref{bfcorrespondence} for the explicit construction. }}
	\begin{equation}
	\tau(\hbar,\mathbf p) = e^{\sum_{i,j \geq 0} b_{i,j}\psi^*_{-k-\half}\psi_{i+\half}} |0\rangle,
	\end{equation}
	with the explicit form of the generating function for $b_{i,j}$ defined by
\begin{equation} \label{Bxy}
B(x,y)= \frac{e^{\frac{x}{\hbar}\ln \frac{x}{\hbar}-\frac{x}{\hbar}}}{e^{\frac{y}{\hbar}\ln \frac{y}{\hbar}-\frac{y}{\hbar}}}\sum_{d=1}^{\infty}\frac{q^{rd}}{d\cdot r^d\hbar^{(r+1)d}}\sum_{k=0}^{d-1}\frac{(-1)^{k-1}}{k!(d-1-k)!}\sum_{n=1}^{r}
\frac{{\Gamma(\frac{y}{\hbar}+n-r(d-k)-\frac{1}{2})}}{{\Gamma(\frac{x}{\hbar}+rk+n+\frac{1}{2})}}
\end{equation}
where we expand the RHS as an asymptotic series in $ Q[[\hbar x^{-1},\hbar y^{-1}]]$.

\end{theorem3.1} 

We will prove this theorem directly from Theorem \ref{onept}. Hence, we give an algorithm which deduces the explicit formula for the all genus (stationary) Gromov-Witten potential of $\mathbb P[r]$ directly via the Baker-Akhiezer function.

\begin{remark}
In this paper we will always working in and fixed value of $r$, all quantity that varies along $r$ will be considered in particular $\PP[r]$, especially for $r=1$, we refer to the stationary Gromov-Witten theory of $\PP^1$.
\end{remark}

The paper is organized as follows. 
In Section \ref{s1}, we review the preliminaries in infinite wedge and orbifold Gromov-Witten theory concerning our work. In Section \ref{s2}, we review the operator formalism of orbifold Gromov-Witten theory of $\PP[r]$, derive the closed form of wave-function and prove it satisfy the quantum curve equation. In Section \ref{s3} we will derive the canonical Fermionic bilinear.
Some technical proofs will be given in the Appendix.\par

Finally, we want to remark that the idea to dedue the Fermionic bilinear form via the lifting operators and  the Baker-Akhiezer  functions in Section~\ref{s3}   works for more general cases.
We will address  to other cases in a sequel.

\section{Infinite Wedge and Orbifold Gromov Witten theory}\label{s1}
In this section, we defined the infinite wedge space $\Lambda^{\frac{\infty}{2}}V$ as the Fermionic Fock space, and discussed the operators on $\Lambda_0^{\frac{\infty}{2}}V$. After the introduction of necessary ingredients, we will see the generating series of stationary orbifold Gromov-Witten theory of $\PP[r]$ can be written as a tau-function of KP integrable hierarchy.\par

\subsection{The Fermionic Fock space}\label{s1.1}
The Fermionic Fock space provides the arena for the operator formalism, which we now briefly review, for more details, we refer to \cite{okounkov2001infinite,okounkov2006equivariant,alexandrov2015enumerative}. Let $V$ be the operator spanned by $\{\underline{k}\}, k\in\mathbb{Z}+\half$, i.e.
$$
V = \bigoplus_{k\in\mathbb{Z}+\half}\C\underline{k}.
$$

The Fermionic Fock space $\Lambda^{\frac{\infty}{2}}V$ is defined as the semi-infinite wedge space of $V$. More precisely, let $\mathcal{C}$ be the collection of all ordered subsets $S = \{s_1,s_2,\cdots\}$ of $\mathbb{Z}+\half$, satisfying
\begin{equation*}
\begin{split}
    &\mathrm{(1)}\,\#\{s_i\geq0\}<\infty,\\
    &\mathrm{(2)}\,\#\mathbb{Z}_{-}\setminus\{s_i<0\}<\infty,
\end{split}
\end{equation*}
and we denote by $v_{S}$  the following wedge product
$$
v_{S} = \underline{s_1}\wedge\underline{s_2}\wedge\underline{s_3}\wedge\cdots.
$$
Then we have
$$
\Lambda^{\frac{\infty}{2}} V = \bigoplus_{S\in\mathcal{C}}\C v_{S}.
$$

On $\Lambda^{\frac{\infty}{2}}V$, one can define an Hermitian inner product $\left<\cdot,\cdot\right>$, where $\{v_{S}\}$ consists an orthonormal basis.

\begin{remark}
In our notation, the indices for the basis of $V$ are half integer, which can be understood as shifted by $-\frac{1}{2}$ from the integer notation $\underline{x^{k}}$, i.e. we have for $k\in\mathbb{Z}$, $\underline{k-\half} = \underline{x^k}$. These two notation will be used interchangably.
\end{remark}

The Fermionic creator $\psi_k$ acting on $\Lambda^{\frac{\infty}{2}} V$ as
$$
\psi_k\cdot v = \underline{k}\wedge v,\quad k\in\Z+\half.
$$
meanwhile, the Fermionic annihilator $\psi_k^*$ is defined as the adjoint of $\psi_k$ w.r.t. the inner product $\left<\cdot,\cdot\right>$.\par

The  creators and annihilators satisfy the anti-canonical commutation relations:
\begin{equation*}
\begin{split}
\{\psi_i,\psi_j^*\} :=& \psi_i\psi_j^*+\psi_j^*\psi_i = \delta_{i,j}\\
\{\psi_i,\psi_j\} &= \{\psi^*_i,\psi_j^*\}=0,
\end{split}
\end{equation*}
and the normal-ordering defined as
\begin{equation*}
:\psi_i\psi_j^*:=\left\{
\begin{aligned}
&\psi_i\psi_j^*,&\quad j>0\\
&-\psi_j^*\psi_i,&\quad j<0\\
\end{aligned}\right.
\end{equation*}.

Denote, $E_{i,j}, i,j,\in\Z+\half$ as the single entry matrices in $\mathfrak{gl}(\infty)$, which form a standard basis of it. The central extension can be manifested via
$$
\hat{r}(E_{i,j}) = :\psi_i\psi^{*}_j:,
$$
which expands linearly on $\mathfrak{gl}(\infty)$ and forms a projective representation $\hat{r}$ of $\mathfrak{gl}(\infty)$ on $\Lambda^{\frac{\infty}{2}}V$. $\hat{r}$ can also being regarded as a linear representation of $\widehat{\mathfrak{gl}}(\infty): = \mathfrak{gl}(\infty)\oplus c\C$, i.e. the central extension of $\mathfrak{gl}(\infty)$.\par

The following two elements in $\mathfrak{gl}(\infty)$ serve great importance in the sequel, we define the charge operator $C$ and energy operator $H$ as

$$
C = \sum_{k\in\Z+\half}E_{kk},\quad H = \sum_{k\in\Z+\half}kE_{kk}.
$$

Clearly, $v_S,\forall S\in\mathcal{C}$ is a eigenvector for $\hat{r}(C)$ and $\hat{r}(H)$, thus $\Lambda^{\frac{\infty}{2}}V$ is bigraded under $\hat{r}(C)$ and $\hat{r}(H)$. Under the grading of $\hat{r}(C)$, we have
$$
\Lambda^{\frac{\infty}{2}}V := \bigoplus_{k}\Lambda_k^{\frac{\infty}{2}}V.
$$
where $\Lambda_k^{\frac{\infty}{2}}V$ consists vectors $v_S$ with charge $k$. It is clear that each element in $\Lambda_0^{\frac{\infty}{2}}V$ can be identified with a finite partition $\lambda$ via

$$
v_\lambda = \underline{x^{\lambda_1-1}}\wedge\underline{x^{\lambda_2-2}}\wedge\underline{x^{\lambda_3-3}}\wedge\cdots,
$$
whose eigenvalue under $\hat{r}(H)$ is
$$
\hat{r}(H) v_\lambda = |\lambda| v_\lambda.
$$

We will denote the vacuum, i.e. the 0 energy state in  $\Lambda_0^{\frac{\infty}{2}}V$ corresponding to the empty Young diagram as $v_{\emptyset}=| 0\rangle$.


\subsection{Representations of infinite dimensional Lie group and Lie algebra}
In this subsection, we will carefully investigate the infinite dimensional Lie group $GL(\infty)$, Lie algebra $\mathfrak{gl}(\infty)$, their central extensions and their (projective) representations, one need to be careful about the distinguish between linear and projective representations. we refer to \cite{kac2013bombay} for more details.\par

\begin{remark}
Same as those in \cite{okounkov2006equivariant,johnson2009equivariant}, in this paper we will write the matrix with integer indices in descending order from the upper left to the lower right. If one want to use the ascending order, then the action on $\Lambda_0^{\frac{\infty}{2}}V$ will become right multiplication instead of left. 
\end{remark}

The fundamental (linear) representation for Lie group $GL(\infty)$ and Lie algebra $\mathfrak{gl}(\infty)$ are naturally defined by matrix multiplication on $V$, which in turn induced the following linear representations on $\Lambda_{0}^{\frac{\infty}{2}}V$. For $a\in\mathfrak{gl}(\infty),\,A\in GL(\infty)$

\begin{equation*}
\begin{split}
&r(a)\cdot (\underline{x^{i_1}}\wedge \underline{x^{i_2}}\wedge \underline{x^{i_3}}\wedge\cdots) = (a\cdot \underline{x^{i_1}})\wedge \underline{x^{i_2}}\wedge \underline{x^{i_3}}\wedge\cdots+\underline{x^{i_1}}\wedge (a\cdot \underline{x^{i_2}})\wedge \underline{x^{i_3}}\wedge\cdots+\cdots\\
&R(A)\cdot (\underline{x^{i_1}}\wedge \underline{x^{i_2}}\wedge \underline{x^{i_3}}\wedge\cdots) = A\cdot \underline{x^{i_1}}\wedge A\cdot \underline{x^{i_2}}\wedge A\cdot \underline{x^{i_3}}\wedge\cdots,
\end{split}
\end{equation*}
where the dot in rhs are matrix multiplications, and $R$ is unitary. As we mentioned in Section \ref{s1.1}, $\hat{r}$ defines a projective representation of $\mathfrak{gl}(\infty)$ via
$$
\hat{r}(E_{i,j}) = :\psi_i\psi^{*}_j:,
$$
The projective representation $\hat{r}$ differs from the linear one $r$ by an a priori chosen 2-cocyle $c(\cdot,\cdot)$ in the Lie algebra cohomology of $\mathfrak{gl}(\infty)$, i.e.

\begin{equation}\label{2co}
\hat{r}([a,b]) = [\hat{r}(a),\hat{r}(b)]+c(a,b), \quad\forall a,b\in\mathfrak{gl}(\infty).
\end{equation}

The 2-cocyle is bilinear and different choice of which gives different \textit{central charge}. In our case, we choose

$$
c(E_{i,j},E_{j,i}) = -c(E_{j,i},E_{i,j}) = 1,\quad i<0,j>0
$$
and vanishes elsewhere. Denote $\Lambda_i$ the matrix with all entries in i-th diagonal all equals to 1. Then 

$$
c(\Lambda_i,\Lambda_j) = i\cdot\delta_{i,-j}.
$$

Recall, any projective representation can be regard as a linear representation of a central extension of the original Lie algebra. In such a view of point, we will have $\hat{r}(\lambda_i)=\alpha_i$, yet we will not take this view of point.\par

More precisely, denote the diagonal part of $a\in\mathfrak{gl}(\infty)$ as $\diag(a)$ and $\ndiag(a) = a-\diag(a)$,  then the projective representation on $\Lambda_{0}^{\frac{\infty}{2}}V$ is defined via \cite{kac2013bombay}

$$
\hat{r}(a)\cdot  (\underline{x^{i_1}}\wedge \underline{x^{i_2}}\wedge \underline{x^{i_3}}\wedge\cdots) = r(\ndiag(a)) \cdot (\underline{x^{i_1}}\wedge \underline{x^{i_2}}\wedge \underline{x^{i_3}}\wedge\cdots) + \sum_{i=1}^{\infty}(\lambda_{i_1+\half}-\lambda_{-i+\half})(\underline{x^{i_1}}\wedge \underline{x^{i_2}}\wedge \underline{x^{i_3}}\wedge\cdots) ,
$$
where $\diag(a) =\diag(\cdots,\lambda_\half,\lambda_{-\half},\cdots)$. Especially when acting on $v_{\emptyset}$, we have

$$
\hat{r}(a) = \hat{r}(\ndiag(a)).
$$

Now we can define the projective representation $\hat{R}$ by exponential $\hat{r}$, i.e. $\hat{R}(e^{a}) = e^{\hat{r}(a)}$, and the 2-cocyle $c$ in Lie algebra cohomology determines a 2-cocyle $C$ in the Lie group cohomology, i.e.

$$
\hat{R}(AB) = C(A,B) \hat{R}(A)\cdot\hat{R}(B).
$$

The explicit formula of $C(A,B)$ can be calculated via Baker-Campbell-Hausdorff formula, for $A = e^a,\,B=e^b$. For generic case, this is rather complicated, yet we have the following lemma

\begin{lemma}\label{puu}
For a pair of upper, or lower, triangular matrix $A,B\in GL(\infty)$, if one of them is unitriangular, we have $C(A,B)=1$. 
\end{lemma}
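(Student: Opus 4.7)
The plan is to reduce the group-cocycle identity $C(A,B)=1$ to the Lie-algebra statement that the 2-cocycle $c$ vanishes on the subalgebra of upper (or lower) triangular matrices, and then to lift back to the group by exponentiation. The unitriangular hypothesis on one factor will be used only to ensure convergence of the exponential and Baker--Campbell--Hausdorff series in infinite dimensions.

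First I would check, directly from the explicit form of $c$, that $c(a,b)=0$ whenever $a,b\in\mathfrak{gl}(\infty)$ are both upper triangular. By bilinearity this reduces to $c(E_{i,j},E_{k,l})$ with $i\geq j$ and $k\geq l$ (the upper triangular condition in the descending-index convention). By definition $c$ is nonzero only if $(k,l)=(j,i)$ with one of $i,j$ negative and the other positive, which forces $i<j$ and contradicts $i\geq j$. The lower triangular case is entirely symmetric. Since the upper triangular matrices form a Lie subalgebra of $\mathfrak{gl}(\infty)$, the Lie subalgebra generated by any such pair $a,b$ again consists of upper triangular matrices. Combined with the vanishing of $c$ and the projective-representation identity \eqref{2co}, this shows that $\hat r$ restricts to a genuine (non-projective) Lie algebra homomorphism on the upper triangular subalgebra.

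To pass to the group, I would write $B=\exp b$ with $b$ strictly upper triangular and formally expand $\log(AB)$ via Baker--Campbell--Hausdorff into iterated brackets that all lie in the upper triangular subalgebra. Since $\hat r$ is a true Lie homomorphism on this subalgebra, the BCH series commutes with $\hat r$ term by term, so that
\[
\hat R(AB)=e^{\hat r(\log(AB))}=e^{\hat r(a)}\,e^{\hat r(b)}=\hat R(A)\,\hat R(B),
\]
which is exactly the claim $C(A,B)=1$.

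The hard part will be the convergence of these exponentials and BCH series in infinite dimensions, since a general upper triangular element of $GL(\infty)$ need not have a logarithm as an operator. The unitriangular hypothesis is precisely what saves the argument: if $B=I+N$ with $N$ strictly upper triangular, then $\log B$ acts nilpotently on each bounded-energy subspace of $\Lambda_0^{\frac{\infty}{2}}V$, so every formal operator identity above can be verified termwise on each such finite-dimensional subspace and then assembled, bypassing the analytic subtleties.
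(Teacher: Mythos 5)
Your proposal is correct and follows essentially the same route as the paper's proof: the cocycle $c$ pairs $E_{i,j}$ only with $E_{j,i}$ for $i,j$ of opposite sign, hence vanishes on any pair of upper (or lower) triangular matrices, and since iterated commutators of $\ln A,\ln B$ stay in the triangular subalgebra, every central contribution to the Baker--Campbell--Hausdorff comparison of $\hat R(A)\hat R(B)$ with $\hat R(AB)$ is a cocycle value there and so vanishes, the unitriangular hypothesis serving only to make the logarithm/BCH manipulations well defined---exactly the paper's argument, with the cocycle computation spelled out. Two harmless inaccuracies worth noting: in your block computation the case $i>0>j$ is ruled out by the \emph{second} factor's triangularity $j\ge i$ rather than by $i\ge j$ (equivalently, $i\ge j$ and $j\ge i$ force $i=j$, contradicting opposite signs), and if the unitriangular factor sits in the energy-raising triangle then $\log B$ is not locally nilpotent on bounded-energy subspaces, the formal identities instead holding degreewise in the energy grading where each graded component of the series is a finite sum---neither point affects the proof.
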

\begin{proof}
Without losing generality we assume $A$ is upper unitriangular, then $\ln A$ is upper triangular with $\diag(\ln A) = 0$. Then by Eq.~(\ref{2co}), we have

$$
c(\ln A,\ln B) = c(m(\ln A,\ln B),n(\ln A,\ln B)) = 0,
$$
where $m,n$ is any commutators involve $\ln A,\ln B$. Since all the contribution of $\ln C$ will come into this form, we have

$$
C(A,B) = 1.
$$
\end{proof}

Any infinite dimensional matrix can be divided into four parts (four quadrants), we define the four parts for arbitrary matrix $A$ as

$$
A^{++} = A_{i,j<0},\quad A^{-+} = A_{i>0,j<0}, \quad A^{+-} = A_{i<0,j>0},\quad A^{--} = A_{i,j>0}.
$$

\begin{remark}
In what follows we will sometimes omit the representation symbol $\hat{r}$, whenr it is clear whether  we are dealing with the algebra or its projective representation. 
\end{remark}

\subsection{Heisenberg, Virasoro, and $W_{1+\infty}$ algebra}

The Heisenberg algebra  $\mathcal{H}$ and Virasoro algebra $\mathcal{V}$ are subalgebras of $W_{1+\infty}$ algebra, which in turn a subalgebra of $\hat{\mathfrak{gl}}(\infty)$. Generators of  $\mathcal{H}$ are $\alpha_{k},\,k\in\Z$, which satisfy the following commutation relation
\begin{equation}\label{commualpha}
[\alpha_i,\alpha_j] = i\delta_{i,-j}
\end{equation}
in the linear representation $\hat{r}$, we have
$$
\alpha_{k} = \sum_{i\in\Z+\half}:\psi_{i-k}\psi^*_i: = \sum_{i\in\Z+\half}i^0:\psi_{i-k}\psi^*_i:.
$$

	Via  the coherent state 
$$
\langle \mathbf p | := \langle 0 | e^{ \sum_{k>0} \frac{1}{k} p_k  \alpha_k},
$$   we  define 
the map $\iota$  as follows.
For  $|V\rangle \in  \Lambda_0^{\frac{\infty}{2}}V$, let
$$
\iota ( |V\rangle):= \langle \mathbf p |V\rangle \    \in   \   \mathbb C[\mathbf p] := \mathbb C[p_1,\cdots,p_k,\cdots]
$$
The map $\iota$ gives a  linear isomorphism
\begin{equation}  \label{bfcorrespondence}
\iota: \ \Lambda_0^{\frac{\infty}{2}}V \longrightarrow  \mathbb C[\mathbf p].
\end{equation}
Hence one can identify the space $\Lambda_0^{\frac{\infty}{2}}V$ with $\mathbb C[\mathbf p]$, which can be considered as the ring of symmetric polynomials, with $p_k$ being the Newton polynomials.

On the other hand the generator $L_{k},\,k\in\Z$ of  Virasoro algebra $\mathcal{V}$ satisfy the commutation relation

\begin{equation}\label{commul}
[L_k,L_m] = (k-m)L_{k+m}+\frac{1}{12}\delta_{k,-m}(k^3-k),
\end{equation}
which represented in $\hat{r}$ as
$$
L_{k} = \sum_{i\in\Z+\half}i:\psi_{i-k}\psi^*_i:.
$$

One can easily notice that $H=L_0$ and $C = \alpha_0$. More generally we have the following operators

\begin{equation*}
W_{r}^{s} = \sum_{i\in\mathbb{Z}+\frac{1}{2}}i^s:\psi_{i-r}\psi_{i}^*:.
\end{equation*}
These operators form a basis of the $W^{1+\infty}$ algebra, and their general commutation relations will be calculated in Appendix II. We will call the $W^r_s$ is in level r, and  $W^{1+\infty}$ algebra is graded under the level, where $\mathcal{H}$ and  $\mathcal{V}$ are the level 0 and level 1 component, resp. Moreover, the level 2 operators are the  cut-and join type of operators used in the Hurwitz theory\cite{goulden1997transitive}. More details about this subsection can be found in \cite{kac1994infinite,alexandrov2015enumerative}, one need to be careful that our convention is slightly different from those in \cite{alexandrov2015enumerative}, our $W_s^r$ corresponding to the $\widetilde{W}^{(r+1)}_s$ in \cite{alexandrov2015enumerative} and $W^{(r+1)}_s$ in\cite{ADKMV}, such a change of convention will appear to be more handy.\par

\begin{remark}
Normally, the commutation relation depends on the choice of $\hat{r}(c)$ used in the linear representation of $\widehat{\mathfrak{gl}}(\infty)$,  which is called the central charge. However, in this work we will always choose $\hat{r}(c)=1$, which result in the commutation relation Eq.~(\ref{commualpha},\ref{commul}).
\end{remark}
\subsection{Baker-Akhiezer function} \label{bafunction}

We recall the definition of the Fermionic fields  
\begin{equation*}
\psi(z) := \sum_{k\in\mathbb{Z}+\frac{1}{2}}z^{k-\frac{1}{2}}\psi_k,\quad \psi^*(z):=\sum_{k\in\mathbb{Z}+\frac{1}{2}}z^{-k-\frac{1}{2}}\psi^*_j,
\end{equation*}

The standard Baker-Akhiezer  function is defined by \cite{babelon2003introduction}:
\begin{equation}\label{BAfunc}
\Psi(t,x):= \frac{\left<1|e^{t\alpha_1} \psi(x)   |V\right>}{\left<0| e^{t\alpha_1} |V\right>}.
\end{equation}

Recall the definition of Vertex operator as
\begin{equation*}
\Gamma_{\pm}(\mathbf t):=\exp\left(\sum_{n=1}^{\infty}t_k\alpha_{\pm k}\right),
\end{equation*}
From the definition, one has immediately, the $\Gamma_{\pm}(\mathbf t)$ operator fix the vacuum and covacuum respectively.\par
From Kac\cite{kac1994infinite} (or see also Okounkov 2001\cite{okounkov2001infinite}), we can change the Fermionic field in the definition of Baker-Akhiezer function to vertex operators
\begin{equation}\label{badefh}
\Psi(t,x) = e^{tx}\frac{\left<1|\Gamma(t-\{x^{-1}\})  |V\right>}{\left<0| e^{t\alpha_1} |V\right>}
\end{equation}
where $t-\{x^{-1}\}:=(t-x^{-1},-\frac{x^{-2}}{2},-\frac{x^{-3}}{3},\cdots)$. 
Hence the above definition of the Baker-Akhiezer function \eqref{BAfunc} matches with the definition \eqref{kpprin} in the introduction.



\subsection{$\E_k$ and $\mathbf{A}[k]$ operator}
Two kinds of operators $\E_k$ and $\mathbf{A}[k]$, first defined in \cite{okounkov2006gromov}, serve a great importance in the operator formalism of Hurwitz theory and Gromov-Witten theory.\par
The $\E_k(z)$ operator is defined as

\begin{equation}\label{defE}
\mathcal{E}_r( z) = \sum_{k \in\mathbb{Z}+\frac{1}{2}}e^{ z(k-\frac{r}{2})}E_{k-r,r}+\frac{c\delta_{0,r}}{\zeta(z)},
\end{equation}

where $\zeta(z) = e^{z/2}-e^{-z/2}$.\par

The definition of the $\mathbf{A}$ operator is more involved, in this paper we will follow those defined in \cite{johnson2009equivariant}, which is defined for equivariant theory with equivariant parameter $\lambda$ and general target $W\PP^{1}(r,s)$, in the case $\gcd(r,s)=1$, which corresponding to the case with trivial gerbe structure. We have

\begin{equation}\label{defAnt}
\mathbf{A}_{a/r}(z,\hbar,\lambda) =
\left\{\begin{aligned}
&\frac{1}{\hbar}\sum_{k=1}^{\infty}\Sc(r\hbar)^{\frac{\lambda z}{r}}\frac{[\lambda z\Sc(r\hbar z)]^k}{(1+\frac{\lambda z}{r})_k}\E_{rk}(\hbar z),&\quad a=0\\
&\frac{1}{\hbar}\frac{\lambda rz}{a}\Sc(r\hbar z)^{\frac{tz+a}{r}}\sum_{k\in\mathbb{Z}}\frac{[\lambda z\Sc(r\hbar z)]^k}{(1+\frac{\lambda z+a}{r})_k}\E_{rk+a}(\hbar z),&\quad a \neq 0.
\end{aligned}\right.
\end{equation}

which have the following non-equivariant limit for $a/r=0/1$,

\begin{equation*}
\mathbf{A}_{0/1}(z,\hbar) =
\frac{1}{\hbar}\sum_{k=1}^{\infty}\frac{[\zeta(\hbar z)]^k}{k!}\E_{k}(\hbar z).
\end{equation*}
We will prove it in the next subsection from Eq.~(\ref{defAnt}), with a slight redefinition. We will define $\mathbf{A}_{a/r}[i] = [z^{i+1}]\mathbf{A}_{a/r}(z,\hbar)$, and simply$\mathbf{A}[i]$ for the case $a=0$.\par

All of the operators should be viewed as elements in $\widehat{\mathfrak{gl}}(\infty)$. One might wonders about the appearance of the  term $\frac{c}{\zeta(\hbar z)}$ in the definition of $\E_0(z)$, which serves as a regularization as explained in \cite{okounkov2006gromov}. \par

To understand this regularization term better, we recall a well-known result\cite{kac1994infinite,okounkov2006equivariant,dunin2017quantum}
\begin{lemma}\label{proeigen}
For any operator $\sum_{n\in\mathbb{Z}+\frac{1}{2}}a_n:\psi_n\psi^*_n:$ it has $v_{\lambda}$ as its eigenvector, with eigenvalue 
\begin{equation*}
\sum_{i=1}^{\infty}(a_{\lambda_i-i+1/2}-a_{-i+1/2}),
\end{equation*}
where $\lambda_i$'s with $i>l(\lambda)$ are understood as 0.
\end{lemma}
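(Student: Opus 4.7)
The plan is to diagonalize each summand $:\psi_n\psi_n^*:$ on the basis vector $v_\lambda$ and then reorganize the resulting eigenvalue. Using the standard parametrization, $v_\lambda$ corresponds to the subset $S_\lambda = \{\lambda_i - i + \half : i \geq 1\} \subset \Z + \half$, with vacuum support $S_\emptyset = \{-i + \half : i \geq 1\}$; these two subsets coincide for all $i > l(\lambda)$ and hence differ in only finitely many places, which is what guarantees a finite eigenvalue despite the infinite sum.

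First I would unpack the normal ordering case by case. For $n > 0$ one has $:\psi_n\psi_n^*: = \psi_n\psi_n^*$; a direct computation with the anti-commutation relations shows this acts on $v_\lambda$ as multiplication by the indicator $[n \in S_\lambda]$. For $n < 0$ one instead has $:\psi_n\psi_n^*: = -\psi_n^*\psi_n$, which acts as minus the indicator $[n \notin S_\lambda]$. Consequently $v_\lambda$ is a simultaneous eigenvector of every $:\psi_n\psi_n^*:$, and summing against the coefficients $a_n$ gives
$$
\sum_{n > 0,\, n \in S_\lambda} a_n \; - \sum_{n < 0,\, n \notin S_\lambda} a_n
\;=\; \sum_{n \in S_\lambda \setminus S_\emptyset} a_n \;-\; \sum_{n \in S_\emptyset \setminus S_\lambda} a_n,
$$
a finite sum by the observation above.

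The remaining step is to identify this symmetric-difference expression with the claimed series $\sum_{i \geq 1}(a_{\lambda_i - i + \half} - a_{-i + \half})$. Writing the indicator $\chi_\lambda(n) := [n \in S_\lambda]$ as $\sum_{i \geq 1}[n = \lambda_i - i + \half]$ and similarly for $\chi_\emptyset$, one formally reindexes
$$
\sum_n a_n \bigl(\chi_\lambda(n) - \chi_\emptyset(n)\bigr) \;=\; \sum_{i=1}^{\infty}\bigl(a_{\lambda_i - i + \half} - a_{-i + \half}\bigr),
$$
and this reindexed series is well-defined because every term with $i > l(\lambda)$ vanishes (there $\lambda_i = 0$, so the two entries agree). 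There is no substantial obstacle; the content is purely combinatorial once the diagonal action of $:\psi_n\psi_n^*:$ has been computed, and the only mild subtlety is justifying the rearrangement of the formally divergent pair of sums, which is done via the finite symmetric difference.
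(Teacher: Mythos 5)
Your proof is correct and follows the same route the paper intends: the paper's own ``proof'' is the one-line assertion that the lemma is a direct consequence of the definition of the Fermionic bilinear, and your argument is exactly that direct verification, carried out in full (diagonal action of $:\psi_n\psi_n^*:$ on $v_\lambda$ via the normal-ordering cases, then the finite symmetric-difference reindexing against $S_\emptyset$). No gaps; you have simply supplied the standard details the paper omits.
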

\begin{proof}
A direct consequence from the definition of Fermionic bilinear.
\end{proof}\par
From above lemma, one can see that $\E_0$ operator can be regarded as the \textit{generating operator} of the $W_r^s$ operators, or when acting on a state serves as generating function of the specific Newton polynomials corresponding to vector in $\Lambda_0^{\frac{\infty}{2}}V$. i.e.

$$
\E_0(z)\left|\lambda\right> = \sum_{k=0}^{\infty}\frac{1}{k!}W_0^k\left|\lambda\right> + \frac{1}{\zeta(z)}\left|\lambda\right> = \sum_{k=0}^{\infty}\sum_{i=1}^{\infty}\frac{(\lambda_i-i+\half)^k}{k!}.
$$ 

where we have used the identity
$$
\sum_{k=0}^{\infty}\sum_{i=1}^{\infty}\frac{(-i+\half)^k}{k!} = \frac{1}{\zeta(z)}.
$$

The $\E_k(z)$ operators satisfy the following commutation relation

\begin{equation}\label{sommu}
[\E_a(z),\E_b(w)] = \zeta(aw-bz)E_{a+b}(z+w).
\end{equation}
\subsection{Orbifold Gromov Witten invariants}
The definition of the $\mathbf{A}[k]$ operator enforces us  to consider Gromov-Witten invariants with the position of marked points being remembered. Meanwhile, one also need to consider the connected Gromov-Witten invariants, by connected, we means the domain curves are restricted to be connected.\par

The connected n-point orbifold Gromov-Witten correlator of $\PP[r]$ is defined as

\begin{equation}\label{gwgener}
G_{g,d}^{\circ}(z_1,\cdots,z_n):=\prod_{i=1}^{n}z_i^{-1}\int_{[\overline{\mathcal{M}}_{g,n,\emptyset}(\PP[r],d)]^{\mathrm{Vir}}}\prod_{i=1}^{n}\frac{\mathrm{ev}^{*}(\omega)}{1-z_i\bar{\psi}_i}.
\end{equation}

Where we have already took the specialization to the case we are interested in, i.e. the target space is $\PP[r]$, and all other orbifold insertions will not appears, except the encounter of hyperplane class $h$.
Since in the case of $\mathbb{P}[r]$, there is only one point on the target space admits an orbifold structure, the $h$ is the hyperplane class of the $H^*(\mathbb{P}^1)$, or more precisely the pushforward $i_*(\omega)$ via the inclusion
\begin{equation*}
\begin{aligned}
i:\mathbb{P}^1&\rightarrow\mathcal{IP}_{r}\\
p&\rightarrow(p,e),
\end{aligned}
\end{equation*}
i.e. the generator for the component isomorphic to $H^*(\mathbb{P}^1)$. Meanwhile for the orbifold insertions (though not occurs in our case), we define $\phi_{k\neq0}$ as the generators for the r-1 components of $H^*(\mathcal{B}\mathbb{Z}_r)$, where the $\mathcal{B}\mathbb{Z}_r$ is the classifying stack $\mathcal{B}\mathbb{Z}_r:=[\mathrm{pt}/\mathbb{Z}_r]$ for $\mathbb{Z}_r$. Further we have $\tau_i(\phi_j) = \mathrm{ev}_m^*(\phi_j)\bar{\psi}_m^i$.


We will not deep into the geometric aspect of orbifold Gromov-Witten theory of weighted projective line, more details about Chen-Ruan class and orbifold Gromov-Witten theory can be found in \cite{chen2004new,chen2001orbifold,abramovich2008gromov,johnson2009equivariant}
\begin{remark}\label{r1.4}
Although, we are dealing with an orbifold Gromov-Witten theory, In this paper we will only consider the target as the main component, i.e. the one isomorphic to the coarse moduli $\mathcal{P}_{r}$, and reduce to $\PP^1$ for the case $r=1$.\par
\end{remark}

\section{Wave Function and Quantum Curve}\label{s2}

From A-model side of view, the wave function for the orbifold Gromov-Witten (Gromov-Witten) theory of $\mathbb{P}[r]$, or more generally for any enumerative problem of KP/KdV type, is defined by the specialization of the generating function for the Gromov-Witten invariants
\begin{equation*}
\Psi(x) := \hat{Z}(\{z(x)\}) := e^{\hat{F}(\{z(x)\})} :=e^{\sum_{k=0}^{\infty}\hat{S}_k(z(x))\hbar^{k-1}},
\end{equation*}
where $\{z(x)\} = \{z(x),z(x),\cdots\}$, and 
\begin{equation*}
S_k(z(x)) = \sum_{2g-2+n=k-1}\frac{1}{n!}G^\circ_{g,n}(z(x),z(x),\cdots,z(x)).
\end{equation*}
The disconnected 1-point function is defined in Eq.~(\ref{gwgener}), and $z(x)$ is a choice of local coordinates, transform the Gromov-Witten partition function to the Baker-Akhiezer function of KP hierarchy. \par
We note in the definition of $S_0$ and $S_1$, it allows one to include the unstable contributions, that is why we used $\hat{Z}$ instead of $Z$. The choice of which is quite crucial, since as conjectured in \cite{dumitrescu2013spectral}, a proper choice of the unstable terms in $S_0$ and $S_1$ will fully determine the classical spectral curve of the underlying enumerative geometric problem, i.e. its mirror LG model.
\par
On the other hand, from the B-model side of view, one can define the generating function $Z$ solely from the LG potential, which is the definition equation of the so-called spectral curve, or mirror curve. Such procedure has been known as the Eynard-Orantin topological recursion \cite{eynard2007invariants}, moreover it is conjectured\cite{gukov2012polynomial} the wave function $\Psi(x)$, specializing in a proper way, contains in the kernel of some differential (or, difference) operator
\begin{equation*}
\hat{H}(x,-\hbar\partial_x)\Psi(x)=0,
\end{equation*}
where $\hat{H}(x,-\hbar\partial_x)$ is the so-called quantum curve, which generates a holonomic system determines the B-model partition function\cite{ADKMV,dijkgraaf2008supersymmetric,dijkgraaf2009quantum,mulase2012spectral} and can be viewed as a quantization of the classical spectral curve, and will return to its classical form when taking the classical limit, i.e.
\begin{equation*}
\hat{H}(x,-\hbar\partial_x) = H(x,y(x)), \quad \hbar\rightarrow 0.
\end{equation*}
However, it is not always the case that $\hat{H}_t(x,-\hbar\partial_x) = H(x,-\hbar\partial_x)$, since there may exist higher order corrections of $\hbar$.
When the underlining Gromov-Witten theory (A-model) has a mirror LG theory (B-model), the generating functions we talked above, coincide with each other, as in the case of $W\mathbb{P}^1(s,r)$\cite{tang2017equivariant,fang2017eynard}. Then the wave functions also match, which suggests that one can consider the quantum curve directly from A-model.\par
Another natural definition of A-model wave function arise from the Baker-Akhiezer function. It is known, some specific Gromov-Witten theory (or other enumerative geometric problems), are naturally linked to the theory of integrable system. More precisely, the generating function of the geometric theories is a tau-function of some integrable hierarchy. Which suggests a natural link between the wave function and the Baker-Akhiezer function of the integrable hierarchy, since both of them corresponding to the asymptotic expansion around a boundary of Riemann surface. This definition of wave function makes the calculation much simpler.\par

Therefore, our approach in this section will be, calculate the one-point function of A-model by  Baker-Akhiezer function of the corresponding KP hierarchy, and check whether it is located inside the kernel of the quantum curve obtained from B-model.\par

Now we will first derive the explicit expression for the wave function of the orbifold Gromov-Witten theory of $\mathbb{P}[r]$, more precisely, we will prove the following theorem
\begin{theorem}\label{onept}
The wave function for the orbifold Gromov-Witten theory of $\mathbb{P}[r]$ has the following closed form
\begin{equation*}
\Phi(t,x) = e^{S_\un(t,x)/\hbar}\Psi(t,x) \sim\left(2\pi\right)^{\half} \sum_{d=0}^{\infty}\frac{(-1)^dq^{rd}e^{t\hbar rd}}{r^dd!}\frac{\hbar^{-\frac{x}{\hbar}-(r+1)d}}{\Gamma(\frac{x}{\hbar}+rd+\frac{1}{2})},
\end{equation*}
where the unstable contribution is  defined as
$$
S_\un(t,x) = -x\ln x+x-t\hbar x.
$$
where $\sim$ refers to the asymptotic expansion at $x\rightarrow\infty,|\arg(x)|<\pi$. For the case $r=1$ there will be an additional normalization factor $e^{-\frac{qe^{t\hbar}}{\hbar^2}}$.
\end{theorem}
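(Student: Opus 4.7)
The plan is to reduce the wave function to an explicit Fermionic matrix element using the operator formalism for stationary orbifold Gromov-Witten theory of $\PP[r]$, and to evaluate this matrix element asymptotically as $x\to\infty$. The strategy is a direct generalization of the approach used for $\PP^1$ in~\cite{dunin2017quantum}, with the new ingredient being the use of the $\mathbf{A}_{0/r}$ operator, which only involves the level-$(rk)$ combinations $\E_{rk}$.

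First, combining the specialization~\eqref{kpprin} with the vertex operator form of the Baker-Akhiezer function~\eqref{badefh} gives
$$
\Psi(t,x) = e^{tx}\,\frac{\langle 1 | e^{t\alpha_1} \psi(x) | V\rangle}{\langle 0 | e^{t\alpha_1} | V\rangle},
$$
where $|V\rangle\in\Lambda_0^{\frac{\infty}{2}} V$ is the tau-vector for $\PP[r]$. By the operator formalism of Okounkov-Pandharipande~\cite{okounkov2006gromov} and its orbifold extension~\cite{johnson2009equivariant}, $|V\rangle$ is produced by acting on the vacuum with an exponential built from $\mathbf{A}_{0/r}(z,\hbar)$ and the degree weight $q^{rH}$. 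Since only the hyperplane insertions $h$ enter the correlators~\eqref{gwinvariant} defining $F$, only the level-$(rk)$ operators $\E_{rk}(\hbar z)$ from~\eqref{defAnt} contribute. Expanding in powers of $q^r$ organizes the computation into the sum over degrees $d\ge 0$ that appears in the theorem; the factor $\tfrac{(-1)^d}{r^d d!}$ should come from the normalization of $\mathbf{A}_{0/r}$ in the non-equivariant limit, while $e^{t\hbar rd}$ should come from translating $e^{t\alpha_1}$ past the $\E_{rk}$'s using~\eqref{sommu}.

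The final step is to evaluate $\langle 1 | \psi(x) \E_{rk_1}(\hbar z)\cdots \E_{rk_d}(\hbar z)|0\rangle$ asymptotically as $x\to\infty$, which by Lemma~\ref{proeigen} and the eigenvalue structure of the $\E$-operators produces a principal specialization expressible as a ratio of Gamma values. Combined with the unstable contribution $S_\un(t,x) = -x\ln x + x - t\hbar x$ and Stirling's asymptotic
$$
\Gamma\!\left(\tfrac{x}{\hbar}+\tfrac{1}{2}\right) \sim (2\pi)^{\half}\,\left(\tfrac{x}{\hbar}\right)^{x/\hbar}\,e^{-x/\hbar},
$$
this reproduces the stated closed form with prefactor $(2\pi)^{\half}$, and at $d=0$ cancels $e^{S_\un/\hbar}\cdot e^{tx}$ up to leading order as required.

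The main obstacle will be the combinatorial collapse of the nested infinite sums inside $\mathbf{A}_{0/r}$ to the single Gamma ratio $1/\Gamma(x/\hbar + rd + \tfrac{1}{2})$ for each $d$: a clean route is to verify that the ansatz on the right-hand side of the claim satisfies the first-order difference equation~\eqref{QC} and then match leading asymptotics to pin down the solution uniquely, thereby bootstrapping Theorem~2.1 and Theorem~2.2 together. The regularization term $c/\zeta(z)$ inside $\E_0$ needs separate care; in the $r=1$ case it is the source of the extra normalization $e^{-qe^{t\hbar}/\hbar^2}$ recorded in the theorem, which can be cross-checked against the $\PP^1$ result of~\cite{dunin2017quantum}.
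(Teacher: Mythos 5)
Your setup (Johnson's operator formalism, principal specialization, Stirling asymptotics) follows the same skeleton as the paper, but the step you defer as ``the main obstacle'' is the entire mathematical content of the theorem, and your proposed workaround is circular. The paper's proof reduces the specialized vev to the character sum $X_d=\frac{1}{d!(rd)!r^{d}}\sum_{|\lambda|=rd}\chi^{\lambda}_{(1)^{rd}}\chi^{\lambda}_{(r)^d}\prod_{i}\frac{x+(i-\lambda_i)\hbar}{x+i\hbar}$ via the Murnaghan--Nakayama rule and the eigenvalue Lemma \ref{proeigen}, and then establishes the closed form $X_d=\frac{(-1)^d}{r^dd!}\prod_{i=1}^{rd}\frac{\hbar}{x+i\hbar}$ (Proposition \ref{p2.1}, Lemma \ref{lemA1}), which consumes all of Appendix II through a delicate $W_{1+\infty}$ commutation and counting argument. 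Your alternative --- verify that the right-hand side satisfies the difference equation \eqref{QC} and invoke uniqueness --- only shows that the \emph{ansatz} is a solution; it says nothing about $\Phi(t,x)$ unless you independently prove that the wave function lies in the kernel of $\hat H_t$. In the paper the logical order is the reverse: Theorem \ref{t2.2} is \emph{deduced from} the closed form of Theorem \ref{onept}, and no a priori quantum-curve equation for $\Phi$ is available, so the bootstrap cannot get started. Moreover, even granting such an equation, order by order in $q^{rd}$ the solution of the inhomogeneous equation $\left(e^{-\hbar\partial_x}-x+\frac{\hbar}{2}\right)\Phi^d=-e^{r\hbar\partial_x}\Phi^{d-1}$ is ambiguous up to multiples of the homogeneous solution $\Phi^0$, which dominates the particular solution by a factor of order $x^{rd}$; excluding it requires a priori decay estimates on the $q^{rd}$-coefficient of the actual wave function that you have not supplied.

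There are also two concrete misattributions. First, the stationary insertions are governed by $\mathbf A_{0/1}$, the non-equivariant limit of \eqref{defAnt} with $a/r=0/1$, which involves \emph{all} levels $\E_k(\hbar z)$, $k\geq 0$; your restriction to the level-$(rk)$ operators $\E_{rk}$ is wrong --- the $r$-dependence and the $q^{rd}$ grading enter through $q^He^{\alpha_{-r}/r\hbar}$, with $\alpha_1^{rd}$ pairing against $\alpha_{-r}^d$, and the prefactor $\frac{(-1)^d}{r^dd!}$ is not a normalization of the $\mathbf A$-operator: the $\frac{1}{r^dd!\hbar^d}$ comes from expanding $e^{\alpha_{-r}/r\hbar}$ and the sign $(-1)^d$ is precisely the output of Lemma \ref{lemA1}. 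Second, the $r=1$ normalization $e^{-qe^{t\hbar}/\hbar^2}$ does not originate in the $c/\zeta(z)$ regularization inside $\E_0$; it is the nonvanishing disconnected $0$-point contribution $\braket{e^{\alpha_1/\hbar}q^He^{\alpha_{-1}/\hbar}}=e^{q/\hbar^2}$ of Remark \ref{rr=1}, promoted to $e^{qe^{t\hbar}/\hbar^2}$ by the divisor equation. Relatedly, the factor $e^{t\hbar rd}$ is obtained in the paper from the divisor equation (Proposition \ref{pd1}) and the $t$-evolution Lemma \ref{psiphi}, not by commuting $e^{t\alpha_1}$ past the insertion operators via \eqref{sommu}; your mechanism is plausible in spirit but unproved, and would still require the dressing-operator bookkeeping of Section \ref{s3.1} to be carried out.
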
\par

The following definition will be used in the future
\begin{equation*}
\Phi^d(x) := \frac{\left(2\pi\right)^{\half}(-1)^d}{r^dd!}\frac{\hbar^{-\frac{x}{\hbar}-(r+1)d}}{\Gamma(\frac{x}{\hbar}+rd+\frac{1}{2})}.
\end{equation*}
which means Theorem 2.1 is equivalent to say
\begin{equation}
\Phi(t,x)\sim e^{S_\un(t,x/\hbar)}\sum_{d=0}^{\infty}q^{rd}e^{t\hbar rd}\Phi^d(x).
\end{equation}\par
There is also another important way to rewrite the wave functions that we need to introduce, since it is essential for the proof of Theorem 2.1. 
\begin{corollary}\label{rr}
The wave function can also be put into the following form
\begin{equation*}
\Phi(t,x)= e^{S_\un(t,x)/\hbar}\left(\left(2\pi\right)^{\half}e^{S_{\mathrm{ex}}(x)}\sum_{d=0}^{\infty}\frac{(-1)^dq^{rd}e^{t\hbar rd}}{r^dd!\hbar^{(r+1)d}}\prod_{i=1}^{rd}\frac{\hbar}{x+(i-\frac{1}{2})\hbar}\right):=e^{t\hbar x+\si(x)}\sum_{d=0}^{\infty}\tilde{\Phi}_d,
\end{equation*}
Where $\si(x)$ is a singular function, which is related to the asymptotic expansion of Gamma functions
\begin{equation*}
\si(x) \sim -\ln\Gamma(\frac{x}{\hbar}+\frac{1}{2})- \frac{x}{\hbar}\ln\hbar +\frac{1}{2}\ln(2\pi),
\end{equation*}
for $x\rightarrow\infty,|\arg(x)|<\pi$. Further we have defined
\begin{equation*}
\tilde{\Phi}_d := \frac{(-1)^dq^{rd}e^{t\hbar rd}}{r^dd!\hbar^{(r+1)d}}\prod_{i=1}^{rd}\frac{\hbar}{x+(i-\frac{1}{2})\hbar},\quad d\in\mathbb{N}.
\end{equation*}

\end{corollary}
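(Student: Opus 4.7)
The plan is to observe that this corollary is a purely algebraic rewriting of the closed form established in Theorem 2.1, obtained by factorizing the reciprocal Gamma factor using the functional equation $\Gamma(z+n)=\Gamma(z)\prod_{i=0}^{n-1}(z+i)$. No new analytic input is required: the wave function itself is unchanged, only the $d$-dependent summand is redistributed into a $d$-independent ``singular'' prefactor and a finite product over half-integer poles.

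Concretely, I would take the asymptotic expression from Theorem 2.1 and, for each $d\ge 0$, apply the identity
\[
\Gamma\!\left(\tfrac{x}{\hbar}+rd+\tfrac{1}{2}\right) = \Gamma\!\left(\tfrac{x}{\hbar}+\tfrac{1}{2}\right)\prod_{i=1}^{rd}\!\left(\tfrac{x}{\hbar}+i-\tfrac{1}{2}\right).
\]
Then I would rewrite each factor in the product as $(x/\hbar+i-1/2)^{-1}=\hbar/(x+(i-1/2)\hbar)$, which introduces $rd$ powers of $\hbar$ that, when combined with the explicit $\hbar^{-(r+1)d}$ in the summand, reproduce exactly the prescribed $1/\hbar^{(r+1)d}\prod_{i=1}^{rd}\hbar/(x+(i-1/2)\hbar)$ in $\tilde\Phi_d$. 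The remaining $d$-independent factor is $\hbar^{-x/\hbar}/\Gamma(x/\hbar+1/2)$, which I would identify with $e^{S_{\mathrm{ex}}(x)}$; together with the explicit $(2\pi)^{1/2}$ this yields the first equality in the corollary.

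For the second equality, one must check that $e^{S_{\un}(t,x)/\hbar}\cdot(2\pi)^{1/2}e^{S_{\mathrm{ex}}(x)}$ has the claimed asymptotic form $e^{t\hbar x+\si(x)}$ (up to the convention of sign on the $tx$ piece). This reduces to Stirling's expansion
\[
\ln\Gamma(z) \sim \bigl(z-\tfrac{1}{2}\bigr)\ln z - z + \tfrac{1}{2}\ln(2\pi), \quad z\to\infty,\ |\arg z|<\pi,
\]
applied at $z=x/\hbar+1/2$, which combined with the explicit unstable contribution $S_{\un}(t,x)=-x\ln x+x-t\hbar x$ produces the stated asymptotic $\si(x)\sim -\ln\Gamma(x/\hbar+1/2)-(x/\hbar)\ln\hbar+\tfrac{1}{2}\ln(2\pi)$. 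There is no substantive obstacle; the only care needed is in tracking the $\hbar$ exponents through the redistribution and in matching signs/conventions between $S_{\un}/\hbar$ and the $e^{t\hbar x}$ prefactor when identifying $\si(x)$.
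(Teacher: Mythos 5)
Your proposal is correct in substance, but it runs in the opposite logical direction from the paper. You treat the corollary as a pure consequence of Theorem~\ref{onept}: you factor $\Gamma(\frac{x}{\hbar}+rd+\frac12)=\Gamma(\frac{x}{\hbar}+\frac12)\prod_{i=1}^{rd}\bigl(\frac{x}{\hbar}+i-\frac12\bigr)$, note that each reciprocal factor is exactly $\hbar/(x+(i-\frac12)\hbar)$ so that the summand reorganizes precisely into $\tilde\Phi_d$, and then invoke Stirling at $z=\frac{x}{\hbar}+\frac12$ to identify the $d$-independent prefactor $(2\pi)^{\half}\hbar^{-x/\hbar}/\Gamma(\frac{x}{\hbar}+\frac12)$ with $e^{\si(x)}$. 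That bookkeeping checks out. The paper, by contrast, obtains the product form \emph{first}: it is an intermediate output of the operator formalism, in which $\mathcal{E}_0(-\hbar\partial_x)\ln x$ splits as $A+\si(x)$, the singular part $\si(x)=\frac{1}{\zeta(-\hbar\partial_x)}(\ln x)$ is expanded through Bernoulli numbers in Eq.~\eqref{defa2} and matched to the stated Gamma asymptotics in Proposition~\ref{psingx}, and the finite product $\prod_{i=1}^{rd}\frac{\hbar}{x+(i-\frac12)\hbar}$ arises from the character identity for $X_d$ (Proposition~\ref{p2.1}, proved as Lemma~\ref{lemA1} in the Appendix); the Gamma-function closed form of Theorem~\ref{onept} is then \emph{deduced from} this product form, which is why the text says the rewriting ``is essential for the proof of Theorem 2.1''. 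So within the paper's architecture your argument is, strictly speaking, circular; the circularity is harmless here only because the two forms are interchangeable via the elementary functional equation you use, so once either form is independently established the other follows, and your derivation is the cleaner verification of that equivalence. One further point in your favor: the tension you flag between $e^{S_\un(t,x)/\hbar}$ (which contributes $e^{-tx}$) and the $e^{t\hbar x}$ prefactor in the corollary's second displayed equality is an inconsistency in the paper's own statement (by Lemma~\ref{psiphi} the $t$-dependence enters only through $q^{rd}e^{t\hbar rd}$), not a defect of your argument.
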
\par

The explicit expression for $S_{\mathrm{ex}}(x)$  can be read off from the definition of $\si(x)$, yet we will not need them in the sequel.
\begin{remark}
Since the variable $x$ is purely formal, from now on, by a slight misuse of notation, we will sometimes write $=$, although we actually refer to the asymptotic expansion at $x\rightarrow\infty,|\arg(x)|<\pi$ .
\end{remark}
Our second aim will be proving 
\begin{theorem}\label{t2.2}
The wave function of the Gromov-Witten theory of $\mathbb{P}[r]$ satisfy the following quantum curve equation
$$
\hat{H}_t(x,-\hbar\partial_x)\cdot\Phi(t,x) = 0.
$$
where
\begin{equation*}
\hat{H}_t(x,-\hbar\partial_x) = e^{-\hbar\frac{\partial}{\partial x}}+q^re^{t\hbar rd}e^{r\hbar\frac{\partial}{\partial x}}-x+\frac{\hbar}{2}.
\end{equation*}
\end{theorem}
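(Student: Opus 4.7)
The plan is to derive the quantum curve equation as a direct consequence of the closed form of Theorem 2.1. Writing
\[
\Phi(t,x) \sim \sum_{d \geq 0} q^{rd} e^{t\hbar rd}\, \Phi^d(x), \qquad
\Phi^d(x) = \frac{(2\pi)^{\half}(-1)^d}{r^d d!}\,\frac{\hbar^{-x/\hbar - (r+1)d}}{\Gamma(x/\hbar + rd + \half)},
\]
the operators $e^{-\hbar \partial_x}$ and $e^{r\hbar \partial_x}$ act on each term as the rigid translations $\Phi^d(x-\hbar)$ and $\Phi^d(x+r\hbar)$ of an explicit meromorphic function. Since $\Phi^d$ involves only one reciprocal Gamma factor and a power of $\hbar$, these translates reduce to expressions in $\Phi^d$ and $\Phi^{d+1}$ via the functional equation $\Gamma(z+1)=z\Gamma(z)$.

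The core computation consists of two identities. First, shifting $x \mapsto x - \hbar$ decreases the Gamma argument by one and multiplies the $\hbar$-power by a factor $\hbar$; invoking the Gamma recursion at $z = x/\hbar + rd - \half$ yields
\[
\Phi^d(x-\hbar) = \left(x + rd\hbar - \frac{\hbar}{2}\right)\Phi^d(x),
\]
so that $(e^{-\hbar\partial_x} - x + \frac{\hbar}{2})\Phi^d(x) = rd\hbar\,\Phi^d(x)$. Second, shifting $x \mapsto x + r\hbar$ moves the Gamma argument by $r$ and matches directly against $\Phi^{d+1}(x)$, giving
\[
\Phi^d(x+r\hbar) = -r(d+1)\hbar\,\Phi^{d+1}(x).
\]
Multiplying by the prefactor $q^r e^{t\hbar r}$ and summing against $q^{rd} e^{t\hbar rd}$ reindexes the contribution of the second term so that, after combining with the first, one obtains
\[
\hat H_t(x, -\hbar \partial_x)\,\Phi(t,x) \sim r\hbar \sum_{d \geq 0} d\,q^{rd} e^{t\hbar rd}\,\Phi^d(x) - r\hbar \sum_{d \geq 1} d\,q^{rd} e^{t\hbar rd}\,\Phi^d(x),
\]
which is zero by telescoping.

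The one point requiring care is the legitimacy of applying the shift operators $e^{\pm c\hbar \partial_x}$ term-by-term to an asymptotic expansion at $x \to \infty$. This is not a serious obstacle here: at each fixed order in $q$ the coefficient is a single meromorphic function of $x$ on which translations act unambiguously, and the product $q^r e^{t\hbar r} \cdot q^{rd} e^{t\hbar rd}$ only increments the index $d$ without affecting the asymptotic regime. For the $r=1$ case the extra normalization factor $e^{-qe^{t\hbar}/\hbar^2}$ is independent of $x$ and therefore commutes with every constituent of $\hat H_t$, so it does not affect the kernel.
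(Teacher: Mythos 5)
Your proof is correct and takes essentially the same route as the paper: both apply the difference operators term-by-term to the closed form of Theorem 2.1 and cancel degree-by-degree using $\Gamma(z+1)=z\Gamma(z)$, and your two identities $\Phi^d(x-\hbar)=\left(x+rd\hbar-\frac{\hbar}{2}\right)\Phi^d(x)$ and $\Phi^d(x+r\hbar)=-r(d+1)\hbar\,\Phi^{d+1}(x)$ are exactly the content of the paper's per-degree relation $\left(e^{-\hbar\partial_x}-x+\frac{\hbar}{2}\right)\Psi^d(x)+e^{r\hbar\partial_x}\Psi^{d-1}(x)=0$, which the paper asserts without spelling out. The only cosmetic difference is that you carry general $t$ through the computation directly, whereas the paper proves the $t=0$ case and invokes the $t$-evolution lemma (Lemma \ref{psiphi}); your remarks on the $r=1$ normalization factor and on term-by-term application of the shift operators are sound but do not constitute a different method.
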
\par

From the definition of quantum curve  we can see it is 'quantized' from the classical spectral curve, i.e. $\hat{H}_t(x,-\hbar\partial_x) = H(x,-\hbar\partial_x)+O(\hbar)$, where $O(\hbar)$ is called \textit{quantum correction}.
\begin{remark}\label{rrr}
For the rest of the section we will work in the case $t=0$, in order to avoid unnecessary complication. In another word, we will calculate the $\Psi(0,x)$, which will be simply denoted by $\Psi(x)$. The prove will be complete after Section \ref{s3.1}, where we will see, the generalization to $t\neq0$ is trivial, by simply doing the rescaling $q\mapsto qe^t$, and a modify the unstable term by $t\hbar x$. We will use prime to 
\end{remark}

\subsection{Operator formalism for the Gromov-Witten theory of $\mathbb{P}[r]$}
Our starting point will be the partition function of the equivariant orbifold Gromov-Witten theory of $W\mathbb{P}^1(r,s)$, which similar to those of Okounkov and Pandharipande \cite{okounkov2006gromov,okounkov2006equivariant}, has an operator formalism derived by Johnson \cite{johnson2009equivariant}:
\begin{theorem}[Chap. VI in \cite{johnson2009equivariant}]
The generating function of the equivariant orbifold Gromov-Witten theory of $W\mathbb{P}^1(r,s)$, with equivariant parameter denote by $\lambda$, can be written in the following form
\begin{equation}\label{tautl}
Z(\mathbf x,\mathbf x^*,\hbar,\lambda) =\left<e^{\sum x_i\mathbf{A}[i]}e^{\frac{\lambda\alpha_1}{\hbar}}\left(\frac{q}{\lambda(-\lambda)^{1/r}}\right)^{H}e^{\frac{-\lambda\alpha_{-r}}{r\hbar}}e^{\sum x_j^*\mathbf{A}^*_{r_m}[j]}\right>.
\end{equation}
Furthermore, it is a tau-function of Toda-Lattice (TL) integrable hierarchy.
\end{theorem}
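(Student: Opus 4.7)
\emph{The plan is to} combine virtual $\mathbb{C}^*$-localization on $\M_{g,n,\gamma}(W\PP^1(r,s),d)$ with the Gromov--Witten/Hurwitz correspondence of Okounkov--Pandharipande, adapted to the orbifold setting. The torus acts on $W\PP^1(r,s)$ with two stacky fixed points $\mathcal{B}\mathbb{Z}_r$ and $\mathcal{B}\mathbb{Z}_s$, with tangent weights $\lambda/r$ and $-\lambda/s$. By Graber--Pandharipande virtual localization, the correlator reduces to a sum over decorated bipartite localization graphs: vertices represent contracted components mapping to a fixed point (carrying genus, monodromy, and marked-point data), and edges represent nontrivial rubber covers of $\PP^1$ connecting the two fixed points.

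First I would isolate the edge blocks. Each edge is a cyclic cover of $\PP^1$ with prescribed degree and monodromy profile, and its equivariant contribution is a rubber double-Hurwitz integral. Following Okounkov--Pandharipande, summing over all rubber configurations with their equivariant weights produces matrix elements of $q^H$ flanked by exponentials in the Heisenberg modes $\alpha_{\pm k}$. The specific prefactors $e^{\lambda\alpha_1/\hbar}$ and $e^{-\lambda\alpha_{-r}/(r\hbar)}$ encode the smoothing of nodes at the two fixed points together with the equivariant normal-bundle contributions; the index $-r$ on the right side reflects that an edge attaches to the order-$r$ orbifold point only with multiplicity in $r\mathbb{Z}$, while the rescaling $\lambda(-\lambda)^{1/r}$ absorbs the two tangent weights.

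Next I would identify the vertex blocks with $e^{\sum x_i \mathbf{A}[i]}$ and its adjoint. At each marked point on a vertex component one encounters a Laplace transform in $z$ of a descendent $\bar\psi$-integral against a Hodge class factor, multiplied by the equivariant edge weights incident to that vertex. A direct calculation matches this Laplace transform against the generating series in \eqref{defAnt}: the factor $\Sc(r\hbar z)^{(\lambda z+a)/r}$ comes from the Faber--Pandharipande Hodge-class expansion, the Pochhammer $(1+(\lambda z+a)/r)_k$ from the product of edge weights over the $k$ edges meeting the vertex, and $\E_{rk+a}(\hbar z)$ from the ``emission'' of $k$ edges with total degree $rk+a$. \emph{This Laplace identity is the main obstacle}: extending the Okounkov--Pandharipande operator identity to fractional monodromy $a/r$ and orbifold insertions requires a careful stacky version of the ELSV formula, which is the technical core of Johnson's thesis.

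Summing over all localization graphs and degrees $d$, the three blocks reassemble into the single vacuum expectation value \eqref{tautl}, with $q^H$ tracking degree. The Toda-lattice tau-function property is then automatic from the operator form: the middle factor $e^{\lambda\alpha_1/\hbar}\bigl(q/(\lambda(-\lambda)^{1/r})\bigr)^{H}e^{-\lambda\alpha_{-r}/(r\hbar)}$ lies in $\widehat{GL}(\infty)$, and any expression of the form $\langle\, e^{\sum x_i \mathbf{A}[i]}\, G \, e^{\sum x_j^* \mathbf{A}^*[j]}\,\rangle$ in which the source and sink operators are the images under the boson--fermion correspondence of standard KP/Toda vertex operators satisfies the 2D Toda bilinear identity by Sato's theory. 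Hence $Z(\mathbf x,\mathbf x^*,\hbar,\lambda)$ is a tau-function of the Toda lattice hierarchy in the variables $(\mathbf x,\mathbf x^*)$.
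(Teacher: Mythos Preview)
The paper does not prove this theorem at all: it is stated as a citation from Johnson's thesis (Chapter~VI of \cite{johnson2009equivariant}) and used as a black box, with no argument supplied. So there is no ``paper's own proof'' to compare against.

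Your sketch is, in broad strokes, the strategy Johnson actually uses, which in turn is the orbifold extension of the Okounkov--Pandharipande GW/Hurwitz program. The architecture you describe---virtual localization on $\M_{g,n,\gamma}(W\PP^1(r,s),d)$, identification of edge contributions with rubber Hurwitz blocks assembling into the $\alpha_{\pm}$-exponentials and $q^H$, and identification of vertex contributions with the $\mathbf{A}_{a/r}$ operators via an orbifold ELSV/Hodge computation---is correct in outline. You are right to flag the vertex Laplace identity as the hard step; in Johnson's work this is precisely where the stacky Hodge integrals and the fractional monodromy bookkeeping enter, and it is not a short computation. One point to tighten: the Toda claim does not follow merely from ``$G\in\widehat{GL}(\infty)$ sandwiched between vertex-like operators''; one must first show that the $\mathbf{A}[i]$ (resp.\ $\mathbf{A}^*[j]$) can be conjugated by an upper (resp.\ lower) unitriangular dressing matrix $W$ to genuine Heisenberg modes $\alpha_i$, so that the expression becomes $\langle 0|\Gamma_+(\mathbf t)\,G'\,\Gamma_-(\mathbf t^*)|0\rangle$ in standard 2D Toda form. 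The paper invokes exactly this dressing argument (see the proof of Corollary~\ref{c2.1}), and without it the bilinear identity is not immediate.
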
\par
The generating function on the LHS of \eqref{tautl}  is defined similar to the stationary one \eqref{gwp}, however, with insertions on the full state space.
In this paper, we are focusing on the stationary case defined in \eqref{gwp}. We will only use the    special  case of the  Eq.~(\ref{tautl}):
\begin{equation}\label{equitau}
Z(\hbar,\bT)= Z(\mathbf x,\mathbf x^*,\hbar,\lambda)\Big|_{x_i=T_{i},x^*_i=0,\lambda=0}= \left<e^{\sum T_i\mathbf{A}[i]}e^{\frac{\lambda\alpha_1}{\hbar}}\left(\frac{q}{t(-t)^{1/r}}\right)^{H}e^{\frac{-\lambda\alpha_{-r}}{r\hbar}}\right>\big|_{\lambda=0}.
\end{equation}
where $\mathbf{A}[i] = [z^{i+1}]\mathbf{A}_{0/1}(z,\hbar,\lambda)$ .\par

To proceed we need to derive the non-equivariant limit of the rhs of Eq.~(\ref{equitau}). We have
\begin{corollary}\label{c2.1}
The partition function of the orbifold Gromov-Witten theory of $\mathbb{P}[r]$ when restricted to the positive times, is
\begin{equation}\label{taufunc}
Z(\hbar,\bT) = \left<e^{\sum T_i\mathbf A[i]}e^{\frac{\alpha_1}{\hbar}}q^{H}e^{\frac{\alpha_{-r}}{r\hbar}}\right>,
\end{equation}
where $\mathbf A[i] = [z^{i}]\mathbf A(z,\hbar)$, and 
\begin{equation}\label{Atoalpha}
\mathbf A(z,\hbar) = \frac{1}{\hbar}\sum_{k=0}^{\infty}\frac{(z\mathcal{S}(\hbar z))^k}{k!}\mathcal{E}_{k}(\hbar z).
\end{equation}
By the following change of variables
\begin{equation}\label{ttox}
\tau(\hbar,\bp):=Z(\hbar,\bT) |_{T_{k} \mapsto k! p_{k+1}}.
\end{equation}
the partition function becomes a tau-function of KP hierarchy, with KP times $t_k:= p_k/k$.
\end{corollary}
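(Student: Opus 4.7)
The plan is to derive Corollary~\ref{c2.1} as the non-equivariant limit of Johnson's equivariant operator formula~\eqref{tautl}, and then invoke the standard reduction from the Toda Lattice hierarchy to KP. I first specialize \eqref{tautl} to the stationary sector by setting $x_j^* = 0$, which kills the right-hand insertion $e^{\sum x_j^* \mathbf{A}^*_{r_m}[j]}$ because $\mathbf{A}^*_{r_m}[j]\,|0\rangle = 0$, and setting $x_i = T_i$. This produces \eqref{equitau} directly.

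The heart of the argument is passing to the non-equivariant limit $\lambda \to 0$. The subtlety is that the three middle factors
\[
e^{\lambda \alpha_1/\hbar}, \qquad A^H := \left(\tfrac{q}{\lambda(-\lambda)^{1/r}}\right)^H, \qquad e^{-\lambda \alpha_{-r}/(r\hbar)},
\]
do not individually admit finite $\lambda \to 0$ limits: $A^H$ diverges on degree-$d$ states like $\lambda^{-(r+1)d}$, while the two vertex-operator exponentials tend to the identity. The cancellation is made manifest by conjugating $A^H$ through the flanking exponentials and through $e^{\sum T_i \mathbf{A}_{0/r}[i]}$, using the eigenvalue relations $A^H \alpha_k A^{-H} = A^{-k}\alpha_k$ and $A^H \E_k(\hbar z) A^{-H} = A^{-k}\E_k(\hbar z)$ (consequences of $[H,\alpha_k] = -k\alpha_k$ and the analogous commutator with $\E_k$). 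After the conjugation, the residual $\lambda$-powers absorbed into the rescaled coefficients precisely cancel the divergences of $A^H$ at each degree in $q^r$, and the product of the three middle factors tends to $e^{\alpha_1/\hbar}\,q^H\,e^{\alpha_{-r}/(r\hbar)}$. The same conjugation (combined with a rescaling $z \mapsto z/\lambda$ of the spectral variable, encoded in the index shift from $[z^{i+1}]$ to $[z^i]$ in the definition of $\mathbf{A}[i]$) converts the equivariant $\mathbf{A}_{0/r}(z,\hbar,\lambda)$ into the non-equivariant generating series $\mathbf{A}(z,\hbar) = \hbar^{-1}\sum_{k\geq 0}(z\Sc(\hbar z))^k/k!\cdot \E_k(\hbar z)$.

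Finally, by \cite{johnson2009equivariant} the equivariant partition function~\eqref{tautl} is a TL tau-function. Restricting to positive times and applying the change of variables $T_k \mapsto k!\,p_{k+1}$ yields a KP tau-function with times $t_k = p_k/k$, by the standard TL-to-KP reduction on the positive-times subspace. The main obstacle is the $\lambda$-bookkeeping in the second step, both to exhibit the cancellation of divergences in the middle factor and to match $\mathbf{A}_{0/r}(z,\hbar,\lambda)$ with $\mathbf{A}(z,\hbar)$ under the combined limit and rescaling. While the existence of the non-equivariant limit is geometrically clear (stationary Gromov-Witten invariants of $\mathbb{P}[r]$ with hyperplane insertions are polynomial, hence regular, in the equivariant parameter), obtaining the explicit closed form requires these algebraic identities together with a degree-by-degree expansion in $q^r$ and order-by-order expansion in the insertions $T_i$.
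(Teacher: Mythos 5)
Your treatment of the non-equivariant limit is essentially the paper's own argument: the paper also starts from \eqref{tautl}, sets $x_i=T_i$, $x_j^*=0$, cancels the $(-\lambda)^{1/r}$ against the $-\lambda$ in $e^{-\lambda\alpha_{-r}/(r\hbar)}$, and then tracks powers of $\lambda$ degree by degree using the energy grading --- each $\E_k(\hbar z)$ has energy $-k$ and carries $\lambda^k$ (with $(1+\lambda z)_k\to k!$ supplying the factorial), while $(q/\lambda)^H$ contributes $\lambda^{-k}$ on the matching states because $\bigl\langle \E_{a_1}(z_1)\cdots\E_{a_n}(z_n)\bigr\rangle$ vanishes unless $\sum a_i=0$. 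Your $H$-conjugation identities are an equivalent phrasing of this bookkeeping, so that half of the corollary is fine. (One small mis-description: no rescaling $z\mapsto z/\lambda$ is involved; the $1/k!$ comes from the Pochhammer limit and the absorption of $\lambda^k$ into $q^H$, and the shift $[z^{i+1}]$ vs.\ $[z^{i}]$ is merely the paper's change of indexing convention for $\mathbf{A}[i]$.)

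The genuine gap is in the KP statement. You dispose of it by citing ``the standard TL-to-KP reduction on the positive-times subspace,'' but that reduction, by itself, cannot deliver the specific time identification $T_k\mapsto k!\,p_{k+1}$, $t_k=p_k/k$, because the variables $T_i$ couple to the insertion operators $\mathbf{A}[i]$, not to the current modes $\alpha_{i+1}$; a generic tau-function statement does not tell you which linear coordinates on the space of insertions are the hierarchy times. The paper does not take a limit of the equivariant TL property at all. Instead it exhibits the Sato form directly for the non-equivariant vev: from \eqref{Atoalpha} one reads off
\begin{equation*}
\mathbf{A}[i]=\frac{1}{\hbar\,(i+1)!}\,\alpha_{i+1}+\cdots,
\end{equation*}
with the dots denoting strictly higher-energy corrections, whence there exists an upper \emph{unitriangular} dressing matrix $W\in GL(\infty)$, fixing the covacuum, with $W^{-1}e^{\sum T_i\mathbf{A}[i]}W=\Gamma_{+}(t)$ for $t_{i+1}\propto T_i/(i+1)!$. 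Then $Z(\hbar,\bT)=\bigl\langle 0\big|\Gamma_{+}(t)\,G\big|0\bigr\rangle$ with $G=W^{-1}e^{\alpha_1/\hbar}q^{H}e^{\alpha_{-r}/(r\hbar)}\in\widehat{GL}(\infty)$, which is a KP tau-function by Sato theory, and the stated change of variables is exactly the leading coefficient above. If you insist on your route instead, you would additionally have to justify that the tau-function property survives $\lambda\to0$, where the dressing (hence the time normalization) in Johnson's equivariant TL statement is itself $\lambda$-dependent --- a nontrivial point your proposal does not address. Supplying the leading-term computation for $\mathbf{A}[i]$ and the unitriangular dressing argument closes the gap.
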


\begin{proof}
First one can see that the $(-\lambda)^{1/r}$ in the denominator of $\left(\frac{q}{\lambda(-\lambda)^{1/r}}\right)^{H}$ can be cancelled with the $-\lambda$ in $e^{\frac{-\lambda\alpha_{-r}}{r\hbar}}$, therefore we have
\begin{equation}\label{eq39}
Z(\hbar,\bT) = \left<e^{\sum T_i\mathbf{A}[i]}e^{\frac{\lambda\alpha_1}{\hbar}}\left(\frac{q}{\lambda}\right)^{H}e^{\frac{\alpha_{-r}}{r\hbar}}\right>.
\end{equation}\par

Recall the definition for $\mathbf{A}_{0/1}(z,\hbar,\lambda)$ from Eq.~(\ref{defAnt})
\begin{equation*}
\mathbf{A}_{0/1}(z,\hbar,\lambda)=\frac{1}{\hbar}\mathcal{S}(\hbar z)^{\lambda z}\sum_{k\in\mathbb{Z}}\frac{[\lambda z\mathcal{S}(\hbar z)]^k}{(1+\lambda z)_k}\mathcal{E}_{k}(\hbar z),
\end{equation*}
where operator $\mathcal{E}_{k}(\hbar z)$ have energy $-k$, i.e.
\begin{equation*}
[H,\mathcal{E}_{k}(\hbar z)] = -k\mathcal{E}_{k}(\hbar z),
\end{equation*}
thus its non-equivariant limit will give us a prefactor $\frac{\lambda ^k}{\hbar k!}\propto \lambda ^k$, and we have $\alpha_1 = \mathcal{E}_{1}(0)$ has energy -1. Therefore we have for energy $-k$ contributions of the operator
\begin{equation*}
e^{\sum T_i\mathbf{A}[i]}e^{\frac{\lambda \alpha_1}{\hbar}},
\end{equation*} 
a prefactor in the non-equivariant limit proportional to $t^{k}$. Since the expectation value for 
\begin{equation*}
\left<\mathcal{E}_{a_1}(z_1),\cdots,\mathcal{E}_{a_n}(z_n)\right>,
\end{equation*}
can be non-vanishing only if $\sum_{i=1}^{n}a_i=0$, therefore, for energy $-k$ contribution of $e^{\sum T_i\mathbf{A}[i]}e^{\frac{\lambda \alpha_1}{\hbar}}$, $e^{\frac{\alpha_{-r}}{r\hbar}}$ must contribute a energy k contribution, thus for $\left(\frac{q}{\lambda }\right)^{H}$ we have a $\lambda ^{-k}$ contribution, which means we have a well-defined non-equivariant limit for Eq.~(\ref{equitau}). 


In order to derive the coordinate changing formula, we will briefly outline the prove about how this generating function can be view as a KP tau-function.\par
Following \cite{okounkov2006equivariant,johnson2009equivariant}, one can see there exist an upper unitriangular dressing matrix $W$, for which we have
\begin{equation*}
W^{-1}e^{\sum T_i\mathbf A[i]}W = e^{\sum T_i\frac{\hbar}{i!}\alpha_{i}},
\end{equation*}
viewing as element in $GL(\infty)$. The existence of such operator can be easily seen by
\begin{equation*}
\mathbf A[i] = \frac{1}{ \hbar (i+1)!}\alpha_{i+1}+\cdots,
\end{equation*}
where the dots stands for the higher energy contributions ($\alpha_k,\, k<i$), and the factor $\frac{1}{\hbar^ii!}$ is derived from Eq.~(\ref{Atoalpha}) by letting $\hbar\rightarrow0$. For the justification of such operation, which will need the monomiality of the coefficients, one is referred to \cite{okounkov2006gromov,okounkov2006equivariant}
Therefore we have
\begin{equation*}
Z(\hbar,\bT)=  \left<We^{\sum T_i\frac{\hbar}{i!}\alpha_i}W^{-1}e^{\frac{\alpha_1}{\hbar}}q^{H}e^{\frac{\alpha_{-r}}{r\hbar}}\right>,
\end{equation*}
since $W$ is upper unitriangular, it fixes the vacuum, and due to $G=W^{-1}e^{\frac{\alpha_1}{\hbar}}q^{H}e^{\frac{\alpha_{-r}}{r\hbar}}\in \widehat{GL}(\infty)$, the above function is already in the standard form of KP hierarchy $\braket{0|\Gamma_{+}(t)G|0}$. Where
\begin{equation*}
t_i = \frac{T_{i-1}}{\hbar i!},\quad i\geq1.
\end{equation*}
\end{proof}
\begin{remark}
The dressing operator $W$ is only unique up to a left multiplication of a centralizer of $\alpha_1$. The case we have in hand is much simpler compared to those in \cite{okounkov2006equivariant,johnson2009equivariant}, since no positive energy operators ever shows up. A new treatment of dressing operator can be found in \cite{oblomkov2018gw}.
\end{remark}\par
The above $\mathbf A[i]$ operator specialized from \cite{johnson2009equivariant} is different from those specialized from $\mathbf A^{\mathrm{OP}}[i]$ in \cite{okounkov2006equivariant}, and they are related by the following Corollary:\par

\begin{corollary}
The partition function of the orbifold Gromov-Witten theory of $\mathbb{P}[r]$ when restricted to the positive times, can be also written as
\begin{equation}\label{taufunc1}
Z(\hbar,\bT) = \left<e^{\sum T_i\mathbf A^{\mathrm{OP}}[i]}e^{\alpha_{1}}\left(\frac{q}{\hbar}\right)^He^{\frac{\alpha_{-r}}{r\hbar}}\right>,
\end{equation}
where $\mathbf A^{\mathrm{OP}}[i] = [z^{i}]\mathbf A(z,\hbar)$, and 
$$
\mathbf A^{\mathrm{OP}}(z,\hbar) = \frac{1}{\hbar}\sum_{k=0}^{\infty}\frac{\zeta(\hbar z)^k}{k!}\mathcal{E}_{k}(\hbar z).
$$
Which is a tau-function of KP hierarchy, with KP times:
$$
t_i = T_{i+1}\frac{1}{\hbar^{i+1}i!}.
$$
We will call this specialization stationary orbifold  Gromov-Witten theory of $\PP[r]$.
\end{corollary}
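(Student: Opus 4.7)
The plan is to derive this alternative operator expression directly from Corollary \ref{c2.1} by a conjugation trick, rather than redoing the full Gromov-Witten calculation. The two formulas differ in exactly three places: $\mathbf A[i]$ versus $\mathbf A^{\mathrm{OP}}[i]$, the prefactor $e^{\alpha_1/\hbar}$ versus $e^{\alpha_1}$, and the dilaton factor $q^H$ versus $(q/\hbar)^H$. All three differences are controlled by a single conjugation by $\hbar^H$, which is possible because both the vacuum and the covacuum are annihilated by $H$.

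First I would establish the operator identity
$$\hbar^{-H}\,\mathbf A(z,\hbar)\,\hbar^H = \mathbf A^{\mathrm{OP}}(z,\hbar).$$
The proof is a one-line $\hbar$-accounting: since $\mathcal E_k$ has weight $-k$, one has $\hbar^{-H}\mathcal E_k(\hbar z)\hbar^H = \hbar^k \mathcal E_k(\hbar z)$, so each term $\frac{1}{\hbar}\frac{(z\mathcal S(\hbar z))^k}{k!}\mathcal E_k(\hbar z)$ picks up a factor $\hbar^k$; combined with the elementary identity $\hbar z\,\mathcal S(\hbar z)=\zeta(\hbar z)$ (from $\zeta(w)=2\sinh(w/2)$, $\mathcal S(w)=\zeta(w)/w$), the sum collapses into $\frac{1}{\hbar}\sum_k \frac{\zeta(\hbar z)^k}{k!}\mathcal E_k(\hbar z) = \mathbf A^{\mathrm{OP}}(z,\hbar)$. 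Extracting the coefficient of $z^i$ gives $\hbar^{-H}\mathbf A[i]\hbar^H = \mathbf A^{\mathrm{OP}}[i]$.

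Next I would insert $1=\hbar^H\hbar^{-H}$ between each adjacent pair of factors in the Corollary \ref{c2.1} expression and use $\langle 0|\hbar^H = \langle 0|$ and $\hbar^H|0\rangle = |0\rangle$ to free up the conjugations. The leftmost $\hbar^H$ is absorbed into the covacuum, turning $e^{\sum T_i\mathbf A[i]}$ into $e^{\sum T_i\mathbf A^{\mathrm{OP}}[i]}$. A second conjugation converts $e^{\alpha_1/\hbar}$ into $e^{\alpha_1}$ via $\hbar^{-H}\alpha_1\hbar^H = \hbar\,\alpha_1$. The remaining $\hbar^{-H}$ then commutes past $q^H$ to produce $(q/\hbar)^H$, while $e^{\alpha_{-r}/(r\hbar)}$ is left untouched (the final $\hbar^{-H}$ disappears into the vacuum). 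This produces exactly \eqref{taufunc1}.

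Finally, the tau-function property is inherited automatically from Corollary \ref{c2.1}, since the two formulas represent the same function $Z(\hbar,\bT)$. To read off the KP times in the new normalization, I would redo the dressing argument of Corollary \ref{c2.1} with $\mathbf A^{\mathrm{OP}}[i]$ in place of $\mathbf A[i]$: a short computation using $\zeta(\hbar z)^i = (\hbar z)^i + O(z^{i+2})$ and $\mathcal E_i(\hbar z)=\alpha_i + O(\hbar z)$ identifies the leading Heisenberg mode of $\mathbf A^{\mathrm{OP}}[i]$, from which the rescaling $t_i = T_{i+1}/(\hbar^{i+1}i!)$ follows via the same upper-unitriangular dressing operator argument. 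I expect no real obstacle; the only place requiring care is tracking the powers of $\hbar$ through the conjugation and the index shift when reading off the KP times from the leading term.
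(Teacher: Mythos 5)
Your proposal is correct and follows essentially the same route as the paper: the paper likewise inserts $(1/\hbar)^{H}$ adjacent to the covacuum (using that it fixes the vacuum and covacuum), commutes it through the factors via the energy grading $[H,\mathcal{E}_k(\hbar z)]=-k\,\mathcal{E}_k(\hbar z)$ together with $\hbar z\,\mathcal{S}(\hbar z)=\zeta(\hbar z)$, merges it into $(q/\hbar)^{H}$, and then reads off the KP times from the leading Heisenberg mode $\mathbf{A}^{\mathrm{OP}}[i]=\frac{\hbar^{i}}{(i+1)!}\alpha_{i+1}+\cdots$ by the same upper-triangular dressing-operator argument. One cosmetic slip: your parenthetical ``the final $\hbar^{-H}$ disappears into the vacuum'' is inconsistent with ``$e^{\alpha_{-r}/(r\hbar)}$ is left untouched''---either $\hbar^{-H}$ merges with $q^{H}$ into $(q/\hbar)^{H}$ and nothing reaches the vacuum, or it is pushed past the last factor, conjugating it to $e^{\alpha_{-r}/(r\hbar^{r+1})}$ (an equal vector, but not the displayed formula).
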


\begin{proof}
Since $\left(\frac{1}{\hbar}\right)^{H}$ fixes the vacuum and covacuum, we can insert one of it on the left side inside the braket of Eq.~(\ref{taufunc}) adjunct to the covaccum, without changing its value
\begin{equation*}
Z(\hbar,\bT) =  \left<\left(\frac{1}{\hbar}\right)^{H}e^{\sum T_i \mathbf A[i]}e^{\frac{\alpha_1}{\hbar}}q^{H}e^{\frac{\alpha_{-r}}{r\hbar}}\right>.
\end{equation*}
Now we  consecutively commute this operator to the middle in order for $H$ contributions to merge. Since $\mathcal{E}_r(\hbar z)$ has energy $-r,$ we have
\begin{equation*}
\left(\frac{1}{\hbar}\right)^{H}\mathcal{E}_r(\hbar z)\left(\frac{1}{\hbar}\right)^{-H}=  \hbar^{r}\mathcal{E}_r(\hbar z),
\end{equation*}
which means commute $\left(\frac{1}{\hbar}\right)^{H}$ through an $\mathbf A[i]$ operator amount multiply a $\hbar^{r}$, or equivalently changing $\mathbb A(z,\hbar)$ to the following form
\begin{equation*}
\mathbf{A}^{\mathrm{OP}}(z,\hbar) := \frac{1}{\hbar}\sum_{k=0}^{\infty}\frac{\zeta(\hbar z)^k}{k!}\mathcal{E}_{k}(\hbar z),
\end{equation*} 
which is exactly the operators specialized from those used in \cite{okounkov2006equivariant}, and since $\alpha_1 = \mathcal{E}_1(0)$, one also has
\begin{equation*}
\left(\frac{1}{\hbar}\right)^{H}e^{\frac{\alpha_1}{\hbar}}\left(\frac{1}{\hbar}\right)^{-H} = e^{\alpha_1}.
\end{equation*}
Using these results, we arrive at
\begin{equation*}
Z(\hbar,\bT) =  \left<e^{\sum T_i\mathbf{A}^{\mathrm{OP}}[i]}e^{\alpha_1}\left(\frac{q}{\hbar}\right)^{H}e^{\frac{\alpha_{-r}}{r\hbar}}\right>,\end{equation*}
where $\mathbf{A^{\mathrm{OP}}}[i] = [z^{i+1}]\mathbf{A}^{\mathrm{OP}}(z,\hbar)$.\par
Now, since one has
\begin{equation*}
\mathbf{A}^{\mathrm{OP}}[i] = \frac{\hbar^{i}}{ (i+1)!}\alpha_{i+1} +\cdots,
\end{equation*}
where the dots also stand for higher energy terms. There exist a different upper triangular dressing operator $W'$ s.t.
\begin{equation*}
Z(\hbar,\bT) =  \left<e^{\sum T_i\alpha_i}W^{'-1}e^{\frac{\alpha_1}{\hbar}}\left(\frac{q}{\hbar}\right)^{H}e^{\frac{\alpha_{-r}}{r\hbar}}\right>.
\end{equation*}
Thus by the same argument in the proof of Corollary \ref{c2.1}, we have the partition function above is a tau-function of KP hierarchy, with KP times:
\begin{equation*}
t_i = \frac{T_{i-1}}{\hbar^{i+1}i!},\quad i\geq1.
\end{equation*}
\end{proof}
\begin{remark}
In what follows, we will not use this convention from \cite{okounkov2006equivariant}, and we will refer to  Eq.~(\ref{taufunc}) whenever we say tau function. We see from above, there is a degree of freedom for the redefinition of $\mathbf A[i]$ operators. The reason why we choose this specific form is, now the principal specialization is easier to be realized.
\end{remark}

\subsection{Connected vev}We note $\sum T_i\mathbf A[i]$ is actually working as a truncation, i.e. remove the negative orders (the unstable contributions) in $ \mathbf A(z,\hbar)$ by brutal force. Therefore to keep the notation clean, we will instead, consider the following pseudo generating series
\begin{equation*}
G(z,\hbar):=\sum_{n=0}^{\infty}\frac{1}{n!}G^\bullet(z_1,\cdots,z_n,\hbar) := \sum_{n=0}^{\infty}\sum_{d=0}^{\infty}\frac{q^{rd}}{n!}G_d^\bullet(z_1,\cdots,z_n,\hbar) ,
\end{equation*}
where we define $\mathcal{A}(z,\hbar) := \hbar \mathbf A(z,\hbar)$, and
\begin{equation}\label{Gdbu}
G_d^\bullet(z_1,\cdots,z_n,\hbar)  = \frac{1}{\hbar^{n}}\left<\prod_{i=1}^{n}\mathcal{A}(z_i,\hbar)e^{\frac{\alpha_1}{\hbar}}P_{rd}e^{\frac{\alpha_{-r}}{r\hbar}}\right>.
\end{equation}
where $P_{rd} = \sum_{|\lambda|=rd}\ket{\lambda}\bra{\lambda}$ is the projection on to the degree $rd$ subspace. The reason that we only consider the degree $rd$ layers, is due to we can rewrite Eq.~(\ref{Gdbu}) as
\begin{equation}\label{Grd}
G_{d}^\bullet(z_1,\cdots,z_n,\hbar)=\frac{1}{d!(rd)!r^{d}\hbar^{(r+1)d+n}}\left<\alpha_1^{rd}\prod_{i=1}^{n}\mathcal{E}_0(\hbar z_i)\alpha_{-r}^{d}\right>,
\end{equation}
by using 
\begin{equation*}
e^{\alpha_1}\mathcal{E}_0(\hbar z)e^{-\alpha_1} = \mathcal{A}(z,\hbar),
\end{equation*}
which is a simple calculation by using the commutation relation Eq.~(\ref{sommu}): $[\mathcal{E}_a(z_1),\mathcal{E}_b(z_2)] = \zeta(az_2-bz_1)\mathcal{E}_{a+b}(z_1+z_2)$. Therefore Eq.~(\ref{Grd}) is non-vanishing only if $d\in\mathbb{Z}$.
Furthermore, we define the connected vacuum expectation value (vev) recursively by
$$
G_d^\bullet(z_1,\cdots,z_n,\hbar)= \sum_{P\in\mathrm{Part}_d[n]}\frac{1}{|\mathrm{Aut}(P)|}\prod_{i=1}^{l(P)}G_{d_i}^{\circ}(z_{P_i},\hbar),
$$
where the automorphism group is always trivial for $r\neq1$. To be complete, we will give a brief account for the following well-known result in Appendix I, which relate the disconnected theories to the connected theories.
\begin{lemma}\label{t2.4}

\begin{equation*}
\begin{split}
\ln G(z,\hbar) = \sum_{n=0}^{\infty}\sum_{d=0}^{\infty}\frac{q^{rd}}{n!d!(rd)!r^{d}\hbar^{(r+1)d+n}}\left<\alpha_1^{rd}\prod_{i=1}^{n}\mathcal{E}_0(\hbar z_i)\alpha_{-r}^{d}\right>^{\circ}.
\end{split}
\end{equation*}
\end{lemma}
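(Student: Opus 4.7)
This statement is the standard exponential-formula identity asserting that the logarithm of the disconnected generating series equals the connected generating series, now phrased in a bivariate setting that simultaneously tracks the number $n$ of $\mathcal{E}_0$-insertions and the degree $d$. My plan is to substitute the recursive definition of $G_d^\bullet$ into $G(z,\hbar)$ and recognize the result as the Taylor expansion of $\exp$ applied to the connected series, then read off $\ln G$.

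Concretely, I would start from the definition
$$G(z,\hbar) = \sum_{n,d\geq 0} \frac{q^{rd}}{n!}\, G_d^\bullet(z_1,\ldots,z_n,\hbar),$$
and insert the set-partition decomposition
$$G_d^\bullet(z_1,\ldots,z_n,\hbar) = \sum_{P \in \mathrm{Part}_d[n]} \frac{1}{|\mathrm{Aut}(P)|} \prod_{i=1}^{\ell(P)} G_{d_i}^{\circ}(z_{P_i},\hbar).$$
A pair $(n,d)$ together with $P \in \mathrm{Part}_d[n]$ is equivalent to an unordered tuple of triples $(n_i, d_i, z_{P_i})$ with $\sum n_i = n$ and $\sum d_i = d$. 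Grouping by the number of blocks $\ell$ and re-expressing the unordered sum as an ordered one divided by $\ell!$ yields
$$G(z,\hbar) \;=\; \sum_{\ell \geq 0} \frac{1}{\ell!}\Biggl(\sum_{n \geq 1}\sum_{d \geq 0} \frac{q^{rd}}{n!}\, G_d^{\circ}(z_1,\ldots,z_n,\hbar)\Biggr)^{\!\ell},$$
where the automorphism factor $|\mathrm{Aut}(P)|$ is absorbed exactly by un-labelling the $\ell$ blocks and the $1/n!$ splits coherently across blocks as $\prod_i 1/n_i!$ by the multinomial identity. Taking the logarithm gives the connected generating series, and substituting the connected analog of \eqref{Grd}, namely
$$G_d^{\circ}(z_1,\ldots,z_n,\hbar) = \frac{1}{d!(rd)!\,r^d\,\hbar^{(r+1)d+n}}\left\langle \alpha_1^{rd}\prod_{i=1}^n \mathcal{E}_0(\hbar z_i)\,\alpha_{-r}^d\right\rangle^{\!\circ},$$
produces the claimed formula.

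The main obstacle is purely bookkeeping: one must verify that the combinatorial factors match, namely that the product $\prod_i (1/n_i!)$ from labelling marked points within each block, combined with $1/|\mathrm{Aut}(P)|$ for permutations of identical blocks and $1/\ell!$ from the exponential, reproduces the original $1/n!$ for each ordered $n$-tuple. For $r \geq 2$ the degree label $d_i$ essentially distinguishes any two blocks and $|\mathrm{Aut}(P)|$ is trivial (as noted in the excerpt); for $r = 1$ blocks of equal size and equal degree can coincide, and the nontrivial $|\mathrm{Aut}(P)|$ is exactly the symmetry factor needed to make the collapse $\prod_k m_k(P)! \cdot |\mathrm{Aut}(P)| = \ell!$ hold. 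Once this identity is verified block-by-block, the exponential formula assembles everything and the lemma follows.
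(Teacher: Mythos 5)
Your overall strategy --- substitute the set-partition definition of $G_d^\bullet$, collapse to unordered blocks, and invoke the exponential formula --- is the same as the paper's, but two steps are not right as written. First, your displayed regrouping
\begin{equation*}
G(z,\hbar)=\sum_{\ell\geq0}\frac{1}{\ell!}\Bigl(\sum_{n\geq1}\sum_{d\geq0}\frac{q^{rd}}{n!}\,G_d^\circ(z_1,\dots,z_n,\hbar)\Bigr)^{\ell}
\end{equation*}
is false if the $z_i$ are honestly distinct variables: the $\ell$-th power contains cross terms such as $G_{d_1}^{\circ}(z_1,\hbar)\,G_{d_2}^{\circ}(z_1,\hbar)$, in which both factors carry the \emph{same} variable, whereas in the partition sum the blocks carry disjoint subsets $z_{P_i}$ of a common label set. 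The multinomial collapse from a set partition $P$ to the data $(n_i,d_i)$ is legitimate only because the correlators do not depend on \emph{which} marked points lie in which block; the paper makes this precise by first applying the $\Delta$-transformation and using $G_{d}^{\circ}(z_{a_1},\dots,z_{a_n},\hbar)^\Delta=G_{d}^{\circ}(z_{b_1},\dots,z_{b_n},\hbar)^\Delta$, and only then trading $\mathrm{Part}_d[n]$ for the degenerate data $P^*=\{(d_1,n_1),\dots,(d_{l(P)},n_{l(P)})\}$ with the multinomial factor $n!/\prod_i n_i!$. You need this symmetrization statement (or an equivalent one) made explicit; without it your second display has no meaning as an identity of series in distinct variables.

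Second, your account of the automorphism factor misidentifies its source. With labeled marked points two nonempty blocks can never coincide, so $|\mathrm{Aut}(P)|$ is trivial on nonempty blocks for every $r$, including $r=1$; before the $\Delta$-collapse the \emph{only} nontrivial automorphisms come from repeated empty blocks, i.e.\ the $0$-point functions, and $G^\circ_{rd}()$ is nonzero exactly when $r=d=1$. Your inner sum is restricted to $n\geq1$, which silently discards these blocks --- but the right-hand side of the lemma includes $n=0$, and for $r=1$ the $n=0$, $d=1$ term contributes the $q/\hbar^2$ inside the logarithm (the $e^{q/\hbar^2}$ normalization the paper isolates in a separate remark). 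So as written your derivation proves the lemma only for $r\neq1$ and proves a different identity for $r=1$. The ``$\ell!$ versus $\prod_k m_k!$'' bookkeeping you invoke for coinciding blocks does become relevant, but only \emph{after} the $\Delta$-collapse, where the enlarged $\mathrm{Aut}(P^*)$ acquires a factor of $k!$ for each $k$-fold repeated pair $(d_i,n_i)$ --- which is exactly how the paper organizes it. Reinstate the $n=0$ blocks and state the $\Delta$/position-independence step explicitly, and your argument becomes the paper's proof.
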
\par
\begin{proof}
See Appendix I.
\end{proof}
Finally, we have the following theorem from \cite{johnson2009equivariant}
\begin{theorem}[Chap. V in \cite{johnson2009equivariant}]
The connected vev is related to the Gromov-Witten invariants by
\begin{equation*}
G_d^\circ(z_1,\cdots,z_n,\hbar) :=\left<\alpha_1^{rd}\prod_{i=1}^{n}{\mathcal{E}_{0}(z_i)}\alpha_{-r}^{d}\right>^\circ= \sum_{g=0}^{\infty}\hbar^{2-2g}G_{g,rd}^\circ(z_1,\cdots,z_n),
\end{equation*}
where $G_{g,rd}^{\circ}(z_1,\cdots,z_n)$ is the orbifold Gromov-Witten generating function defined in Eq.~(\ref{gwgener}).
\end{theorem}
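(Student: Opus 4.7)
The plan is to reduce both sides to the same $\mathbb{C}^*$-equivariant localization residues on $\overline{\mathcal{M}}_{g,n,\emptyset}(\PP[r],rd)$, following the Okounkov--Pandharipande strategy adapted to orbifold targets. First I would set up the standard torus action on $\PP[r]$ fixing $0$ and the stacky point $\infty = B\mathbb{Z}_r$, and take the non-equivariant limit only at the very end (as justified by the argument in the proof of Corollary~\ref{c2.1}). The fixed loci are bipartite graphs whose edges correspond to rigid rational maps $\PP^1 \to \PP[r]$ totally ramified at $0$ and $\infty$. Because $\infty$ is stacky of order $r$ and we are in the trivial monodromy sector $\gamma=\emptyset$, each edge must have degree $r$, which forces the total degree to be a multiple of $r$ and identifies the ramification profile over $\infty$ with $(r,r,\ldots,r)$ having $d$ parts, and the profile over $0$ with $(1,1,\ldots,1)$ having $rd$ parts. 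These are precisely the boundary conditions created by the operators $\alpha_{-r}^d$ and $\alpha_1^{rd}$ on the Fermionic side, and the combinatorial prefactor $\frac{1}{d!(rd)!\, r^d}$ hidden inside these operators matches the standard automorphism count.

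Next I would analyze the contribution of the stationary insertions. After localization, each $\mathrm{ev}_i^*(h)\,\bar{\psi}_i^{k_i}$ forces the $i$-th marked point to lie in a vertex collapsed over $\infty$ (where $h$ is supported in the equivariant limit) and produces a $\bar\psi$-class integral on that vertex. By Lemma~\ref{proeigen}, the operator $\mathcal{E}_0(\hbar z)$ has $v_\lambda$ as eigenvector with eigenvalue $\sum_{i\ge 1} e^{\hbar z(\lambda_i-i+\frac12)} + \frac{1}{\zeta(\hbar z)}$; expanding in $z$ produces the $W_0^k$ generating series, whose action on $|\lambda\rangle$ is exactly the $k$-th completed $1$-cycle in the sense of Okounkov--Pandharipande. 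Via the GW/Hurwitz correspondence, completed cycles on the Fermionic side match descendant $\bar\psi$-class integrals on the localization vertices, so each $\mathcal{E}_0(\hbar z_i)$ encodes the full generating function of $\bar{\psi}_i^{k_i}$ insertions sitting over $\infty$.

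Combining these two ingredients, $\langle \alpha_1^{rd}\prod_i \mathcal{E}_0(\hbar z_i)\alpha_{-r}^d\rangle^\circ$ is, coefficient by coefficient in the $z_i$ expansion, identified with the sum over connected localization graphs contributing to $G_{g,rd}^\circ(z_1,\ldots,z_n)$. The $\hbar^{2-2g}$ grading is read off from the Euler characteristic of the localization graph: each edge contributes a factor of $\hbar$ through $\alpha_{\pm k}/\hbar$, each $\mathcal{E}_0(\hbar z_i)$ contributes an overall $\hbar$ through its normalization, and the connected Fermionic Wick contractions produce a graph whose first Betti number tracks the genus, yielding the standard $2-2g$ scaling.

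The main obstacle is the careful treatment of the orbifold structure at $\infty$. In the non-orbifold $\PP^1$ case of Okounkov--Pandharipande the edge degrees are arbitrary, but here one must handle the $\mathbb{Z}_r$-gerbe: the $\frac{1}{r}$-automorphism factor attached to each edge over $\infty$ has to be shown to combine exactly into the $\frac{1}{r^d}$ appearing in the Fermionic normalization, and the completed $r$-cycle structure (rather than plain cycles) must be extracted from the $\mathbf{A}$-operator expansion \eqref{Atoalpha}. Some care is also required because individual factors in \eqref{equitau} diverge in the non-equivariant limit, so the match between the equivariant operator product and the non-equivariant GW integral has to be controlled uniformly in genus. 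This orbifold refinement is precisely the content of Chapter~V of Johnson's thesis, and the outline above explains how its output specializes to the statement we need.
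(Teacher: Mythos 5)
The paper itself offers no proof of this statement: it is imported verbatim as a theorem from Chapter~V of \cite{johnson2009equivariant}, and the surrounding text only performs bookkeeping with it (Eqs.~\eqref{Gdbu}--\eqref{Grd} and Lemma~\ref{t2.4}). Your proposal must therefore stand on its own as a reconstruction of Johnson's localization argument, and there it has a concrete gap: the claim that the trivial-monodromy sector forces every fixed-locus edge to have degree $r$ is wrong. The condition $\gamma=\emptyset$ only untwists the \emph{marked points}; the nodes lying over the stacky fixed point $\infty$ may carry arbitrary $\mathbb{Z}_r$-monodromy, and edges of every degree $a\geq 1$ occur in the fixed loci, with monodromy $a \bmod r$ at $\infty$ compensated by the twisted sector of the adjacent contracted vertex. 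Consequently the profile $(r,\dots,r)$ over $\infty$, i.e.\ the operator $\alpha_{-r}^{d}$, is not an a priori combinatorial restriction on localization graphs; it emerges only after the twisted Hodge-integral contributions of the $\infty$-vertices are evaluated and resummed into the operator $e^{-\lambda\alpha_{-r}/(r\hbar)}$ appearing in \eqref{tautl}, and it survives the non-equivariant limit by the energy/degree cancellation argument spelled out in the proof of Corollary~\ref{c2.1}. Since your outline then explicitly defers exactly these evaluations (``has to be shown'', ``must be extracted'') to Chapter~V of Johnson's thesis, the attempt is circular as a blind proof: the hard orbifold steps are filled by citing the very result to be proved.

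Two further points. First, your $\hbar$-grading argument is not how the $\hbar^{2-2g}$ arises: in the displayed vev the $\alpha$ operators carry no $1/\hbar$ (those factors sit in the prefactor of \eqref{Grd}), and the vev depends on $\hbar$ only through the arguments $\hbar z_i$; the genus grading follows from this homogeneity combined with the virtual-dimension constraint for degree-$rd$ stationary invariants of $\PP[r]$ (the analogue of the familiar $\sum_i k_i = 2g-2+2d$ for $\PP^1$, here $\sum_i k_i = 2g-2+(r+1)d$, matching the $\hbar^{(r+1)d}$ prefactors), not from a Betti-number count of Wick contractions. Second, a route both shorter and faithful to the paper's logic is available: take Johnson's operator formula \eqref{tautl} as the input, pass to the non-equivariant limit via Corollary~\ref{c2.1}, expand $e^{\sum T_i \mathbf{A}[i]}$, use the conjugation $e^{\alpha_1}\E_0(\hbar z)e^{-\alpha_1}=\mathcal{A}(z,\hbar)$ together with the energy projection $P_{rd}$ to reach \eqref{Grd}, and pass to connected correlators by Lemma~\ref{t2.4}. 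What is correct in your sketch --- the eigenvalue formula of Lemma~\ref{proeigen} and the identification of the $z$-expansion of $\E_0$ with completed cycles in the GW/Hurwitz correspondence --- is indeed the skeleton of Johnson's Chapter~V, but in your write-up it is the citation, not the graph analysis, that carries the weight.
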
\par
As we have mentioned, after a change of coordination defined in appendix, $G(z,\hbar)$ differs from $Z(\mathbf x,\hbar)$ by adding the unstable contributions, which are defined for $\forall r\in\mathbb{Z}_+$ as:
\begin{equation}\label{grading}
G^\circ_{0,0}() = 0,\quad G^\circ_{1,0}() = 0,\quad G^\circ_{0,0}(z_1) = \frac{1}{z_1},\quad G_{0,0}^\circ(z_1,z_2) = 0.
\end{equation}

\subsection{Principal Specialization}
Now we want to carry out the  principal specialization in order to get the Baker-Akhiezer function of KP hierarchy by Eq.~(\ref{kpprin}).\par
From Corollary \ref{c2.1}, we already knew how the KP times are related to the Gromov-Witten times, since the principal specialization is defined for KP times via $t_k = -\frac{1}{kx^k}$ combining with Eq.~(\ref{ttox}) in Corollary \ref{c2.1} we have
\begin{equation}\label{pringwtox}
T_{k-1} = -\frac{1}{kx^k}\cdot \hbar k! = -\frac{\hbar(k-1)!}{ x^k}.
\end{equation}
For the descendent times, the principal specialization refers to

\begin{equation}\label{pringwtoy}
z_i^k =  -\frac{\hbar(k-1)!}{ x^k}, \quad\forall i.
\end{equation}

Now we will implement this specialization on $G(z,\hbar)$. From whose very definition, we have
\begin{equation}\label{discon}
G(z,\hbar)=\sum_{n=0}^{\infty}\sum_{d=0}^{\infty}\frac{q^d}{n!}G_d^\bullet(z_1,\cdots,z_n,\hbar)= \exp\left(\sum_{n=0}^{\infty}\sum_{d=0}^{\infty}\frac{q^d}{n!}G_d^\circ(z_1,\cdots,z_n,\hbar)\right).
\end{equation}
by using Lemma \ref{t2.4} to the r.h.s., we arrive at the following crucial identity
\begin{equation}\label{crucial}
G_{d}^\circ(z_1,\cdots,z_n,\hbar)=\frac{1}{d!(rd)!r^{d}\hbar^{(r+1)d+n}}\left<\alpha_1^{rd}\prod_{i=1}^{n}\mathcal{E}_0(\hbar z_i)\alpha_{-r}^{d}\right>^{\circ}.
\end{equation}
A direct consequence of which is, by Eq.~(\ref{badefh}), we have $\ln\Phi(x)$ is the principal specialization of (after setting negative $T_i$'s to 0)
\begin{equation}\label{lnphi}
\ln\Phi(x)=\sum_{n=0}^{\infty}\sum_{d=0}^{\infty}\frac{q^{rd}}{n!d!(rd)!r^{d}\hbar^{(r+1)d+n}}\left<\alpha_1^{rd}\prod_{i=1}^{n}\mathcal{E}_0(\hbar z_i)\alpha_{-r}^{d}\right>^{\circ*},
\end{equation}
where $*$ in the superscript refer to the principal specialization Eq.~(\ref{pringwtoy}). 
The reason why we want to write our formulas in the form of connected vev, is due calculational convenience. Following \cite{okounkov2006gromov} we define:
\begin{equation*}
G\left(
\begin{matrix}
a_1&\cdots&a_n\\
z_1&\cdots&z_n\\
\end{matrix}\right):=\left<\prod_{i=1}^{n}{\mathcal{E}_{a_i}(z_i)}\right>^\circ
\end{equation*}
then all $G$'s can be calculated from the following recursion relation and initial conditions
\begin{equation*}
G\left(
\begin{matrix}
a_1&\cdots&a_n\\
z_1&\cdots&z_n\\
\end{matrix}\right)=\sum_{i=2}^{n}\zeta\left(
\det
\left[
\begin{matrix}
a_1& a_i\\
z_1 &z_i \\
\end{matrix}
\right]
\right)
G\left(
\begin{matrix}
a_2&\cdots&a_i+a_1&\cdots&a_n\\
z_2&\cdots&z_i+z_1&\cdots&z_n\\
\end{matrix}
\right)
\end{equation*}
and
\begin{equation*}
\begin{split}
G\left(
\begin{matrix}
a_1&\cdots&a_n\\
z_1&\cdots&z_n\\
\end{matrix}\right)&=0,\quad a_1\leq0\\
G\left(
\begin{matrix}
0\\
z\\
\end{matrix}\right)&=\frac{1}{\zeta(z)},
\end{split}
\end{equation*}
we have immediately for the special case
\begin{equation*}
G\left(
\begin{matrix}
r&-r\\
z_1&z_2\\
\end{matrix}\right)=\left\{
\begin{split}
&0,&\quad r\leq0\\
&\frac{\zeta(r(z_1+z_2))}{\zeta(z_1+z_2)}=\frac{r\mathcal{S}(r(z1+z2))}{\mathcal{S}(z1+z2)},&\quad r>0
\end{split}\right.
\end{equation*}
where $\mathcal{S}(z) = \zeta(z)/z$. From above recursion relation, one can easily notice all the 0-point correlation functions are vanishing as we have claimed.\par
From the above properties, we can see that $\forall n\geq2$, $\left<\prod_{i=1}^{n}{\mathcal{E}_{a_i}(z_i)}\right>^\circ$ is a Taylor series in all the variables $z_i$, i.e. no negative degree terms. Therefore, the only place we get a negative order of $z_i$ in $\left<\alpha_1^{rd}\prod_{i=1}^{n}\mathcal{E}_0(\hbar z_i)\alpha_{-r}^{d}\right>^\circ$ is through the $n=1,d=0$ contribution, which is $\frac{1}{\zeta(\hbar z_1)}$ a Laurent series of $z_1$. Since the $n=1$ contribution contains no mixing of $z_i$ coordinates, therefore the order of the two operations: setting all negative $T_i$'s to 0 and taking the principal specialization, are interchangeable, i.e. we can first taking the principal specialization and then remove the non-negative degree parts of $T$. where for negative parts of $T_i$ the principal specialization is artificially defined by $T_{-2}=x-x\ln x$ and $T_{-1}=0$, which corresponding to set $\left<\tau_{-2}(\omega)\right>_{0,1,0/r}^0=1$ and $\tau_{-1}(\omega)=0$. Since the operation of removing unstable contribution is straightforward, in what follows we will keep the unstable contribution from $d=0$ in the definition of $\Phi(x)$, i.e we will define the wave function as
\begin{equation}\label{lnpsi}
\ln\Phi(x)=\frac{1}{\hbar}(x-x\ln x)+\sum_{n=0}^{\infty}\sum_{d=0}^{\infty}\frac{q^{rd}}{n!d!(rd)!r^{d}\hbar^{(r+1)d+n}}\left<\alpha_1^{rd}\prod_{i=1}^{n}\mathcal{E}_0(\hbar z_i)\alpha_{-r}^{d}\right>^{\circ*},
\end{equation}

As we have mentioned before, the choice of the unstable contribution is quite subtle, and the reason we choose these definitions is to accord with the result from vev.\par
In order to carry out the principal specialization we recall the proof of Corollary 4.2 in \cite{dunin2017quantum}. The key observation is the following lemma
\begin{lemma}\label{laurenta}
For arbitrary Laurent series $A(x)=\sum_{i=-1}^{\infty}a_{i}x^i$ one has
\begin{equation*}
A\left(-\hbar\partial_x\right)(\ln x) = a_{-1}\left(\frac{x-x\ln x}{\hbar}\right)+a_0\ln x-\sum_{i=1}^{\infty}a_i\frac{(i-1)!\hbar^i}{x^i}.
\end{equation*} 
\end{lemma}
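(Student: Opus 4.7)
The plan is to reduce the statement to a term-by-term calculation using linearity, then verify the three cases $i=-1$, $i=0$, and $i\geq 1$ separately. Since $A(-\hbar\partial_x)$ is a formal Laurent series in the operator, we have
$$A(-\hbar\partial_x)(\ln x) = \sum_{i=-1}^{\infty} a_i\,(-\hbar\partial_x)^i(\ln x),$$
so it suffices to compute $(-\hbar\partial_x)^i(\ln x)$ for each $i\geq -1$ and match it against the corresponding term on the right-hand side.

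For $i\geq 1$, I would establish by an easy induction that $\partial_x^{\,i}(\ln x) = (-1)^{i-1}(i-1)!\,x^{-i}$. Multiplying by $(-\hbar)^i$ produces
$$(-\hbar\partial_x)^i(\ln x) = (-1)^i(-1)^{i-1}\hbar^i (i-1)!\,x^{-i} = -\frac{(i-1)!\,\hbar^i}{x^i},$$
which gives precisely the summand $-a_i(i-1)!\hbar^i/x^i$ of the claimed formula. The case $i=0$ is trivial: $(-\hbar\partial_x)^0(\ln x) = \ln x$, accounting for the $a_0\ln x$ term.

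The only delicate point is the $i=-1$ term, where $(-\hbar\partial_x)^{-1}$ must be interpreted as a formal inverse, i.e.\ any primitive in $x$ divided by $-\hbar$. I would define it unambiguously by declaring the image to be the unique primitive with vanishing constant of integration relative to the expansion at infinity, and then verify the identity
$$(-\hbar\partial_x)\!\left(\frac{x-x\ln x}{\hbar}\right) = -\partial_x(x-x\ln x) = -(1-\ln x -1) = \ln x,$$
so that $(-\hbar\partial_x)^{-1}(\ln x) = (x-x\ln x)/\hbar$, matching the $a_{-1}$ term. Assembling the three cases by linearity yields the lemma. The main (mild) obstacle is being careful about the interpretation of the negative power of the differential operator; once the antiderivative is fixed, the rest is routine.
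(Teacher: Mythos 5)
Your proposal is correct and follows essentially the same route as the paper, whose proof is just the one-line remark ``expanding the lhs and taking the corresponding power of derivatives on each term for $\ln x$''; you have simply written out the term-by-term computation in full, including the induction $\partial_x^{\,i}(\ln x)=(-1)^{i-1}(i-1)!\,x^{-i}$ and the verification that $(x-x\ln x)/\hbar$ is the intended primitive for the $i=-1$ term. Your explicit fixing of the antiderivative convention is a welcome clarification of a point the paper leaves implicit (it is the same normalization the paper later encodes via $T_{-2}=x-x\ln x$, $T_{-1}=0$), but it does not constitute a different approach.
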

\begin{proof}
Expanding the lhs and taking the corresponding power of derivatives on each term for $\ln x$.
\end{proof}

This lemma gives an explicit realization of  the principal specialization, with only a slight miss-matching, that there is a $\hbar$ missing from each insertion. Therefore a prefactor $\frac{1}{\hbar^n}$ will be cancelled out, and the homogeneity under the grading Eq.~(\ref{grading}) is now manifest.\par
Applying to Eq.~(\ref{lnpsi}), we have
\begin{equation*}
\begin{split}
\ln\Phi(x)&=\sum_{n=0}^{\infty}\sum_{d=0}^{\infty}\frac{q^{rd}}{n!d!(rd)!r^{d}\hbar^{(r+1)d}}\left<\alpha_1^{rd}\left(\mathcal{E}_0\left(-\hbar\partial_x\right)\ln x\right)^n\alpha_{-r}^{d}\right>^{\circ}\\
&=\sum_{d=0}^{\infty}\frac{q^{rd}}{d!(rd)!r^{d}\hbar^{(r+1)d}}\left<\alpha_1^{rd}e^{\mathcal{E}_0\left(-\hbar\partial_x\right)\ln x}\alpha_{-r}^{d}\right>^{\circ}.
\end{split}
\end{equation*}
The reason that we have dropped the $*$ sign in the superscript, is in the case of $\mathbb{P}[r],\,r\neq1$, there is no non-vanishing 0-point functions, and since we have added the 0-degree contributions, all the possible unstable contributions is included, thus we can safely remove this sign.\par

\begin{remark}\label{rr=1}
For the case $r=1$, the 0-point contribution will be non-trivial since
$$
\left<e^{\frac{\alpha_1}{\hbar}}q^He^{\frac{\alpha_{-1}}{\hbar}}\right> = e^{\frac{q}{\hbar^2}}.
$$
Therefore, one need to add a normalization factor $-\frac{q}{\hbar^2}$ into the above equation. This additional factor will not be manifested in the sequel, while the reader need to remember whenever the case $r=1$ is considered, there will be a normalization factor $e^{-\frac{q}{\hbar^2}}$ appear in the front of the wave function or Baker-Akhiezer function.
\end{remark}
It is not hard to see Lemma \ref{t2.4} is still valid after implementing principal specialization on both sides. Therefore exponentiate the above equation we have
\begin{equation*}
\Phi(x) = \sum_{d=0}^{\infty}\frac{q^{rd}}{d!(rd)!r^{d}\hbar^{(r+1)d}}\left<\alpha_1^{rd}e^{\mathcal{E}_0\left(-\hbar\partial_x\right)\ln x}\alpha_{-r}^{d}\right>.
\end{equation*}\par
From the definition Eq.~(\ref{defE}) of the $\mathcal{E}_0$ operator, we get
\begin{equation*}
\mathcal{E}_0\left(-\hbar\partial_x\right)\ln x = \sum_{k\in\mathbb{Z}+\frac{1}{2}}\ln(x-k\hbar):\psi_k\psi_k^*:+\frac{1}{\zeta\left(-\hbar\partial_x\right)}(\ln x):=A+\si(x).
\end{equation*}
Note for the operator on the denominator, it can only be understood if the whole function has a well-defined Taylor expansion, since $\frac{z}{\zeta(z)}= \frac{1}{\mathcal{S}(z)}:=T(z)$ has a well defined Taylor expansion and is related to the generating series $B(z)$ of Bernoulli numbers by $e^{-z/2}\frac{1}{\mathcal{S}(z)}=B(z)=\frac{z}{e^z-1}$, we rewrite the singular term as
\begin{equation}\label{defa2}
\begin{split}
\si(x)&=\frac{1}{\zeta\left(-\hbar\partial_x\right)}(\ln x) = T\left(-\hbar\partial_x\right)\left(\frac{x-x\ln x}{\hbar}\right)=\exp(\frac{-\hbar}{2}\partial_x)B\left(-\hbar\partial_x\right)\left(\frac{x-x\ln x}{\hbar}\right)\\
&=\exp(\frac{-\hbar}{2}\partial_x)\left[\frac{\left(x-x\ln x\right)}{\hbar}-\frac{1}{2}\ln x-\sum_{k=1}^{\infty}\frac{B_{2k}}{2k(2k-1)}\left(\frac{\hbar}{x}\right)^{2k-1}\right]\\
&=\frac{(x-\frac{\hbar}{2})-(x-\frac{\hbar}{2})\ln (x-\frac{\hbar}{2})}{\hbar}-\frac{1}{2}\ln (x-\frac{\hbar}{2})-\sum_{k=1}^{\infty}\frac{B_{2k}}{2k(2k-1)}\left(\frac{\hbar}{x-\frac{\hbar}{2}}\right)^{2k-1}\\
&=\frac{x-x\ln x}{\hbar}+\sum_{k=2}^{\infty}\frac{1}{k2^k}\left(\frac{\hbar}{x}\right)^{k-1}-\sum_{k=1}^{\infty}\frac{B_{2k}}{2k(2k-1)}\left(\frac{\hbar}{x-\frac{\hbar}{2}}\right)^{2k-1}.
\end{split}
\end{equation}
Now we can answer the question why we have introduced the specialization $x_{-2}=x-x\ln x$ and $x_{-1} = 0$, i.e. they are just artificially defined in order to match this calculation. \par
Due to $\si(x)$ is an ordinary function, one has naively $[A,\si(x)]=0$, therefore, we have
\begin{equation}\label{quantum2}
\Phi(x) = e^{\si(x)}\sum_{d=0}^{\infty}\frac{q^{rd}}{d!(rd)!r^{d}\hbar^{(r+1)d}}\left<\alpha_1^{rd}e^{A}\alpha_{-r}^{d}\right>.
\end{equation}
Recall the Stirling formula

\begin{lemma}[Stirling Formula]
We have the following asymptotic expansion for log-Gamma function, for $x\rightarrow\infty,|\arg(x)|<\pi$

$$
\ln\Gamma(x)\sim x\ln x-x-\frac{1}{2}\ln(\frac{x}{2\pi})+\sum_{k=1}^{\infty}\frac{B_{2k}}{2k(2k-1)}\frac{1}{x^{2k-1}}.
$$
\end{lemma}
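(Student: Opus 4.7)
The plan is to derive the expansion from Binet's first integral representation. For $\mathrm{Re}(x) > 0$ the identity
\begin{equation*}
\ln\Gamma(x) = \left(x - \tfrac12\right)\ln x - x + \tfrac12\ln(2\pi) + J(x), \qquad J(x) := \int_0^\infty \left(\frac{1}{e^t-1} - \frac{1}{t} + \frac{1}{2}\right)\frac{e^{-xt}}{t}\,dt,
\end{equation*}
can be established by differentiating both sides in $x$ and matching with the standard integral formula $\psi(x) = \int_0^\infty(e^{-t}/t - e^{-xt}/(1-e^{-t}))\,dt$ for the digamma function; the additive constant $\tfrac12\ln(2\pi)$ is then fixed from the Wallis product (or equivalently from Gauss's multiplication formula). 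With this in hand it suffices to produce an asymptotic expansion of $J(x)$ in negative odd powers of $x$ whose coefficients match those claimed.

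The key mechanical step uses the generating function of the Bernoulli numbers. From
\begin{equation*}
\frac{t}{e^t-1} = 1 - \frac{t}{2} + \sum_{k=1}^\infty \frac{B_{2k}}{(2k)!}\, t^{2k}
\end{equation*}
one obtains, after dividing by $t$,
\begin{equation*}
\frac{1}{e^t-1} - \frac{1}{t} + \frac{1}{2} = \sum_{k=1}^\infty \frac{B_{2k}}{(2k)!}\, t^{2k-1},
\end{equation*}
which diverges but is a genuine Taylor series at $t=0$. I would truncate at order $N$: by Taylor's theorem with integral remainder applied to the entire function $t/(e^t-1) - 1 + t/2$, the difference between the left-hand side and $\sum_{k=1}^N \tfrac{B_{2k}}{(2k)!}t^{2k-1}$ has the form $t^{2N+1}\varphi_N(t)$ with $\varphi_N$ smooth and bounded on $[0,\infty)$. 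Termwise integration against $e^{-xt}/t$ using $\int_0^\infty t^{2k-2}e^{-xt}\,dt = (2k-2)!\,x^{-(2k-1)}$ yields exactly the coefficient $B_{2k}/(2k(2k-1))$ for each $k$, and the remainder integral is $O(|x|^{-(2N+1)})$ uniformly in $\mathrm{Re}(x) \geq \delta > 0$ by the bound on $\varphi_N$.

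To extend the asymptotic from the right half-plane to the sector $|\arg x| < \pi$, I would rotate the contour of integration in $J(x)$ from the positive real axis to the ray $\{re^{i\theta} : r \geq 0\}$ with $\theta = -\arg x + \varepsilon$ for small $\varepsilon > 0$, justifying the rotation by Cauchy's theorem combined with exponential decay of $e^{-xt}$ along these rays together with subexponential growth of the Bernoulli integrand in a suitable strip around the real axis (the integrand's nearest singularity is at $t = 2\pi i$). The same termwise estimate then gives the expansion throughout $|\arg x| < \pi$. I expect the main obstacle to be the first step, namely pinning down Binet's integral formula with the correct constant $\tfrac12\ln(2\pi)$: this forces an external input such as the Wallis product, whereas once the formula is available the Bernoulli expansion and the sectorial extension reduce to routine Watson-type estimates.
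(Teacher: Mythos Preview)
Your proof is correct and follows the classical Binet-integral route to Stirling's expansion. The paper, however, does not prove this lemma at all: it is quoted as the standard Stirling formula and used immediately in the derivation of $S_{\mathrm{inf}}(x)$ (Proposition~\ref{psingx}). So there is no ``paper's proof'' to compare against; you have supplied a proof where the authors simply invoke a known result.

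One small technical remark on your argument: the claim that $\varphi_N$ is bounded on $[0,\infty)$ is not quite right as stated. The remainder $R_N(t)=\bigl(\tfrac{1}{e^t-1}-\tfrac{1}{t}+\tfrac12\bigr)-\sum_{k=1}^N\tfrac{B_{2k}}{(2k)!}t^{2k-1}$ satisfies $R_N(t)=O(t^{2N+1})$ near $t=0$, but for large $t$ the subtracted polynomial dominates and grows like $t^{2N-1}$, so $\varphi_N(t)=R_N(t)/t^{2N+1}$ is bounded on any finite interval but only decays like $t^{-2}$ at infinity. This does not affect the conclusion: splitting the integral at $t=1$, the contribution from $[0,1]$ is $O(|x|^{-(2N+1)})$ by the local bound, and the contribution from $[1,\infty)$ is exponentially small in $|x|$ since every term is dominated by a polynomial times $e^{-(\mathrm{Re}\,x)t}$. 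The contour-rotation step to reach the full sector $|\arg x|<\pi$ is exactly as you describe.
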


Then we have
\begin{equation}\label{gxh}
\frac{\left(x-x\ln x\right)}{\hbar}-\frac{1}{2}\ln x-\sum_{k=1}^{\infty}\frac{B_{2k}}{2k(2k-1)}\left(\frac{\hbar}{x}\right)^{2k-1} \sim -\ln\Gamma(\frac{x}{\hbar})-\left(\frac{x}{\hbar}-\half\right)\ln\hbar-\ln x +\half\ln(2\pi),
\end{equation}
for $x\rightarrow\infty,|\arg(x)|<\pi$. Hence
\begin{proposition}\label{psingx}
$\si(x)$ is the asymptotic expansion of the function:
\begin{equation}\label{singx}
\si(x) \sim -\ln\Gamma(\frac{x}{\hbar}+\frac{1}{2})- \frac{x}{\hbar}\ln\hbar +\frac{1}{2}\ln(2\pi),
\end{equation}
for $x\rightarrow\infty,|\arg(x)|<\pi$.
\end{proposition}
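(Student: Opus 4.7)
The plan is to read off $\si(x)$ from the last line of Eq.~(\ref{defa2}), apply the Stirling formula to the shifted argument, and then absorb a log factor into the Gamma function via the functional equation $\Gamma(z+1) = z\Gamma(z)$.

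First, I would introduce the shifted variable $y := x - \hbar/2$. The final line of Eq.~(\ref{defa2}) reads exactly as the left-hand side of Eq.~(\ref{gxh}) evaluated at $y$ in place of $x$, that is
$$
\si(x) \;=\; \frac{y - y\ln y}{\hbar} \;-\; \tfrac{1}{2}\ln y \;-\; \sum_{k=1}^{\infty}\frac{B_{2k}}{2k(2k-1)}\left(\frac{\hbar}{y}\right)^{2k-1}.
$$
Applying the already-established asymptotic expansion \eqref{gxh} (which is just Stirling rewritten in the variables natural to the problem) with $x$ replaced by $y$, I obtain
$$
\si(x) \;\sim\; -\ln\Gamma\!\left(\tfrac{y}{\hbar}\right) - \left(\tfrac{y}{\hbar}-\tfrac{1}{2}\right)\ln\hbar - \ln y + \tfrac{1}{2}\ln(2\pi),
$$
which, after substituting $y/\hbar = x/\hbar - 1/2$, becomes
$$
\si(x) \;\sim\; -\ln\Gamma\!\left(\tfrac{x}{\hbar}-\tfrac{1}{2}\right) - \tfrac{x}{\hbar}\ln\hbar - \ln\!\left(\tfrac{x}{\hbar}-\tfrac{1}{2}\right) + \tfrac{1}{2}\ln(2\pi).
$$

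The final step is to combine the $\ln\Gamma$ term with the extra logarithm. The functional equation $\Gamma(z+1) = z\Gamma(z)$, applied at $z = x/\hbar - 1/2$, gives
$$
\ln\Gamma\!\left(\tfrac{x}{\hbar}+\tfrac{1}{2}\right) \;=\; \ln\!\left(\tfrac{x}{\hbar}-\tfrac{1}{2}\right) + \ln\Gamma\!\left(\tfrac{x}{\hbar}-\tfrac{1}{2}\right),
$$
so the two terms $-\ln\Gamma(x/\hbar - 1/2) - \ln(x/\hbar - 1/2)$ collapse to $-\ln\Gamma(x/\hbar + 1/2)$, which yields the claimed asymptotic identity. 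The only subtlety worth verifying is that the shift operator $\exp(-\tfrac{\hbar}{2}\partial_x)$, when acting term-by-term on the formal series appearing in \eqref{defa2}, genuinely corresponds to substituting $y$ for $x$ in an asymptotic expansion valid for $x \to \infty$, $|\arg x|<\pi$. This is immediate since each $B_{2k}\hbar^{2k-1}/x^{2k-1}$ is handled by the binomial/Taylor expansion of $(x-\hbar/2)^{-(2k-1)}$ and produces only corrections of equal or higher order in $\hbar/x$, so the two asymptotic expansions agree. The main "obstacle" is purely bookkeeping — tracking the shifts, the $\ln\hbar$ constants, and the $\ln x$ vs $\ln(x/\hbar)$ contributions without arithmetic slips — but no new ingredient beyond Stirling and the Gamma recursion is required.
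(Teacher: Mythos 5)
Your proof is correct and follows essentially the same route as the paper's own (very terse) proof, which likewise consists of applying Eq.~\eqref{gxh} to the $\hbar/2$-shifted variable appearing in the last line of Eq.~\eqref{defa2} and then invoking the definition (functional equation) of the Gamma function. Your explicit bookkeeping checks out: the shift gives $\si(x)\sim-\ln\Gamma\left(\tfrac{x}{\hbar}-\tfrac12\right)-\tfrac{x}{\hbar}\ln\hbar-\ln\left(\tfrac{x}{\hbar}-\tfrac12\right)+\tfrac12\ln(2\pi)$, and $\Gamma(z+1)=z\Gamma(z)$ at $z=\tfrac{x}{\hbar}-\tfrac12$ collapses the two logarithmic terms to $-\ln\Gamma\left(\tfrac{x}{\hbar}+\tfrac12\right)$, exactly as claimed.
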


\begin{proof}
Using Eq.~(\ref{gxh}), Eq.~(\ref{defa2}) together with the definition of Gamma function.
\end{proof}

Now in order to simplify the notation (avoiding frequently writing $\half$) we define
\begin{equation*}
\sigma(x) = e^{\frac{\hbar}{2}\partial_x}\sum_{d=0}^{\infty}\frac{q^{rd}}{d!(rd)!r^{d}\hbar^{(r+1)d}}\left<\alpha_1^{rd}e^{A}\alpha_{-r}^{d}\right>,
\end{equation*}
i.e. one has
\begin{equation*}
\Phi(x) = e^{\si(x)}e^{-\frac{\hbar}{2}\partial_x}\sigma(x).
\end{equation*}\par

Therefore, insert the expression of $\si(x)$ in Proposition \ref{psingx} back into Corollary \ref{rr} , one can see the only task remains is to calculate the functions 
\begin{equation}\label{phieq}
\sigma(x):=\sum_{d=0}^{\infty}\frac{q^{rd}}{d!(rd)!r^{d}\hbar^{(r+1)d}}\left<\alpha_1^{rd}\exp\left( \sum_{k\in\mathbb{Z}+\frac{1}{2}}\ln(x-(k-\frac{1}{2})\hbar):\psi_k\psi_k^*:\right)\alpha_{-r}^{d}\right>:=\sum_{d=0}^{\infty}\frac{q^{rd}X_d}{\hbar^{(r+1)d}},
\end{equation}
Note the summation is over $\half$ integer, therefore the $\half$ is cancelled out.
\subsection{Wave function and Baker-Akhiezer function}
To begin with, we define $v_{\lambda}$ for the vector in Sato's infinite Grassmannian, corresponding to the partition $\lambda$, i.e.
\begin{equation*}
v_{\lambda}=\left(\lambda_1-\frac{1}{2}\right)\wedge\left(\lambda_2-\frac{3}{2}\right)\wedge \left(\lambda_3-\frac{5}{2}\right)\wedge\cdots,
\end{equation*}
and we denote $v_{\emptyset}$ for the 0 vacuum. Recall a well-known equation
\begin{equation*}
\prod_{i=1}^{l(\lambda)}\alpha_{-\lambda_{i}}v_{\emptyset} = \sum_{|\rho|=|\lambda|}\chi^{\rho}_{\lambda}v_{\rho},
\end{equation*}
where $\chi^{\rho}_{\lambda}$ is the character of the representation of symmetric group $S_{|\lambda|}$, corresponding to the partition $\rho$, evaluating on the conjugacy class corresponding to $\lambda$. The above equation is, as pointed out in \cite{okounkov2006gromov}, is equivalent to the Murnaghan–Nakayama rule, which itself is a combinatorial method to calculate the characters of symmetric group.\par
Therefore we have
\begin{equation*}
\alpha_{-r}^{d}v_{\emptyset} = \sum_{|\lambda|=rd}\chi^{\lambda}_{(r)^d}\cdot v_{\lambda},
\end{equation*}
where $(r)^d$ is the partition of $rd$ with $d$ $r$'s. For the other side we have
\begin{equation*}
\bra{v_{\emptyset}}\alpha_1^{rd}\ket{v_{\lambda}} = \dim(\lambda) = \chi^{\lambda}_{(1)^{rd}},
\end{equation*}
where $\dim(\lambda)$ is the number of standard Young tableaux of shape $\lambda$, and by $\bra{v_{\lambda}}$ we refer to the dual of $\ket{v_{\lambda}}:=v_{\lambda}$ w.r.t. the Hermitian inner product. We proceed by utilizing Lemma \ref{proeigen}
Then we have $v_{\lambda}$ is a eigenvector for the operator $\exp(A)$ with eigenvalue
\begin{equation*}
\exp\left(\sum_{i=1}^{\infty}\ln(x+(i-\lambda_i)\hbar)-\ln(x+i\hbar)\right)=\prod_{i=1}^{\infty}\frac{x+(i-\lambda_i)\hbar}{x+i\hbar}.
\end{equation*}
Therefore, we have
\begin{equation} \label{formulaofXd}
X_d=\frac{1}{d!(rd)!r^{d}}\sum_{|\lambda|=rd}\chi^{\lambda}_{(1)^{rd}}\chi^{\lambda}_{(r)^d} \prod_{i=1}^{\infty}\frac{x+(i-\lambda_i)\hbar}{x+i\hbar}.
\end{equation}

The above formula can be further simplified. Finally we will prove
\begin{proposition}\label{p2.1}
The following closed formula of $X_d$ holds:
\begin{equation*}
X_d = \frac{(-1)^d}{r^dd!}\prod_{i=1}^{rd}\frac{\hbar}{x+i\hbar}.
\end{equation*}
\end{proposition}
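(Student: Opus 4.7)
My plan is to recast the identity as a free-fermion vacuum expectation value and evaluate it via the adjoint action of $e^A$. Using the bosonization identities $\alpha_{-r}^d|0\rangle = \sum_{|\lambda|=rd}\chi^\lambda_{(r)^d}|\lambda\rangle$ and $\langle 0|\alpha_1^{rd} = \sum_{|\lambda|=rd}\chi^\lambda_{(1)^{rd}}\langle\lambda|$, together with the eigenvalue formula from Lemma \ref{proeigen} applied to the diagonal operator $A := \sum_k \ln(x-(k-\tfrac12)\hbar):\psi_k\psi_k^*:$ (which gives $e^A|\lambda\rangle = \prod_i\frac{y+i-\lambda_i}{y+i}|\lambda\rangle$, $y:=x/\hbar$), Proposition \ref{p2.1} is equivalent to
\[
\langle 0|\alpha_1^{rd}\,e^A\,\alpha_{-r}^d|0\rangle \;=\; (-1)^d (rd)!\,\prod_{i=1}^{rd}\frac{1}{y+i}.
\]

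The computation uses the adjoint action $e^A\psi_a e^{-A} = \hbar y_a\psi_a$ with $y_a := y-a+\tfrac12$ (so $e^A\psi_a^* e^{-A} = (\hbar y_a)^{-1}\psi_a^*$), derived from $[A,\psi_a] = \ln(\hbar y_a)\psi_a$. Since $A$ annihilates the vacuum, $e^A|0\rangle = |0\rangle$, and the VEV becomes $\langle 0|\alpha_1^{rd}\tilde\alpha_{-r}^d|0\rangle$, where $\tilde\alpha_{-r} := \sum_a \frac{y_{a+r}}{y_a}:\psi_{a+r}\psi_a^*:$. One then checks that $\langle 0|\tilde\alpha_{-r}=0$ (using $\langle 0|e^A = \langle 0|$ and $\langle 0|\alpha_{-r}=0$), and a straightforward $\mathfrak{gl}(\infty)$ bracket computation yields $[\alpha_1,\tilde\alpha_{-r}] = -r\mathcal{B}_1$ where $\mathcal{B}_k := \sum_a(y_{a-k}\cdots y_a)^{-1}:\psi_{a+r-k}\psi_a^*:$; iterating via the telescope $y_{a-k-1}-y_a=k+1$ gives $(\mathrm{ad}_{\alpha_1})^k\tilde\alpha_{-r} = -r\,k!\,\mathcal{B}_k$. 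Expanding $\langle 0|\alpha_1^{rd}\tilde\alpha_{-r}^d|0\rangle = \langle 0|[\alpha_1^{rd},\tilde\alpha_{-r}]\tilde\alpha_{-r}^{d-1}|0\rangle$ via the Leibniz formula and discarding the $\mathcal{B}_k$ terms with $k\le r$ (which annihilate $\langle 0|$ from the left by an energy-charge argument), one obtains a nested recursion in $d$.

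The main obstacle is the combinatorial organization of this nested expansion: showing that the resulting multi-sum over chains of fermionic contractions collapses to the single product $\prod_{i=1}^{rd}(y+i)^{-1}$ with the correct sign and prefactor $(rd)!$. An expedient route is to prove the recursion $X_d = -X_{d-1}/\bigl(rd\prod_{j=1}^r(y+r(d-1)+j)\bigr)$, read off directly from the claimed closed form, by a single-step VEV computation at the top level (using $[\alpha_1^{rd},\tilde\alpha_{-r}]$ together with commuting one $\alpha_1^r$-block past the surviving $\mathcal{B}_k$); induction on $d$ from the trivial base case $X_0 = 1$ then finishes the proof.
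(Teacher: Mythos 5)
Your reduction to the vacuum expectation value $\langle 0|\alpha_1^{rd}e^A\alpha_{-r}^d|0\rangle=(-1)^d(rd)!\prod_{i=1}^{rd}(y+i)^{-1}$ (with $y=x/\hbar$) is a correct restatement of the paper's identity \eqref{Xdidentity}, and the conjugation $e^A\psi_ae^{-A}=\hbar y_a\psi_a$, the vanishing $\langle 0|\tilde\alpha_{-r}=0$, and the brackets $[\alpha_1,\tilde\alpha_{-r}]=-r\mathcal{B}_1$, $[\alpha_1,\mathcal{B}_k]=(k+1)\mathcal{B}_{k+1}$ for $k<r-1$ all check out. But the iteration $(\mathrm{ad}_{\alpha_1})^k\tilde\alpha_{-r}=-r\,k!\,\mathcal{B}_k$ fails precisely at $k=r$: since you define $\mathcal{B}_k$ with normal ordering, the step from $\mathcal{B}_{r-1}$ to $\mathcal{B}_r$ produces a diagonal operator and therefore picks up the $2$-cocycle of $\widehat{\mathfrak{gl}}(\infty)$ (the anomaly $c(E_{-\frac12,\frac12},E_{\frac12,-\frac12})=1$ at the crossing site $a=-\frac12$), giving
\begin{equation*}
(\mathrm{ad}_{\alpha_1})^{r}\,\tilde\alpha_{-r} \;=\; -r\,r!\,\mathcal{B}_r \;-\; r!\prod_{i=1}^{r}\frac{1}{y+i}.
\end{equation*}
Your instruction to "discard the $\mathcal{B}_k$ terms with $k\le r$" is justified for the normal-ordered operators (indeed $\langle 0|\mathcal{B}_k=0$ for $k\le r$), but it also discards this central scalar, which annihilates nothing. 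The failure is visible already at the base case $d=1$: there $rd=r$ forces $k\le r$, so your procedure gives $\langle 0|\alpha_1^{r}\tilde\alpha_{-r}|0\rangle=0$, whereas the target is $-r!\prod_{i=1}^{r}(y+i)^{-1}$ — the anomaly term \emph{is} the entire answer. In a fully normal-ordered bookkeeping every nonvanishing contribution to this scalar VEV arises from such cocycle terms, so omitting them is not a small gap but kills the computation.

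Even with the cocycle restored, the endgame is not established. Writing $V_d:=\langle 0|\alpha_1^{rd}\tilde\alpha_{-r}^d|0\rangle$, the $k=r$ anomaly contributes $-\binom{rd}{r}\,r!\prod_{i=1}^{r}(y+i)^{-1}\,V_{d-1}$, which has the correct prefactor $(rd)!/(r(d-1))!$ but the \emph{unshifted} product $\prod_{i=1}^{r}(y+i)^{-1}$ rather than the $\prod_{j=1}^{r}(y+r(d-1)+j)^{-1}$ your one-step recursion requires; hence the $k>r$ terms, in which $\mathcal{B}_k$ keeps interacting with the remaining $\tilde\alpha_{-r}$'s and generates further diagonal closures and anomalies, carry an essential correction, and you give no mechanism for their collapse — you name this "the main obstacle" and then assert a recursion "read off directly from the claimed closed form," which is circular as a proof. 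That collapse is exactly where the difficulty lives, and it is what the paper's proof actually does by a different route: it expands the product over $x$-degrees, uses character orthogonality and Newton's identities to reduce Lemma \ref{lemA1} to the statements $\braket{0|\alpha_1^{rd}W_0^{k_1}\cdots W_0^{k_n}\alpha_{-r}^d|0}=0$ for $\sum_i k_i\le rd-1$ (Proposition \ref{pd}, via the energy/level counting inequalities) and the explicit evaluation $d!(rd)!r^d\,\delta_{(r)^d,\mathbf{k}}$ at $\sum_i k_i=rd$ (Proposition \ref{laim2}). Your conjugation idea is attractive and could plausibly be completed with careful anomaly bookkeeping, but as written it contains one outright error (the dropped central term) and one unproved core step (the nested collapse), so it does not yet constitute a proof.
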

\begin{proof}
By Equation~\eqref{formulaofXd}, this proposition follows from   the following key idendity
\begin{equation} \label{Xdidentity}
X_d :=\frac{1}{d!(rd)!r^{d}}\sum_{|\lambda|=rd}\chi^{\lambda}_{(1)^{rd}}\chi^{\lambda}_{(r)^d} \prod_{i=1}^{\infty}\frac{x+(i-\lambda_i)\hbar}{x+i\hbar}
= \frac{(-1)^d}{r^dd!}\prod_{i=1}^{rd}\frac{\hbar}{x+i\hbar}:=\tilde{X}_d.
\end{equation}
which will be proved in Lemma \ref{lemA1}, Appendix II.
\end{proof}\par
Therefore, we arrive at
\begin{theorem2.1'}[The $t=0$ part of Theorem  \ref{onept}]
The wave function for the orbifold Gromov-Witten theory of $\mathbb{P}[r]$ with $t=0$ has the following closed form
\begin{equation*}
\begin{split}
\Phi(0,x)&= e^{\si(x)}\sum_{d=0}^{\infty}\frac{(-1)^dq^{rd}}{r^dd!\hbar^{(r+1)d}}\prod_{i=0}^{rd}\frac{\hbar}{x+(i-\frac{1}{2})\hbar}:=e^{\si(x)}e^{-\frac{\hbar}{2}\partial_x}\sigma(x)\\
& = (2\pi)^{\half}\sum_{d=0}^{\infty}\frac{(-1)^dq^{rd}}{r^dd!}\frac{\hbar^{-\frac{x}{\hbar}-(r+1)d}}{\Gamma(\frac{x}{\hbar}+rd+\frac{1}{2})}: = \sum_{d=0}^{\infty}q^{rd}\Phi^d(x).
\end{split}
\end{equation*}
For the case $r=1$ there will be an additional normalization factor $e^{-\frac{q}{\hbar^2}}$.
\end{theorem2.1'}

\begin{proof}
Applying Proposition \ref{p2.1} together with Eq.~(\ref{phieq}) gives the expression of $\Phi(0,x)$. The additional normalization factor $e^{-\frac{q}{\hbar^2}}$ comes from Remarks \ref{rr=1}.
\end{proof}
For Baker-Akhiezer function we have

$$
\Psi(x) =(2\pi)^{\half}e^{\frac{x}{\hbar}\ln x - \frac{x}{\hbar}}\sum_{d=0}^{\infty}\frac{(-1)^dq^{rd}}{r^dd!}\frac{\hbar^{-\frac{x}{\hbar}-(r+1)d}}{\Gamma(\frac{x}{\hbar}+rd+\half)}.
$$

\subsection{Quantum curve equation}
As we mentioned in the Introduction, for any Baker-Akhiezer function, there should exist an operator annihilate it, which can be viewed as the quantization of the classical spectral curve. In this subsection we will prove

\begin{theorem2.2'}[The $t=0$ part of Theorem \ref{t2.2}]
The wave function of the Gromov-Witten theory of $\mathbb{P}[r]$ satisfy the following quantum curve equation
$$
\hat{H}_0(x,-\hbar\partial_x)\cdot\Phi(x) = 0.
$$
where
\begin{equation*}
\hat{H}_0(x,-\hbar\partial_x) = e^{-\hbar\frac{\partial}{\partial x}}+q^re^{r\hbar\frac{\partial}{\partial x}}-x+\frac{\hbar}{2}.
\end{equation*}
\end{theorem2.2'}\par
\begin{proof}
By Remarks \ref{rrr}, we will only prove the theorem for the case $t\neq0$. Recall the expression for $\Phi(x)$ from Theorem \ref{onept}, which means we need to prove
\begin{equation}\label{qchalf}
\left(e^{-\hbar \partial_x} + q^r\, e^{r\hbar \partial_x} -x+\frac{\hbar}{2}\right)\left((2\pi)^{\half}\sum_{d=0}^{\infty}\frac{(-1)^dq^{rd}}{r^dd!}\frac{\hbar^{-\frac{x}{\hbar}-(r+1)d}}{\Gamma(\frac{x}{\hbar}+rd+\half)}\right)=0,
\end{equation}
and it is a direct consequence of
\begin{equation*}
\left(e^{-\hbar \partial_x}-x+\frac{\hbar}{2}\right) \Psi^d(x) + e^{r\hbar \partial_x}\Psi^{d-1}(x) = 0,\quad \forall d\geq1,
\end{equation*}
since $\Psi^0 = 1$, summing over $d$ with weight $q^{rd}$ gives precisely Eq.~(\ref{qchalf}).
\end{proof}
\section{The Fermionic two Point Function}\label{s3}
In the previous section, we already calculated the Baker-Akhiezer function of the KP hierarchy, whose specialization at $t=0$ is the first basis of $V$, the corresponding point in Sato's infinite Grassmannian. However, in order to fully determine the solution of the KP hierarchy, i.e. derive the Bogoliubov transformation of the vacuum, one has to derive the remaining basis vectors, which will be our major focus for the remaining of this section.\par
In this section, we will first derive an admissible basis for the KP hierarchy, and further derive the canonical basis from it by a kind of orthonormalization. Finally, assemble the canonical basis properly will give us the canonical Fermionic bilinear, thus the Bogoliubov transformation.\par
\subsection{Baker-Akhiezer function}\label{s3.1}
The first key ingredient is the Baker-Akhiezer function with arbitrary $t$ (this is the reason why we postpone the $t\neq0$ case in last section to here). The Baker-Akhiezer function defined in Eq.~(\ref{BAfunc}) has the following nice property
\begin{proposition}\label{pba}
The Baker-Akhiezer function $\Phi(t,x)$ locates inside the semi-infinite vector space $V$, for $\forall t\in\C$. Moreover, one has
\begin{equation*}
\partial^n_t \Psi(t,x)\big|_{t=t_0} \in V,\quad \forall t_0\in\C,\, n\in\mathbb{Z}_+.
\end{equation*}
\end{proposition}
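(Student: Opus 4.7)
My strategy is to read both assertions directly off the closed-form expression for $\Psi(t,x)$ obtained in Theorem~\ref{onept} (extended to arbitrary $t$ via the rescaling of Remark~\ref{rrr}), with the Fermionic definition \eqref{BAfunc} providing the conceptual backstop.

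First I would write, using Theorem~\ref{onept},
\[
\Psi(t,x) \sim e^{tx}\cdot e^{(x\ln x - x)/\hbar}\cdot\hbar^{-x/\hbar}\cdot (2\pi)^{1/2}\sum_{d\geq 0}\frac{(-1)^d q^{rd} e^{t\hbar rd}}{r^d d!\,\hbar^{(r+1)d}\,\Gamma(x/\hbar + rd + \tfrac12)}.
\]
The universal prefactor $e^{tx}\cdot e^{(x\ln x-x)/\hbar}\,\hbar^{-x/\hbar}$ is the canonical gauge of a Baker--Akhiezer function. After stripping it off, the remainder expands as a formal series in $x^{-1}$ (each factor $1/\Gamma(x/\hbar + rd + 1/2)$ has a Stirling-type asymptotic expansion at $x\to\infty$, $|\arg(x)|<\pi$), with coefficients entire in $q$ and in $e^{t\hbar}$. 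Under the identification $\underline{x^k}\leftrightarrow \underline{k-1/2}$, this places $\Psi(t,x)$ inside (the Sato-type completion of) the semi-infinite vector space $V$ for every fixed $t\in\C$, establishing the first claim.

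For the second assertion I differentiate in $t$ directly. Each $\partial_t$ produces a factor $(x+\hbar rd)$ from the exponentials $e^{tx+t\hbar rd}$, so
\[
\partial_t^n \Psi(t,x) \sim e^{tx}\cdot e^{(x\ln x -x)/\hbar}\cdot\hbar^{-x/\hbar}\cdot(2\pi)^{1/2}\sum_{d\geq 0} (x+\hbar rd)^n\cdot\frac{(-1)^d q^{rd} e^{t\hbar rd}}{r^d d!\,\hbar^{(r+1)d}\,\Gamma(x/\hbar + rd + \tfrac12)}.
\]
Multiplication by any polynomial in $x$ preserves the required semi-infinite shape (it shifts the half-integer indices by a bounded amount), and each exponential is entire in $t$, so the specialisation $t = t_0$ for any $t_0\in\C$ causes no obstruction at the level of coefficients.

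The main subtlety, and the point I would take the most care over, is the precise interpretation of ``$\in V$'' given that the expansion in $x^{-1}$ is manifestly infinite: one is really working in the Sato-type completion rather than the literal direct sum $\bigoplus_k \C\underline{k}$. Conceptually this is handled by appealing to \eqref{BAfunc}: the field $\psi(x)=\sum_{k\in\Z+1/2} x^{k-1/2}\psi_k$ maps $|V\rangle$ to a formal series in $x$ with values in the Fock space, and pairing with $\langle 1|$ extracts a scalar-valued formal Laurent series in $x$ of exactly the required semi-infinite shape. Since $\partial_t^n$ commutes with this pairing and inserts the bounded operator $\alpha_1^n$ (which preserves the charge grading up to a shift), the same argument transfers to all derivatives, reducing everything to the explicit asymptotic expansion above.
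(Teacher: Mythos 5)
There is a genuine gap here, and it is the one you yourself flag and then resolve in the wrong direction. In this proposition the symbol $V$ does not mean the ambient space $\bigoplus_{k\in\Z+\half}\C\,\underline{k}$ or its formal completion; it means the point of Sato's Grassmannian attached to the tau-function, i.e.\ the specific semi-infinite subspace $V=\mathrm{span}\{\phi_0,\phi_1,\dots\}$ with $\ket{V}=\underline{\phi_0}\wedge\underline{\phi_1}\wedge\cdots$. This is how the proposition is actually used: in the proof that $D_x$ is a lifting operator, the paper deduces from $\bigl(D_x-\frac{\hbar}{2}\bigr)^k\Psi(0,x)\in V$ together with the degree bound $x^k+O(x^{k-1})$ that $D_x\phi_0\in\mathrm{span}\{\phi_0,\phi_1\}$ --- a conclusion that is false if ``$\in V$'' only meant ``has semi-infinite Laurent shape,'' since every series of that shape lives in the ambient completion. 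Your argument (Stirling expansion of the closed formula, coefficients entire in $q$ and $e^{t\hbar}$, polynomial prefactors shifting indices boundedly) establishes exactly and only the shape statement, which is essentially automatic and carries none of the content needed downstream; retreating to ``the Sato-type completion'' in your final paragraph concedes precisely the assertion to be proved. Your appeal to \eqref{BAfunc} has the same defect: pairing $\bra{1}\psi(x)$ with $\ket{V}$ certainly produces a series of the right shape, but nothing in your proposal shows that this series is a (possibly infinite) linear combination of the admissible basis vectors of $V$.

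The missing idea is the one the paper's one-line proof imports from Segal--Wilson: for a KP solution the Baker--Akhiezer function, viewed as a function of the spectral variable, lies in the point $W$ of the Grassmannian for \emph{all} values of the time parameter, and since $W$ is a \emph{closed} linear subspace, difference quotients in $t$ and their limits --- hence all $\partial_t^n\Psi(t,x)\big|_{t=t_0}$ --- remain in $W$. Note also a second, concrete failure of your formal-series reading: for $t_0\neq 0$ the prefactor $e^{t_0x}$ is an infinite series in \emph{positive} powers of $x$, so $\Psi(t_0,x)$ does not even have semi-infinite shape as a formal Laurent series; making sense of membership for every $t_0\in\C$ is exactly why the analytic Segal--Wilson model (where $W$ is a closed subspace of a Hilbert space and $t\mapsto\Psi(t,\cdot)$ is an analytic curve in it) is invoked rather than coefficient inspection of an asymptotic expansion. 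If you wanted a self-contained algebraic substitute, you would have to show that $e^{t\alpha_1}\ket{V}$ is again a decomposable vector and that $\braket{1|\psi(x)|V'}/\braket{0|V'}$ for decomposable $\ket{V'}$ is a first-row cofactor expansion of the corresponding semi-infinite determinant, hence a combination of its basis vectors; no step of your proposal plays this role.
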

\begin{proof}
This is a result from\cite{segal1985loop}, see also, \cite{alexandrov2015enumerative,babelon2003introduction}.
\end{proof}
To calculate the Baker-Akhiezer function with arbitrary $t$, we need the following non-equivariant divisor equation in operator formalism, which specialized from the equivariant one, i.e. Proposition VI.1. of \cite{johnson2009equivariant}
\begin{proposition}\label{pd1}[Divisor Equation]
\begin{equation*}
[z_0^1]G_{rd}^\bullet(z_0,z_1\cdots,z_n,\hbar) = (rd-\frac{1}{24})G_{rd}^\bullet(z_1,z_2\cdots,z_n,\hbar).
\end{equation*}
\end{proposition}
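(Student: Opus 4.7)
The plan is to read the statement off directly from the operator formalism, by extracting the coefficient $[z_0^1]$ of the extra insertion $\E_0(\hbar z_0)$ in the correlator \eqref{Grd}. Using the definition \eqref{defE} together with the Taylor expansion $1/\zeta(w) = w^{-1} - w/24 + O(w^3)$, the coefficient of $z_0^1$ in
\[
\E_0(\hbar z_0) \;=\; \sum_{k\in\Z+\half} e^{\hbar z_0 k}\,{:}\psi_k\psi_k^*{:} \;+\; \frac{1}{\zeta(\hbar z_0)}
\]
splits into two pieces: the normal-ordered sum contributes $\hbar\sum_k k\,{:}\psi_k\psi_k^*{:} = \hbar\hat H$ (the energy operator), while the central regularization term contributes the scalar $-\hbar/24$. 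Thus $[z_0^1]\,\E_0(\hbar z_0) = \hbar\bigl(\hat H - \tfrac{1}{24}\bigr)$.

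Next, I will plug this operator identity into the correlator \eqref{Grd}. Since $\hat H$ commutes with every level-zero operator $\E_0(\hbar z_i)$, it can be moved past the other insertions and made to act on $\alpha_{-r}^d |0\rangle$. By Lemma \ref{proeigen}, that vector lies in the $\hat H$-eigenspace with eigenvalue $rd$, so $\hat H$ is replaced by the scalar $rd$ inside the vev, yielding
\[
[z_0^1]\,\Bigl\langle \alpha_1^{rd}\,\E_0(\hbar z_0)\prod_{i=1}^{n}\E_0(\hbar z_i)\,\alpha_{-r}^d\Bigr\rangle \;=\; \hbar\Bigl(rd-\tfrac{1}{24}\Bigr)\Bigl\langle \alpha_1^{rd}\prod_{i=1}^{n}\E_0(\hbar z_i)\,\alpha_{-r}^d\Bigr\rangle.
\]

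Finally, I divide by the normalization factor $1/\bigl(d!(rd)!\,r^d\,\hbar^{(r+1)d+n}\bigr)$ coming from \eqref{Grd}. Adding one extra insertion of $\E_0(\hbar z_0)$ adds precisely one extra factor of $\hbar$ to this normalization, which exactly cancels the extra $\hbar$ picked up above. This yields the asserted equation $[z_0^1]G_{rd}^\bullet(z_0,z_1,\ldots,z_n,\hbar) = (rd-\tfrac{1}{24})G_{rd}^\bullet(z_1,\ldots,z_n,\hbar)$. There is no substantive obstacle here; the only conceptually noteworthy point is that the characteristic $-\tfrac{1}{24}$ shift originates entirely from the central/regularization term $1/\zeta$ hidden in the definition of $\E_0$, which is the non-equivariant shadow of the analogous correction in the equivariant divisor equation of \cite{johnson2009equivariant}.
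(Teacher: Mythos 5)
Your proof is correct, but it takes a genuinely different route from the paper: the paper's proof of Proposition \ref{pd1} is a one-line citation, specializing the equivariant divisor equation (Proposition VI.1 of \cite{johnson2009equivariant}) by setting $m=0$, discarding the gerbe datum $\mathfrak{r}$, and passing to the non-equivariant limit. You instead prove the identity directly inside the operator formalism: from \eqref{defE} and $1/\zeta(w)=w^{-1}-w/24+O(w^3)$ you extract $[z_0^1]\,\E_0(\hbar z_0)=\hbar\bigl(H-\tfrac{1}{24}\bigr)$ (the $-\tfrac{1}{24}$ coming entirely from the central regularization term, with the convention $\hat r(c)=1$); since $[H,\E_0(\hbar z_i)]=0$, the energy operator slides past the other insertions and acts on $\alpha_{-r}^d\ket{0}$ with eigenvalue $rd$, and the extra $\hbar$ cancels against the one additional power of $\hbar$ in the prefactor of \eqref{Grd}. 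This is a complete argument within the paper's framework, because $G^\bullet_{rd}$ is defined there by the vev \eqref{Gdbu}/\eqref{Grd} rather than geometrically; had the statement been taken as a geometric fact about Gromov--Witten invariants, one would still need Johnson's GW/operator correspondence, which is precisely what the paper's citation leans on. The trade-off: your computation is self-contained and makes transparent where the characteristic $-\tfrac{1}{24}$ shift comes from, while the paper's specialization buys brevity and inherits the full equivariant statement for $W\PP^1(r,s)$. One small polish point: the eigenvalue step is more directly justified by $[H,\alpha_{-r}]=r\,\alpha_{-r}$ (or by the projector $P_{rd}$ already present in \eqref{Gdbu}) than by Lemma \ref{proeigen}, which as stated applies to the basis vectors $v_\lambda$; your invocation still works once combined with $\alpha_{-r}^d v_{\emptyset}=\sum_{|\lambda|=rd}\chi^{\lambda}_{(r)^d}v_\lambda$, since every term has $|\lambda|=rd$, but citing the commutator is cleaner.
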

\begin{proof}
Simply specialize the Proposition VI.1. of \cite{johnson2009equivariant} by taking $m=0$, neglecting $\mathfrak{r}$, which is trivial for our case, and taking the non-equivariant limit.
\end{proof}
Now let us calculate the Baker-Akhiezer function with $t\neq0$. In what follows we will drop the evaluation symbol $\big|_{t=t_0}$, and regard $t=t_0$ by a slight misuse of notation. Since $[\alpha_1,\psi(x)] = x\psi(x)$, and $\bra{1}\psi(x) = \bra{0} \Gamma_{+}(-\{x^{-1}\})$, we have
$$
\Psi(t,x) = e^{tx}\frac{\left<  \Gamma_{+}(-\{x^{-1}\}) e^{t\alpha_1}  |V\right>}{\braket{0|e^{t\alpha_1}|V}}.
$$
where $\{x^{-1}\}$ refers to the principal specialization $t_k = -\frac{1}{kx^k}$.

Recall $\ket{V} = W^{-1}e^{\frac{\alpha_1}{\hbar}}q^{H}e^{\frac{\alpha_{-r}}{r\hbar}}\ket{0}$ and
$$
W^{-1}\mathbf A[0]W = \frac{\alpha_{1}}{\hbar}, \, W^{-1}e^{\sum_i T_i \mathbf A[i]}W = e^{\sum_{i}\frac{T_i}{\hbar}\alpha_{i+1}},
$$
we have
\begin{equation*}
\begin{split}
\Psi(t,x) &= \frac{\left<  \Gamma_{+}(-\{x^{-1}\}) e^{t(\alpha_1+x)}  |V\right>}{\braket{0|e^{t\alpha_1}|V}} =  \frac{\left< \Gamma_{+}(-\{x^{-1}\}) e^{t(\alpha_1+x)} W^{-1}e^{\frac{\alpha_1}{\hbar}}q^{H}e^{\frac{\alpha_{-r}}{r\hbar}}\right>}{\braket{0|e^{t\alpha_1}|V}}\\
& = e^{tx}\frac{\left< \Gamma_{+}(-\{x^{-1}\}) W^{-1} e^{t\hbar\mathbf A[0]} e^{\frac{\alpha_1}{\hbar}} q^{H}e^{\frac{\alpha_{-r}}{r\hbar}}\right>}{\braket{0|e^{t\alpha_1}|V}}.
\end{split}
\end{equation*}
Since dressing op $W$ fixes the covacuum, we have
$$
\Psi(t,x) = e^{tx}\frac{\left< e^{t\hbar\mathbf A[0]}e^{\sum_i T_i\mathbf A[i]} e^{\frac{\alpha_1}{\hbar}} q^{H}e^{\frac{\alpha_{-r}}{r\hbar}}\right>^*}{\braket{0|e^{t\alpha_1}|V}}.
$$
After adding the unstable contribution by multipltiplication: $\Phi(t,x):=e^{(x-x\ln x-t\hbar x)/\hbar}\Psi(t,x)$, we have for wave function (Baker-Akhiezer function with unstable term included)
$$
\Phi(t,x)  =\frac{\left<e^{t\hbar\mathbf A[0]}e^{\mathbf A(-\hbar\partial_x)\ln x}q^He^{\frac{\alpha_{-r}}{r\hbar}}\right>}{\braket{0|e^{t\alpha_1}|V}} = \sum_{d=0}^{\infty}\frac{q^{rd}}{r^dd!\hbar^{d}}
\frac{\left< e^{t\hbar\mathbf A[0]}e^{\mathbf A(-\hbar\partial_x)\ln x} e^{\alpha_1}\alpha_{-r}^d \right>}{\braket{0|e^{t\alpha_1}|V}}.
$$
By divisor equation Proposition \ref{pd1}
\begin{equation*}
\Phi(t,x) =  \sum_{d=0}^{\infty}\frac{q^{rd}}{r^dd!\hbar^{d}}e^{t\hbar\left(rd-\frac{1}{24}\right)}\frac{\left< e^{\mathbf A(-\hbar\partial_x)\ln x} e^{\alpha_1}\alpha_{-r}^d \right>}{\braket{0|e^{t\alpha_1}|V}} ,
\end{equation*}
Recall
$$
e^{\alpha_1} \mathcal{E}_0(z) e^{-\alpha_{1}}  = \mathbf A(z,\hbar),
$$
The overall constant factor $e^{-t\frac{\hbar}{24}}$, will be cancelled out by the normalizing factor:

\begin{lemma}
$$
\braket{0|e^{t\alpha_1}|V} = e^{-\frac{t\hbar}{24}}.
$$
\end{lemma}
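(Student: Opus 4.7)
The plan is to reduce $\braket{0|e^{t\alpha_1}|V}$ to a specialization of the partition function $Z(\hbar,\bT)$ and then evaluate it via the divisor equation, mirroring the manipulation used immediately above the lemma to extract $e^{t\hbar(rd-1/24)}$ from the numerator of $\Phi(t,x)$.

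First, using the dressing identity $W\alpha_1 W^{-1} = \hbar\mathbf A[0]$, equivalently $e^{t\alpha_1}W^{-1} = W^{-1}e^{t\hbar\mathbf A[0]}$, together with $\braket{0|W^{-1}} = \braket{0|}$, I would rewrite
\begin{equation*}
\braket{0|e^{t\alpha_1}|V} = \braket{0|e^{t\alpha_1}W^{-1}e^{\alpha_1/\hbar}q^H e^{\alpha_{-r}/(r\hbar)}|0} = \left<e^{t\hbar\mathbf A[0]}\,e^{\alpha_1/\hbar}\,q^H\,e^{\alpha_{-r}/(r\hbar)}\right>.
\end{equation*}
This is precisely $Z(\hbar,\bT)$ specialized to $T_0=t\hbar$, $T_{k\geq 1}=0$, consistent with the tau-function interpretation of Corollary \ref{c2.1}.

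Second, I would Taylor-expand $e^{t\hbar\mathbf A[0]} = \sum_{n\geq 0}(t\hbar)^n\mathbf A[0]^n/n!$ and use $\mathbf A[0] = [z^1]\mathbf A(z,\hbar)$. Inserting the degree projector $P_{rd}$ between $e^{\alpha_1/\hbar}$ and $e^{\alpha_{-r}/(r\hbar)}$ and iterating the divisor equation (Proposition \ref{pd1}), each factor of $\mathbf A[0]$ multiplies the degree-$rd$ correlator by $(rd-\tfrac{1}{24})$. Resumming the exponential in $n$ yields
\begin{equation*}
\braket{0|e^{t\alpha_1}|V} = \sum_{d\geq 0} q^{rd}\,e^{t\hbar(rd-1/24)}\,\frac{\left<\alpha_1^{rd}\alpha_{-r}^d\right>}{d!(rd)!\,r^d\,\hbar^{(r+1)d}}.
\end{equation*}

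Third, I would split into cases. For $r>1$, the Heisenberg commutator $[\alpha_1,\alpha_{-r}] = \delta_{r,1} = 0$ implies $\left<\alpha_1^{rd}\alpha_{-r}^d\right>=0$ for every $d\geq 1$, so only $d=0$ contributes and the sum collapses to $e^{-t\hbar/24}$. For $r=1$, $\left<\alpha_1^d\alpha_{-1}^d\right>=d!$ and the sum reassembles as $e^{-t\hbar/24}\exp(qe^{t\hbar}/\hbar^2)$; the extra exponential is exactly the normalization factor that Remark \ref{rr=1} instructs us to strip off by hand in the $r=1$ case, leaving $e^{-t\hbar/24}$ as claimed.

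The main obstacle I anticipate is less computational than notational, since the paper carries several variants of the scaling relating $\mathbf A[0]$ to $\alpha_1/\hbar$ and of the KP/Gromov-Witten time correspondence. Once the identity $We^{t\alpha_1}W^{-1} = e^{t\hbar\mathbf A[0]}$ is installed and the convention pinned down, the divisor-equation bookkeeping and the case analysis in $r$ are routine.
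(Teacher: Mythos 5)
Your proposal is correct and is essentially the paper's own proof: both reduce $\braket{0|e^{t\alpha_1}|V}$ via the dressing identity and the divisor equation (Proposition \ref{pd1}) to $\sum_{d\geq 0} q^{rd}\, e^{t\hbar(rd-\frac{1}{24})}\braket{\alpha_1^{rd}\alpha_{-r}^d}\big/\bigl(d!(rd)!\,r^d\hbar^{(r+1)d}\bigr)$, then observe the vanishing of $\braket{\alpha_1^{rd}\alpha_{-r}^d}$ for $r>1$, $d\geq 1$ (you via the Heisenberg commutator $[\alpha_1,\alpha_{-r}]=0$, the paper via character orthogonality --- the same fact), and finally absorb the extra factor $e^{qe^{t\hbar}/\hbar^2}$ at $r=1$ into the normalization prescribed by Remark \ref{rr=1}, exactly as the paper does. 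Incidentally, your power $\hbar^{(r+1)d}$ is the correct one: the paper's displayed sum carries a typo ($\hbar^{d}$), since only $\hbar^{(r+1)d}$ reproduces the stated factor $e^{qe^{t\hbar}/\hbar^2}$ when $r=1$.
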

\begin{proof}
Recall the expression of $\ket{V}$ and divisor equation Proposition \ref{pd1}, we have
$$
\braket{0|e^{t\alpha_1}|V} = \sum_{d=0}^{\infty}\frac{q^{rd}}{r^dd!(rd)!\hbar^{d}}e^{t\hbar\left(rd-\frac{1}{24}\right)}\left<\alpha_1^{rd}\alpha_{-r}^d \right>.
$$
For the case $r\neq 1$, by orthogonality, we  arrive at
$$
\braket{0|e^{t\alpha_1}|V} = e^{-\frac{t\hbar}{24}}.
$$

For the case $r =  1$, recall there should be an additional normalization factor $e^{-\frac{qe^{t\hbar}}{\hbar^2}}$ in front of the Baker-Akhiezer function, and we have

$$
\braket{0|e^{t\alpha_1}|V} = \sum_{d=0}^{\infty}\frac{q^{d}}{(d!)^2\hbar^{d}}e^{t\hbar\left(rd-\frac{1}{24}\right)}\left<\alpha_1^{d}\alpha_{-1}^d \right> =e^{\frac{qe^{t\hbar}}{\hbar^2}}\cdot e^{-\frac{t\hbar}{24}}
$$

The first factor serves as the additional normalization factor as in Remarks \ref{rr=1}, which is the case for $t=0$ , thus we have completed the prove.
\end{proof}

Now we arrive at
$$
\Phi(t,x) =  \sum_{d=0}^{\infty}\frac{q^{rd}}{r^dd!(rd)!\hbar^{(r+1)d}}e^{t\hbar rd}\left< \alpha_1^{rd}e^{ \mathcal{E}_0(-\hbar\partial_x)\ln x}\alpha_{-r}^d \right> ,
$$
since the vev in the above expression is independent from $t$, which means we have the following $t$-evolution formula
\begin{lemma}\label{psiphi}[ $t$-evolution of wave function]
\begin{equation*}
\Phi(t,x) =  e^{t\hbar q\partial_q}\Phi(0,x).
\end{equation*}
\end{lemma}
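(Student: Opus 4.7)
The plan is to read off the lemma directly from the explicit formula for $\Phi(t,x)$ that has already been assembled just above. Indeed, the calculation preceding the lemma establishes
\begin{equation*}
\Phi(t,x) = \sum_{d=0}^{\infty}\frac{q^{rd}\,e^{t\hbar rd}}{r^d d!(rd)!\hbar^{(r+1)d}}\left\langle \alpha_1^{rd}\,e^{\mathcal{E}_0(-\hbar\partial_x)\ln x}\,\alpha_{-r}^d \right\rangle,
\end{equation*}
in which every dependence on $t$ has been isolated into the single scalar factor $e^{t\hbar rd}$ attached to the degree-$d$ summand. Specialising to $t=0$ (and comparing with the formula obtained in the previous section) gives
\begin{equation*}
\Phi(0,x) = \sum_{d=0}^{\infty}\frac{q^{rd}}{r^d d!(rd)!\hbar^{(r+1)d}}\left\langle \alpha_1^{rd}\,e^{\mathcal{E}_0(-\hbar\partial_x)\ln x}\,\alpha_{-r}^d \right\rangle.
\end{equation*}

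The first main step will be to identify the operator $q\partial_q$ as the degree operator in $q$: it acts diagonally on the monomial basis by $q\partial_q\,q^{rd} = rd\cdot q^{rd}$, and hence exponentiating yields $e^{t\hbar q\partial_q}\,q^{rd} = e^{t\hbar rd}\,q^{rd}$. The second step will be to apply this termwise to the series representation of $\Phi(0,x)$: since the vacuum expectation value in each summand is independent of $t$ and of $q$, the operator $e^{t\hbar q\partial_q}$ passes through all $x$-dependent factors and simply multiplies the $d$-th coefficient by $e^{t\hbar rd}$. Matching the resulting series with the displayed formula for $\Phi(t,x)$ establishes the identity.

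There is essentially no obstacle beyond verifying that the expansion in powers of $q$ is well-defined as a formal (or, after Theorem~\ref{onept}, asymptotic) series, so that the termwise action of $e^{t\hbar q\partial_q}$ is legitimate; this is automatic because the sum over $d$ is already treated as a formal power series in $q$ throughout Section~\ref{s2}. The only minor point of care concerns the case $r=1$, where, as in Remark~\ref{rr=1}, the Baker-Akhiezer function must be multiplied by the normalisation $e^{-qe^{t\hbar}/\hbar^2}$; but this factor was already accounted for in the computation of $\braket{0|e^{t\alpha_1}|V}$ in the lemma just above, so no additional adjustment is needed and the statement of the $t$-evolution lemma is uniform in $r$.
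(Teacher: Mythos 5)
Your proof is correct and takes essentially the same route as the paper: the paper likewise derives the formula in which the entire $t$-dependence of the degree-$d$ summand is isolated as the scalar factor $e^{t\hbar rd}$ multiplying a $t$-independent vacuum expectation value, and then reads off the lemma from the eigenvalue identity $e^{t\hbar q\partial_q}\,q^{rd}=e^{t\hbar rd}\,q^{rd}$. Your closing remark on the $r=1$ normalisation $e^{-qe^{t\hbar}/\hbar^2}$ also matches the paper's treatment, where that factor is produced by the computation of $\braket{0|e^{t\alpha_1}|V}$ in the lemma immediately preceding, so no separate adjustment is needed.
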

Recall the expression for the $t=0$ wave function in Theorem 2.1'
$$
\Phi(0,x) =(2\pi)^{\half}\sum_{d=0}^{\infty}\frac{(-1)^{d}q^{rd}}{r^dd!}\frac{\hbar^{-\frac{x}{\hbar}-(r+1)d}}{\Gamma(\frac{x}{\hbar}+rd+\half)} 
$$
Lemma. \ref{psiphi} gives
$$
\Phi(t,x) =(2\pi)^{\half}\sum_{d=0}^{\infty}\frac{(-1)^{d}q^{rd}e^{t\hbar rd}}{r^dd!}\frac{\hbar^{-\frac{x}{\hbar}-(r+1)d}}{\Gamma(\frac{x}{\hbar}+rd+\half)} 
$$

Thus we have the following expression for $t\neq0$ Baker-Akhiezer function
$$
\Psi(t,x) =(2\pi)^{\half}e^{(x\ln x - x+t\hbar x)/\hbar}\sum_{d=0}^{\infty}\frac{(-1)^{d}q^{rd}e^{t\hbar rd}}{r^dd!}\frac{\hbar^{-\frac{x}{\hbar}-(r+1)d}}{\Gamma(\frac{x}{\hbar}+rd+\half)} 
$$
The Baker-Akhiezer function satisfies the following property
\begin{lemma}\label{ll}
\begin{equation*}
\partial_t \Psi(t,x) = e^{(x\ln x - x+t\hbar x)/\hbar}e^{-\hbar\partial_x}e^{(-x\ln x + x-t\hbar x)/\hbar}\Psi(t,x),\quad\forall t\in\C.
\end{equation*}.
\end{lemma}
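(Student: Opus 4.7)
My plan is to prove Lemma \ref{ll} by direct computation, feeding the explicit closed form of $\Psi(t,x)$ from Theorem \ref{onept} into both sides and reducing the identity to a termwise check using the functional equation of $\Gamma$. Both sides admit expansions as sums indexed by $d\ge 0$; once we isolate the $d$-th summand, everything collapses to a finite-shift identity on Gamma functions, which is elementary.

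First I would differentiate the explicit formula in $t$. Only the factor $e^{tx}\cdot e^{t\hbar rd}$ inside the prefactor $e^{(x\ln x - x + t\hbar x)/\hbar}$ and the exponent of $q^{rd}e^{t\hbar rd}$ depends on $t$, so $\partial_t$ acts on the $d$-th summand of $\Psi(t,x)$ as multiplication by $(x+\hbar rd)$. This reduces $\partial_t\Psi$ to an explicit sum over $d$ of the same basic building blocks $\hbar^{-x/\hbar-(r+1)d}/\Gamma(x/\hbar+rd+\half)$.

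Next I would evaluate the right-hand side. Conjugating the shift operator $e^{-\hbar\partial_x}$ by the $x$-dependent prefactor $e^{(-x\ln x + x - t\hbar x)/\hbar}=e^{S_\un/\hbar}$ exactly produces the operator whose action on $\Psi$ equals $e^{S_\un(t,x)/\hbar - S_\un(t,x-\hbar)/\hbar}\cdot\Psi(t,x-\hbar)$. Expanding the exponent,
\[
\frac{S_\un(t,x)-S_\un(t,x-\hbar)}{\hbar} = \frac{-x\ln x + (x-\hbar)\ln(x-\hbar) + \hbar - t\hbar^2}{\hbar},
\]
combines with $\hbar^{-(x-\hbar)/\hbar}=\hbar\cdot\hbar^{-x/\hbar}$ and with the Gamma recursion $\Gamma(z)=(z-\half)\Gamma(z-\half)^{-1}\Gamma(z+\half)$, applied with $z = x/\hbar + rd + \half$, to yield exactly the multiplicative factor $(x+\hbar rd)$ produced by $\partial_t$ on each summand (modulo a purely formal adjustment of the unstable normalization, which matches the $+\hbar/2$ appearing in the quantum curve \eqref{QC}).

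An alternative proof route, which I think is cleaner and avoids manipulating the logarithmic prefactor, is to combine Theorem \ref{t2.2} with Lemma \ref{psiphi}. From the explicit expansion in Theorem \ref{onept} one checks that $q^r e^{rt\hbar}\,e^{r\hbar\partial_x}\Phi(t,x) = -\partial_t\Phi(t,x)$ (since $e^{r\hbar\partial_x}$ shifts the index $d\mapsto d+1$ in the sum, and this coincides up to $-\hbar rd$ factors with $\hbar q\partial_q$). Feeding this into the quantum curve $\hat H_t\Phi=0$ gives $\partial_t\Phi=\bigl(e^{-\hbar\partial_x}-x+\tfrac\hbar2\bigr)\Phi$. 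Converting $\Phi = e^{S_\un/\hbar}\Psi$ back to $\Psi$, noting that $\partial_t S_\un/\hbar = -x$, produces the stated conjugated-shift identity.

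The main obstacle is keeping track of the finite shift $x\mapsto x-\hbar$ acting on the non-polynomial prefactor $e^{(x\ln x - x + t\hbar x)/\hbar}$: it must be shown that after the Gamma recursion the remaining $\hbar$-correction is exactly the one already encoded in the quantum curve, and that the asymptotic expansion interpretation (as $x\to\infty$, $|\arg x|<\pi$) is preserved by the difference operator. Both points are straightforward once the alternative route is pursued, since Theorem \ref{t2.2} has already packaged these compatibility conditions.
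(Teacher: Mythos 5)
Your first route is essentially the paper's own proof: the paper establishes Lemma \ref{ll} by exactly this direct computation, differentiating the closed form of $\Psi(t,x)$ from Theorem \ref{onept} term by term in $t$ (which multiplies the $d$-th summand by $(x+\hbar rd)$) and comparing with the conjugated shift operator acting on the same expansion. So in outline you have reconstructed the intended argument. Your second route --- feeding $q^re^{rt\hbar}e^{r\hbar\partial_x}\Phi=-\partial_t\Phi$ into the quantum curve and converting back to $\Psi$ --- is genuinely different from what the paper writes and is not circular, since Theorem \ref{t2.2} at $t=0$ and the evolution Lemma \ref{psiphi} are established before Lemma \ref{ll}; it has the merit of making the $\hbar$-bookkeeping mechanical rather than hidden in the logarithmic prefactor.

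Three local points in your write-up need repair, the last of which is substantive. First, the recursion you quote, $\Gamma(z)=(z-\half)\Gamma(z-\half)^{-1}\Gamma(z+\half)$, is false as an identity (its right-hand side equals $(z-\half)^2$, since $\Gamma(z+\half)=(z-\half)\Gamma(z-\half)$); the shift $x\mapsto x-\hbar$ lowers the argument of the Gamma factor by exactly $1$, so all you need is $\Gamma(u)=(u-1)\Gamma(u-1)$ with $u=\frac{x}{\hbar}+rd+\half$. Second, the conjugated operator acts on $\Psi$ as $e^{(S_\un(t,x-\hbar)-S_\un(t,x))/\hbar}\,\Psi(t,x-\hbar)$, i.e.\ with the exponent of opposite sign to the one you wrote. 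Third, carrying out the computation exactly, the shift multiplies the $d$-th summand by $\left(x+\hbar rd-\frac{\hbar}{2}\right)$, not by $(x+\hbar rd)$: one finds $\partial_t\Psi=e^{-S_\un/\hbar}\left(e^{-\hbar\partial_x}+\frac{\hbar}{2}\right)e^{S_\un/\hbar}\Psi$, so the bare identity displayed in the lemma holds only up to this constant term, and your claim that the conjugation yields ``exactly the multiplicative factor $(x+\hbar rd)$'' is false as stated. Your own second route confirms the correction, since it produces $\partial_t\Phi=\left(e^{-\hbar\partial_x}-x+\frac{\hbar}{2}\right)\Phi$ rather than the bare conjugated shift. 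This is the same $\pm\frac{\hbar}{2}$ that appears in the paper's own proof display (which writes $e^{-\hbar\partial_x}-\frac{\hbar}{2}$) and in its later use $\partial_t\Psi|_{t=0}=\left(D_x-\frac{\hbar}{2}\right)\Psi(0,x)$, so your instinct that the constant is absorbed somewhere is consistent with the paper's practice; but ``modulo a purely formal adjustment of the unstable normalization'' is not a proof step --- the unstable term $S_\un$ is fixed once and for all, and as written your first route asserts an equality whose two sides differ by $\frac{\hbar}{2}\Psi$. Make the constant explicit and the argument is complete.
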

\begin{proof}
By definition we have
\begin{equation*}
\begin{split}
\partial_t \Psi(t,x)& = (2\pi)^{\half}e^{(x\ln x - x+t\hbar x)/\hbar}\sum_{d=1}^{\infty}\frac{(-1)^{d}q^{rd}e^{t\hbar rd}(x+\hbar rd)}{r^dd!}\frac{\hbar^{-\frac{x}{\hbar}-(r+1)d}}{\Gamma(\frac{x}{\hbar}+rd+\half)}\\
& =e^{(x\ln x - x+t\hbar x)/\hbar}\left(e^{-\hbar\partial_x}-\frac{\hbar}{2}\right)e^{(-x\ln x + x-t\hbar x)/\hbar}\Psi(t,x)
\end{split}
\end{equation*}.
\end{proof}

\begin{remark}
Now one can see Theorem 2.1' and Theorem 2.2' can be naturally generalized to the case $t\neq0$ by using the evolution Lemma \ref{psiphi}.
\end{remark}
\subsection{Admissible basis and The lifting operator }
\begin{definition}
A set of admissible basis for $\ket{V}$, is a set of vectors $\{\phi_k\},\,k\in\mathbb N$ with
$$
\phi_k = x^k+\sum_{i=1-k}^{\infty} b_ix^{-i},\quad\forall k\in\mathbb N,
$$
and $\ket{V} =\underline{\phi_0}\wedge
\underline{\phi_1}\wedge\underline{\phi_2}\wedge\cdots $
\end{definition}

Apparently, the admissible basis for any given $\ket{V}$ is not unique. Then we can define the following important operator

\begin{definition}
A lifting operator $D_x$ for a KP hierarchy corresponding to $\ket{V}=\underline{\phi_0}\wedge
\underline{\phi_1}\wedge\underline{\phi_2}\wedge\cdots$ in Sato's semi-infinite Grassmannian, is a KS operator, that satisfies the following property
\begin{equation}\label{dkmphi}
D_x^k\phi_m = \mathrm{span}\{\phi_0,\phi_1,\cdots,\phi_k+m\},\quad\forall k,m\in\mathbb{N}.
\end{equation}
for arbitrary admissible basis $\{\phi_k\}$, and we have

$$
D_x^k\phi_m = x^{k+m}+ O(x^{k+m-1}).
$$
\end{definition}
Note the requirement for $D_x$ to be a KS operator is included in property Eq.~(\ref{dkmphi}). In our case, we have
\begin{proposition}
The following operator is a lifting operator for KP hierarchy $V$
$$
D_x := e^{\frac{x}{\hbar}\ln x - \frac{x}{\hbar}}\left(e^{-\hbar\partial_x}\right)e^{-\frac{x}{\hbar}\ln x + \frac{x}{\hbar}}
$$
\end{proposition}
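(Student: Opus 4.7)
The plan is to exploit the factorization $D_x = e^{-S_{\un}(0,x)/\hbar}\,e^{-\hbar\partial_x}\,e^{S_{\un}(0,x)/\hbar}$, which identifies $D_x$ as a shift operator conjugated by the unstable exponential, and then to bootstrap from Proposition~\ref{pba} via the $t$-evolution of the Baker--Akhiezer function. A one-line commutation yields $e^{tx}\,D_x\,e^{-tx}=e^{t\hbar}D_x$, so Lemma~\ref{ll} rewrites cleanly as the evolution equation $\partial_t\Psi(t,x)=e^{t\hbar}D_x\Psi(t,x)$. Iterating this identity and evaluating at $t=0$ yields, by induction on $k$, an upper-triangular expression of $\partial_t^k\Psi|_{t=0}$ as a $\mathbb{Z}[\hbar]$-linear combination of $D_x^j\Psi(0,x)$ with $0\le j\le k$, in which $D_x^k\Psi(0,x)$ appears with coefficient one. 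Since every $\partial_t^k\Psi|_{t=0}$ lies in $V$ by Proposition~\ref{pba}, triangular inversion places $D_x^k\Psi(0,x)\in V$ for every $k\ge 0$.

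The next step is a leading-order analysis of $D_x$. Rewriting $D_x f(x) = \exp\!\bigl((S_{\un}(0,x-\hbar)-S_{\un}(0,x))/\hbar\bigr)\,f(x-\hbar)$ and Taylor expanding $S_{\un}(0,x-\hbar)$ around $x$ in powers of $\hbar/x$, the scalar prefactor admits the asymptotic expansion $x\,(1+O(x^{-1}))$. Combined with $\Psi(0,x)\sim 1+O(x^{-1})$, which is visible from the closed form derived in Section~\ref{s2}, this forces $\psi_k:=D_x^k\Psi(0,x)$ to have leading asymptotic term $x^k$. Hence $\{\psi_k\}_{k\ge 0}$ is an admissible basis of $V$: each $\psi_k\in V$ has leading order $x^k$, which matches the characterization of admissibility.

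For this distinguished basis the lifting axioms are tautological. The relation $D_x\psi_m=\psi_{m+1}\in V$ gives the KS property $D_xV\subset V$, and $D_x^\ell\psi_m=\psi_{m+\ell}$ both has leading term $x^{m+\ell}$ and lies in $\mathrm{span}\{\psi_0,\dots,\psi_{m+\ell}\}$. Any other admissible basis $\{\phi_k\}$ differs from $\{\psi_k\}$ by a lower-unitriangular transformation, because $\phi_k-\psi_k$ lies in $V$ with leading degree strictly less than $k$; both lifting axioms transfer immediately through such a change of basis. The principal obstacle is the middle step: one must verify carefully that the conjugation prefactor expands as $x+O(1)$ at infinity, so that $D_x$ raises the asymptotic degree by exactly one. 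Once this is in hand, the remainder is formal bookkeeping powered by Lemma~\ref{ll} and Proposition~\ref{pba}.
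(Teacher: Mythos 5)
Your proposal is correct and follows essentially the same route as the paper: both identify $D_x$ as the shift operator conjugated by the unstable exponential, use Lemma~\ref{ll} together with Proposition~\ref{pba} to place $D_x^k\Psi(0,x)$ in $V$, establish that $D_x$ raises the leading degree by exactly one, and transfer the lifting axioms to an arbitrary admissible basis via a unitriangular change of basis (the paper's Lemma~\ref{puiu}). Your treatment is in fact slightly more careful than the paper's at two points --- the commutation $e^{tx}D_x e^{-tx}=e^{t\hbar}D_x$ makes $\partial_t^k\Psi|_{t=0}$ a genuinely triangular combination of the $D_x^j\Psi(0,x)$ rather than the pure power $(D_x-\frac{\hbar}{2})^k$ the paper asserts, and you verify explicitly the prefactor expansion $x\,(1+O(x^{-1}))$ that the paper dismisses as ``direct calculation'' --- but these are refinements of the same argument, not a different one.
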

\begin{proof}
Since Lemma \ref{ll} is valid for all $t$,  setting $t=0$ gives
$$
\partial_t \Psi(t,x)\big|_{t = 0}=e^{\frac{x}{\hbar}\ln x - \frac{x}{\hbar}}\left(e^{-\hbar\partial_x}-\frac{\hbar}{2}\right)e^{-\frac{x}{\hbar}\ln x + \frac{x}{\hbar}}\Psi(0,x) = \left(D_x-\frac{\hbar}{2}\right)\Psi(0,x).
$$
Furthermore, by Proposition \ref{pba}, we have
$$
\left(D_x-\frac{\hbar}{2}\right)^k\Psi(0,x) = \left(D_x-\frac{\hbar}{2}\right)^k\phi_0 \in V.
$$
Moreover, by direct calculation we have $D_x$ increase positive degree of any series in $\C[t]\otimes\C[[t^{-1}]]$ by one, which means
$$
D_x\phi_0 = (\partial_t+\frac{\hbar}{2})\phi_0 \in \mathrm{span}\{\phi_0,\phi_1\}.
$$
Recursively, we have
$$
D_x^k\phi_m = D_x^k\left(x^{m}+O(x^{m-1})\right)  = x^{k}+O(x^{k-1})\in\mathrm{span}\{\phi_0,\phi_1,\cdots,\phi_k+m\},\quad\forall k,m\in\mathbb{N}.
$$

i.e. $D_x$ is an lifting operator .
\end{proof}
\begin{remark}
Proposition \ref{pba} does not enforce $\partial_t$ to be a KS operator, actually, it may only take the subspace $V' = \mathrm{span}\{\phi_0,\partial_t\phi_0,\cdots\}\subset V$ as its invariant subspace. However, when $\partial_t$ is a lifting operator, we have $V =  \mathrm{span}\{\phi_0,\partial_t\phi_0,\cdots\}$ i.e. $V'=V$, it is no doubt a KS operator.
\end{remark}
Thus an admissible basis for $V$ can be easily constructed via lifting operator $D_x$:
\begin{proposition}
We have an admissible basis $\phi_k:=D_x^{k}\phi_0$ for $V$, with the following explicit closed formula
$$
\phi_k =(2\pi)^{\half}e^{\frac{x}{\hbar}\ln x - \frac{x}{\hbar}}\sum_{d=0}^{\infty}\frac{(-1)^dq^{rd}}{r^dd!}\frac{\hbar^{-\frac{x}{\hbar}-(r+1)d+k}}{\Gamma(\frac{x}{\hbar}+rd+\half-k)}
$$
\end{proposition}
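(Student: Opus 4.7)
The plan is to compute $\phi_k = D_x^k\phi_0$ directly by exploiting the conjugation structure of the lifting operator together with the closed form of $\phi_0=\Phi(0,x)$ from Theorem~2.1'.

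First I would observe that since $D_x$ is a conjugate of the pure shift $e^{-\hbar\partial_x}$, its powers telescope to
$$
D_x^k = e^{\frac{x}{\hbar}\ln x - \frac{x}{\hbar}}\, e^{-k\hbar\partial_x}\, e^{-\frac{x}{\hbar}\ln x + \frac{x}{\hbar}}.
$$
Writing $\phi_0 = e^{(x\ln x - x)/\hbar} F(x)$ with
$$
F(x) := (2\pi)^{\half}\sum_{d=0}^{\infty}\frac{(-1)^d q^{rd}}{r^d d!}\frac{\hbar^{-x/\hbar - (r+1)d}}{\Gamma(x/\hbar + rd + \half)}
$$
by Theorem~2.1', the inner exponential strips off the Stirling-type prefactor, so it suffices to apply $e^{-k\hbar\partial_x}$ to $F$ term by term. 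This is the substitution $x \mapsto x - k\hbar$, which sends $\hbar^{-x/\hbar}\mapsto \hbar^{-x/\hbar + k}$ and $\Gamma(x/\hbar + rd + \half)\mapsto \Gamma(x/\hbar + rd + \half - k)$. Reinstating the prefactor $e^{(x\ln x - x)/\hbar}$ produces exactly the formula claimed for $\phi_k$.

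It remains to verify that $\{\phi_k\}$ satisfies the admissibility condition $\phi_k = x^k + \sum_{i\geq 1-k} b_i\, x^{-i}$. For this I would analyze the $d=0$ summand using the identity
$$
\frac{1}{\Gamma(x/\hbar + \half - k)} = \frac{\prod_{i=1}^{k}(x/\hbar - i + \half)}{\Gamma(x/\hbar + \half)},
$$
together with the Stirling asymptotic $\Gamma(x/\hbar+\half)\sim (2\pi)^{\half}(x/\hbar)^{x/\hbar}e^{-x/\hbar}$, which exactly cancels the factors $(2\pi)^{\half}$, $\hbar^{-x/\hbar}$, and $e^{(x\ln x - x)/\hbar}$, leaving $x^k$ at leading order together with a tail in negative powers of $x$. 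The terms with $d\geq 1$ contribute an overall factor $q^{rd}\hbar^{-(r+1)d+k}$ and, after the same cancellations, are of strictly lower order in $x$. Admissibility is in any case already implicit in the preceding proposition identifying $D_x$ as a lifting operator, so this check is essentially a consistency verification.

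The only mildly subtle step is the interchange of the differential operator $e^{-k\hbar\partial_x}$ with the infinite asymptotic sum. This is legitimate because each summand is meromorphic in the sector $|\arg x|<\pi$ on which Theorem~2.1' applies, the finite integer shift $x\mapsto x - k\hbar$ preserves this sector asymptotically, and the action is compatible with the $t$-evolution of the Baker--Akhiezer function guaranteed by Lemma~\ref{psiphi} to stay inside $V$. This is the main obstacle; once granted, the computation is purely mechanical.
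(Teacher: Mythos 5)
Your proposal is correct and takes essentially the same route as the paper: the paper's proof likewise applies $D_x^k$ (the conjugated shift $e^{-k\hbar\partial_x}$, acting term by term as $x\mapsto x-k\hbar$ on the series from Theorem 2.1') to obtain the replacement $\hbar^{-\frac{x}{\hbar}-(r+1)d}\mapsto\hbar^{-\frac{x}{\hbar}-(r+1)d+k}$ and $\Gamma(\frac{x}{\hbar}+rd+\half)\mapsto\Gamma(\frac{x}{\hbar}+rd+\half-k)$. Your admissibility verification and the remark on interchanging the shift with the asymptotic sum are additions the paper leaves implicit, relying as you note on the preceding proposition that $D_x$ is a lifting operator.
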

\begin{proof}
	Note that $\phi_0 = \Psi(0,x)$.
By Theorem \ref{onept}, we have
\begin{align*}
\phi_k(x)  = D^{k}_x\phi_0(x)=& \ (2\pi)^{\half}D^{k}_x\sum_{d=0}^{\infty}\frac{(-1)^dq^{rd}}{r^dd!}\frac{\hbar^{-\frac{x}{\hbar}-(r+1)d}}{\Gamma(\frac{x}{\hbar}+rd+\half)}\\
= & \ (2\pi)^{\half}\sum_{d=0}^{\infty}\frac{(-1)^dq^{rd}}{r^dd!}\frac{\hbar^{-\frac{x}{\hbar}-(r+1)d+k}}{\Gamma(\frac{x}{\hbar}+rd+\half-k)}.
\end{align*}
\end{proof}

For future convenience, by using Corrolary \ref{rr} we can rewrite the above expression as
\begin{equation*}
\phi_k(x) =  \rho(x) \sum_{d=0}^{\infty}\frac{(-1)^dq^{rd}}{d!r^d\hbar^{(r+1)d}}x_{[rd-k]},
\end{equation*}
where for $k\in \mathbb Z$, we define $x_{[k]}, y_{[k]}$ as power series
\begin{equation}\label{defforxk}
x_{[k]}:= \frac{\Gamma(\frac{x}{\hbar}+\frac{1}{2})}{\Gamma(\frac{x}{\hbar}+i+\frac{1}{2})} \in \mathbb Q[[\hbar x^{-1}]],\quad
y_{[k]}:= \frac{\Gamma(\frac{y}{\hbar}-i+\frac{1}{2})}{\Gamma(\frac{y}{\hbar}+\frac{1}{2})}\in \mathbb Q[[\hbar y^{-1}]]
\end{equation}
and we define $\rho(x)$ as power series 
\begin{equation}\label{defforrho}
\rho(x) =(2\pi)^{\half}\frac{\hbar^{-\frac{x}{\hbar}}}{\Gamma(\frac{x}{\hbar}+\frac{1}{2})}e^{\frac{x}{\hbar}\ln x-\frac{x}{\hbar}}  \in \mathbb Q[[\hbar x^{-1}]]
\end{equation}

\subsection{Canonical basis and bilinear Fermionic form} \label{canonicalbasis}
\begin{definition}
The canonical basis for $\ket{V}$, is a set of vectors $\{\tilde\phi_k\},\,k\in\mathbb N$ with
$$
\tilde\phi_k = x^k+\sum_{j=1}^{\infty}B^{+-}_{-k+\half,j-\half}x^{-j},\quad\forall k\in\mathbb N,
$$
and $\ket{V} =\underline{\tilde\phi_0}\wedge
\underline{\tilde\phi_1}\wedge\underline{\tilde\phi_2}\wedge\cdots $
\end{definition}

One can easily see such a basis is uniquely defined. The canonical basis is naturally linked to the Fermionic two point function, which defined as

$$
B(x,y) : =\frac{\braket{0|\psi^*(x)\psi(y)|V}}{\braket{0|V}}.
$$

And the canonical basis is \cite{alexandrov2015enumerative}:

$$
\tilde\phi_i  = x^{i} - \frac{\braket{0|\psi(x)\psi^*_{k}|V}}{\braket{0|V}} = x^{i}+\sum_{j=0}^{\infty}B^{+-}_{-i+\half,j-\half}x^{-j}, \quad\forall i\in\mathbb N
$$

where
$$
B^{+-}_{-i+\half,j-\half} := [ x^{-i} y^{-j} ]B(x,y).
$$

However, if one approaches like this,  he/she has to solve the two point function, which is not an easy task usually. Another approach, is start from an arbitrary admissible basis.\par

As we have mentioned, the canonical basis can be calculated from any admissible basis, by a kind of \textit{orthonormalization}. Which comes from the following Lemma

\begin{lemma}\label{puiu}
Any two sets of admissible basis $\{\phi^1_k\},\,\{\phi^2_k\}$ differs from each other by an right multiplication of upper unitriangular matrix $M^{1,2}$
$$
(\cdots0,0,\phi^1_0,\phi^1_1,\phi^1_2,\cdots)\cdot M^{1,2}= (\cdots0,0,\phi^2_0,\phi^2_1,\phi^2_2,\cdots)
$$
\end{lemma}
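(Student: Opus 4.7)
The plan is to exploit the specific shape of an admissible basis — each $\phi_k$ has leading term $x^k$ with unit coefficient and a tail of strictly lower $x$-degree — and extract upper unitriangularity by a standard triangular-elimination argument.

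First, since $\{\phi^1_k\}$ and $\{\phi^2_k\}$ are admissible bases of the same $|V\rangle$, the infinite wedges $\phi^1_0 \wedge \phi^1_1 \wedge \cdots$ and $\phi^2_0 \wedge \phi^2_1 \wedge \cdots$ represent the same point in Sato's Grassmannian, so the two collections span the same closed subspace of (a suitable completion of) $V$. In particular, there exists a matrix $M^{1,2}=(M_{km})_{k,m\geq 0}$ with
$$
\phi^2_m \ = \ \sum_{k\geq 0} M_{km}\,\phi^1_k, \qquad m\in\mathbb{N}.
$$
This is our candidate change-of-basis matrix; the task is to identify its shape.

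Next, I would use the degree bookkeeping built into the definition of admissible basis: by the formula $\phi_k = x^k + \sum_{i=1-k}^{\infty} b_i x^{-i}$, every admissible basis element $\phi^{\alpha}_k$ satisfies $\phi^{\alpha}_k = x^k + O(x^{k-1})$, with no monomials of degree exceeding $k$. Comparing the highest-degree monomial on both sides of the change-of-basis identity, any nonzero $M_{km}$ with $k>m$ would force a contribution $M_{km}\,x^k$ in $\phi^2_m$ that cannot be cancelled by any $\phi^1_j$ with $j<k$; but $\phi^2_m$ carries no $x^{k}$-term for $k>m$. Hence $M_{km}=0$ whenever $k>m$, and matching the coefficient of $x^m$ yields $M_{mm}=1$. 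This gives upper unitriangularity.

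The only point requiring a brief check — and what might at first look like an obstacle — is that the equality is genuine at the level of the semi-infinite wedge (not merely up to a scalar), i.e.\ that no additional normalization arises. But this is automatic: because $M^{1,2}$ is upper unitriangular, its ``determinant'' (interpreted through the resulting lower-triangular action on the shifted vacuum) equals $1$, so
$$
\phi^2_0\wedge \phi^2_1\wedge \cdots \ = \ \det(M^{1,2})\cdot \phi^1_0\wedge \phi^1_1\wedge \cdots \ = \ \phi^1_0\wedge \phi^1_1\wedge\cdots,
$$
consistent with both sides representing the same $|V\rangle$. Thus the argument closes and the lemma follows. The only real work is the leading-coefficient comparison; everything else is bookkeeping.
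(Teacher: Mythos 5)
Your proof is correct and is essentially the paper's argument in expanded form: the paper's two-line proof just asserts $\mathrm{Span}\{\phi^1_0,\dots,\phi^1_k\}=\mathrm{Span}\{\phi^2_0,\dots,\phi^2_k\}$ for all $k\in\mathbb{N}$ and reads off the upper unitriangular matrix, and that flag equality is exactly what your leading-coefficient bookkeeping establishes. The only detail worth making explicit is that the expansion $\phi^2_m=\sum_{k\geq 0}M_{km}\phi^1_k$ is necessarily a finite sum (since each $\phi^1_k$ has top degree exactly $k$, an infinite sum would have unbounded degree and fail to converge coefficient-wise in $\mathbb{C}[x]\otimes\mathbb{C}[[x^{-1}]]$), so a maximal $k$ with $M_{km}\neq 0$ exists and your uncancellable-$x^k$ comparison should be applied to that maximal index, ruling out cancellation from $\phi^1_j$ with $j>k$ as well as $j<k$.
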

\begin{proof}

By the definition of admissible basis, we have

$$
\mathrm{Span}\{\phi^1_0,\phi^1_1,\cdots,\phi^1_k\} = \mathrm{Span}\{\phi^2_0,\phi^2_1,\cdots,\phi^2_k\}, \quad\forall k\in\mathbb N.
$$
which equivalent to say there is a  upper unitriangular matrix $M^{1,2}$ connects them by right multiplication.
\end{proof}
In addition, by Lemma \ref{puiu}, one can see, right multiplying any upper unitriangular matrix will not change the $\ket{V}$. Now, by orthonormalization, we actually means there is a transformation matrix between any admissible basis  and canonical basis.\par

To do so, following from the definition we arrange the admissible basis vectors into the following infinite dimension matrix $A_{\infty\times\infty+}$, where $\infty+$ means semi-infinite, i.e. right half of an infinite dimensional matrix.
\begin{equation*}
A:= (A^{-+T},A^{--T})^T:=  (\phi_0,\phi_1,\phi_2\cdots),
\end{equation*}
where the coefficients of $\phi_k$ is labeled upward from the bottom, while the degree in $x$ decreasing. More precisely, 
$$
A^{--}_{i,j} = [x^{i-1}]\phi_{j-1}(x),\quad j\geq i\geq0, i,j\in\mathbb Z .
$$

Therefore, we have
\begin{equation}\label{Aminus}
A^{--}_{i,j} = [x^{i-1}]\rho(x)\sum_{d=0}^{\infty}\frac{(-1)^dq^{rd}}{r^dd!\hbar^{(r+1)d}}x_{[rd-j+1]} ,
\end{equation}
and
\begin{equation}\label{Aplus}
A^{+-}_{-i,j} = [x^{-i}]\rho(x)\sum_{d=0}^{\infty}\frac{(-1)^dq^{rd}}{r^dd!\hbar^{(r+1)d}}x_{[rd-j+1]} = \sum_{rd>j-1}\frac{(-1)^dq^{rd}}{r^dd!\hbar^{(r+1)d}}[x^{-i}]\rho(x)x_{[rd-j+1]}.
\end{equation}
Here we recall $\rho(x)$ and $x_{[k]}$ are defined in \eqref{defforrho} and \eqref{defforxk}.

Similarly, for canonical basis $\tilde{\phi}_k$, one can define the matrix $B=(B^{-+T},\Id^{++T})^T$, $B = (\tilde{\phi}_0,\tilde{\phi}_1,\cdots)$. Then the orthonormalization can be realized via
\begin{equation}\label{atob}
B = (\tilde{\phi}_0,\tilde{\phi}_1,\cdots) = (\phi_0,\phi_1,\cdots)\cdot\left(A^{--}\right)^{-1}.
\end{equation}
which has a closed form
\begin{theorem}\label{pB}
$B=(B^{+-T},\Id^{--})^T$, where
for $k\in \mathbb Z$ $x_{[k]}, y_{[k]}$ are defined in \eqref{defforxk} as formal series in $x^{-1}$ and $y^{-1}$,
and 
$\rho$ is defined in \eqref{defforrho} as formal series in $x^{-1}$.
\end{theorem}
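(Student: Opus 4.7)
The decomposition $B^{--} = \Id$ is essentially tautological: the upper block of $B = A\cdot (A^{--})^{-1}$ from \eqref{atob} equals $A^{--}(A^{--})^{-1} = \Id$ by construction. The entire content of the theorem therefore lies in evaluating the lower block $B^{+-} = A^{+-}(A^{--})^{-1}$ in closed form in terms of the building blocks $\rho(x)$, $\rho(y)^{-1}$, $x_{[k]}$, $y_{[k]}$.

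My strategy is to exploit the factorization $\phi_k(x) = \rho(x) \sum_d c_d\, x_{[rd-k]}$ with $c_d = (-1)^d q^{rd}/(d!\, r^d\, \hbar^{(r+1)d})$. The elementary observation is that $x_{[m]}$ is a polynomial in $x/\hbar$ of degree $|m|$ when $m \leq 0$, and a series in $\hbar/x$ of leading order $(\hbar/x)^m$ when $m>0$. Consequently only the terms with $rd \leq k$ contribute to the polynomial part of $\phi_k$, while \emph{every} $d \geq 0$ contributes to the negative-degree tail. This is what makes $A^{--}$ upper unitriangular in \eqref{Aminus} and confines the sum in \eqref{Aplus} for $A^{+-}$ to indices $rd \geq j$.

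Next I would invert $A^{--}$ column by column. Fixing $k$ and writing $\tilde\phi_k = \sum_{j \leq k} c_{j,k}\,\phi_j$, the coefficients $c_{j,k}$ are uniquely determined by the triangular linear system that forces the polynomial part of $\tilde\phi_k$ to equal $x^k$. Organizing this expansion by powers of $q^r$ splits $A^{--} = A_0 + \sum_{d \geq 1} q^{rd} A_d$, where $A_0$ comes from the $d=0$ contribution $\rho(x)\,x_{[-k]}$ and is an explicit unipotent matrix built from $\rho$ and the descending factorials. Inverting $A_0$ in closed form via the standard identities linking $\rho$ and $x_{[\cdot]}$, and then propagating through a formal geometric series in the perturbations $q^{rd} A_d$, would produce $(A^{--})^{-1}$ in a form amenable to symbolic multiplication with $A^{+-}$.

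Finally, forming $B^{+-} = A^{+-}(A^{--})^{-1}$ and tracking indices, the double sum should collapse into the closed form stated in the theorem, and in turn reproduce the Fermionic two-point generating function $B(x,y) = \sum_{i,j} b_{i,j}\,x^{-i-1}y^{-j-1}$ of Theorem~3.1. The main obstacle will be the explicit inversion of $A^{--}$: since both its leading piece $A_0$ and its perturbations $A_d$ involve Gamma functions evaluated at shifted arguments, the closed form only emerges after a careful re-indexing that exposes a Vandermonde-type telescoping between the $x_{[rd-k]}$ factors and the polynomial part of $\rho(x) x_{[-k]}$. Once this algebraic identity is established, assembling $B^{+-}$ and matching it against Theorem~3.1 reduces to pure bookkeeping.
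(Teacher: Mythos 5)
Your reduction of the theorem to computing $B^{+-}=A^{+-}\left(A^{--}\right)^{-1}$, and the observation that the $\Id^{--}$ block is automatic from \eqref{atob}, are both correct and match the paper's starting point. But the proposal stops exactly where the proof has to begin. Your plan is to invert $A^{--}$ by splitting $A^{--}=A_0+\sum_{d\geq1}q^{rd}A_d$, inverting $A_0$ in closed form, and summing a geometric series in the perturbations. Carried out literally, this produces at each order in $q^r$ an alternating sum over all compositions $d=d_1+\cdots+d_\ell$ of products of the $A_{d_i}$ sandwiched between copies of $A_0^{-1}$, and nothing in your sketch shows how this multiple sum collapses to the single-$d$, single-$k$ structure of the stated $B^{+-}$; the phrases ``should collapse'' and ``reduces to pure bookkeeping'' are precisely the missing content. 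The collapse is also not a Vandermonde-type telescoping. In the paper's computation the mechanism is the convolution identity
\begin{equation*}
\sum_{m\geq1}[y^{-m}]\frac{1}{\rho(y)}\,y_{[r(d_1-k)+1-n]}\,[x^{m-1}]\rho(x)\,x_{[rd_2-j+1]}
=[y^{-1}]\prod_{p=j-1-rd_2}^{r(d_1-k)-n}\frac{\hbar}{y-p\hbar+\frac{\hbar}{2}},
\end{equation*}
whose right-hand side vanishes unless the boundary indices match, which pins $n=-j+1+r(d-k)$ and hence $k=\left[d-\frac{j}{r}\right]$ uniquely; the surviving sum over $d_2$ is then evaluated by the elementary binomial identity \eqref{cbl}, proved by induction. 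None of these three ingredients (the convolution identity, the index-pinning, the binomial identity) appears in your proposal, so as it stands there is a genuine gap rather than a proof.

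There is moreover a structural simplification that you set up but do not use. You correctly note that the triangular system determines the coefficients uniquely; since $A^{--}$ is upper unitriangular over the base ring, $B^{+-}$ is the \emph{unique} solution of $B^{+-}A^{--}=A^{+-}$. Because the closed form of $B^{+-}$ is already given in the statement, you never need $\left(A^{--}\right)^{-1}$ at all: it suffices to verify the product identity \eqref{AAB} with the conjectured $B^{+-}$ inserted, which is exactly what the paper does in Appendix III via the computation \eqref{lastl}. Verification replaces inversion and turns the geometric series you were dreading into one explicit double sum in which the three steps above actually get used. One smaller flag: your claim that only $rd\geq j$ contributes to $A^{+-}$, matching \eqref{Aplus}, needs justification, since $\rho(x)\,x_{[m]}$ for $m\leq0$ is not literally a polynomial ($\rho(x)=1+O(\hbar/x)$ has a nontrivial tail), so the convention making \eqref{Aplus} precise must be stated before the index bookkeeping in either approach can be trusted.
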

\begin{proof}
By Eq.~(\ref{atob}), this theorem follows from the following combinatorial identity
	\begin{equation} \label{AAB}
A^{+-}  = B^{+-}A^{--}.
\end{equation}
We will prove this identity in the Appendix III.
\end{proof}

\begin{appendix}

\section{Proof of several  combinatorial lemmas}
\renewcommand\thesubsection{\Roman{subsection}}
\setcounter{subsection}{0}

\subsection{Proof of Lemma \ref{t2.4}}
The following coordinates transformation will be used in this subsection for convenience,

\begin{definition}
By $\Delta$-transformation, we will refer to the following coordinate change: For $f\in\mathbb{C}[z_1,z_1^{-1},z_2,z_2^{-1}\cdots]$ and $f = z_{i_1}^{-n_1}z_{i_2}^{-n_2}\cdots z_{i_k}^{-n_k}$ as
\begin{equation*}
f^\Delta = (z_{i_1}^{n_1}z_{i_2}^{n_2}\cdots z_{i_k}^{n_k})^{\Delta} = x_{n_1-1}x_{n_1-1}\cdots x_{n_k-1},\quad i_m\in\mathbb{Z},
\end{equation*}
which can be linearly extended to the whole $\mathbb{C}[z_1,z_1^{-1},z_2,z_2^{-1}\cdots]$. Whenever we have a function in $z_i$ and superscript $\Delta$, the variable of which will be understood as $T_i$ but not $z_i$.
\end{definition}
Such an operation will forget the information of the position of the marked points and recollect the contribution of the same degree in $\psi$ class. \par

Now we will proof for $r\in\mathbb{Z}_+$, we have

\begin{lemma2.1}
The pseudo generating function $G(z,\hbar)$ has the following form for its $
\Delta$-transformation
\begin{equation*}
\begin{split}
G(z,\hbar)^\Delta &= \sum_{n=0}^{\infty}\sum_{d=0}^{\infty}\frac{q^{rd}}{n!}G_{rd}^\bullet(z_1,\cdots,z_n,\hbar)^\Delta \\
&=\sum_{n=0}^{\infty}\sum_{d=0}^{\infty}\sum_{P^*\in\mathrm{Part}^{*}_d[n]}\frac{1}{|\mathrm{Aut(P^*)}|}\prod_{i=1}^{l(P^*)}\frac{q^{r_id_i}}{n_i!d_i!(rd_i)!r^{d_i}\hbar^{(r+1)d_i+n_i}}\left<\alpha_1^{rd_i}\prod_{i=1}^{n_i}\mathcal{E}_0(\hbar z_{P^*_{i,1}})\alpha_{-r}^{d_i}\right>^{\circ\Delta}\\
&=\exp\left(\sum_{n=0}^{\infty}\sum_{d=0}^{\infty}\frac{q^{rd}}{n!d!(rd)!r^{d}\hbar^{(r+1)d+n}}\left<\alpha_1^{rd}\prod_{i=1}^{n}\mathcal{E}_0(\hbar z_i)\alpha_{-r}^{d}\right>^{\circ\Delta}\right),
\end{split}
\end{equation*}
where $P^*\in\mathrm{Part}^*_d[n]$ includes the following data
\begin{equation*}
P^*=\{(d_1,n_1),\cdots,(d_{l(P)},n_{l(P)})\},
\end{equation*}
\end{lemma2.1}\par
\begin{proof}
The disconnected vev $G_d^\bullet(z_1,\cdots,z_n,\hbar)$ is defined via
\begin{equation}\label{recursiveG}
G_d^\bullet(z_1,\cdots,z_n,\hbar)= \sum_{P\in\mathrm{Part}_d[n]}\frac{1}{|\mathrm{Aut}(P)|}\prod_{i=1}^{l(P)}G_{d_i}^{\circ}(z_{P_i},\hbar),
\end{equation}
We note the partition $P$ consists of the following data
\begin{equation*}
\{(d_1,P_1),\cdots,(d_{l(P)},P_{l(P)})\}.
\end{equation*}

Normally, if all $P_{i}=\{z_{i,1},\cdots,z_{i,l(P)}\}$ are not empty, then the automorphism is trivial. However, we are entitled to allow the unstable parts, therefore the 0-pint function is included, which is the only way that the automorphism is non-trivial. Note in the case of $\PP[r]$, we have, i.e.
$$
G_{rd}^\circ()=\frac{\delta_{d,1}\delta_{r,1}}{\hbar^{1+r}},\quad\forall d\geq0,
$$

which is generalized from those in OP 2006.Then we have $|\mathrm{Aut}(P)| = \frac{1}{m!}$. The next observation we need is

\begin{equation*}
G_{d}^{\circ}(z_{a_1},\cdots,z_{a_n},\hbar)^\Delta = G_{d}^{\circ}(z_{b_1},\cdots,z_{b_n},\hbar)^\Delta,\quad\forall z_1 = (z_{a_1},\cdots,z_{a_n}),z_2 = (z_{b_1},\cdots,z_{b_n}),
\end{equation*}

i.e. the position of the marked point does not matters. The same properties hold for connected functions. Therefore, for $\Delta$-transformed function, the partition $P$ is degenerating to $P^*\in\mathrm{Part}^*_d[n]$, which including the following data
\begin{equation*}
P^*=\{(d_1,n_1),\cdots,(d_{l(P)},n_{l(P)})\},
\end{equation*}

where $n_i$ is the number of the marking points, combining the parts that are identical after quantization will give an extra prefactor, which, together with the automorphism, is
\begin{equation*}
\frac{1}{|\mathrm{Aut}(P^*)|}\left(
\begin{aligned}
&n\\
n_1,n_2,&\cdots,n_{l(P)}
\end{aligned}
\right)
=\frac{1}{|\mathrm{Aut}(P^*)|}\frac{n!}{n_1!n_2!\cdots n_{l(P)}!},
\end{equation*}
where $\mathrm{Aut}(P^*)$ is now enlarged, i.e if there are $k$ parts of $(d_i,n_i)$ then $|\mathrm{Aut}(P^*)|$ will include a factor of $k!$. Pick up all these considerations, we have
\begin{equation*} 
\begin{aligned}
G(z,\hbar)^\Delta&=\sum_{n=0}^{\infty}\sum_{d=0}^{\infty}\frac{q^d}{n!} \sum_{P^*\in\mathrm{Part}^*_{d}[n]}\frac{1}{|\mathrm{Aut}(P^*)|}\frac{n!}{n_1!n_2!\cdots n_{l(P)}!}\prod_{i=1}^{l(P^*)}G_{d_i}^{\circ}(z_{P^*_i},\hbar)^\Delta\\
&=\sum_{n=0}^{\infty}\sum_{d=0}^{\infty}\sum_{P^*\in\mathrm{Part}^*_{d}[n]}\frac{1}{|\mathrm{Aut}(P^*)|}\prod_{i=1}^{l(P^*)}\frac{q^{d_i}}{n_i!}G_{d_i}^{\circ}(z_{P^*_i},\hbar)^\Delta
\end{aligned}
\end{equation*}
by the definition \eqref{recursiveG}, which is precisely the rhs of Eq.~(\ref{discon}).\par

\end{proof}

Note this Lemma is stand for both $r=1$ and $r\neq1$.

\subsection{Proof of the Identity \eqref{Xdidentity}}In this appendix we will prove
\begin{lemma} \label{lemA1}
For any $r,d \in\mathbb Z_+$, the following identity holds
\begin{equation*}
X_d :=\frac{1}{d!(rd)!r^{d}}\sum_{|\lambda|=rd}\chi^{\lambda}_{(1)^{rd}}\chi^{\lambda}_{(r)^d} \prod_{i=1}^{\infty}\frac{x+(i-\lambda_i)\hbar}{x+i\hbar}
 = \frac{(-1)^d}{r^dd!}\prod_{i=1}^{rd}\frac{\hbar}{x+i\hbar}:=\tilde{X}_d.
\end{equation*}
\end{lemma}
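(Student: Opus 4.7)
My plan is to reduce the identity to a polynomial identity in $u=x/\hbar$ and prove it in this form. Since $|\lambda|=rd$ implies $l(\lambda)\leq rd$, the infinite product $\prod_{i=1}^{\infty}(u+i-\lambda_i)/(u+i)$ truncates to $\prod_{i=1}^{rd}(u+i-\lambda_i)/\prod_{i=1}^{rd}(u+i)$, and clearing the common denominator reduces the claim to
\[
S_d(u):=\sum_{|\lambda|=rd}\dim(\lambda)\,\chi^\lambda_{(r)^d}\,\prod_{i=1}^{rd}\bigl(u+i-\lambda_i\bigr)\;=\;(-1)^d(rd)!,
\]
an equality between a polynomial in $u$ of degree at most $rd$ and a constant. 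I would then prove this by establishing (a) that $S_d(u)$ is independent of $u$, and (b) that its value at a convenient point equals $(-1)^d(rd)!$.

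For (a), the coefficient of $u^{rd-k}$ in $S_d(u)$ is $\sum_\lambda\dim(\lambda)\chi^\lambda_{(r)^d}\,e_k(j_1,\ldots,j_{rd})$ with $j_i=i-\lambda_i$ and $e_k$ the $k$-th elementary symmetric polynomial. In particular the leading coefficient ($k=0$) is $\sum_\lambda\dim(\lambda)\chi^\lambda_{(r)^d}$, which vanishes by column orthogonality of characters of $S_{rd}$ since $(1)^{rd}\neq(r)^d$ for $r>1$ (the $r=1$ case is absorbed into the $e^{-q/\hbar^2}$ normalization of Remark~\ref{rr=1}). Writing $S_d(u)=\bra{0}\alpha_1^{rd}\,\hat{Q}(u)\,\alpha_{-r}^d\ket{0}$ where $\hat{Q}(u)$ is the diagonal operator having eigenvalue $\prod_i(u+i-\lambda_i)$ on $\ket{\lambda}$, and using Newton's identities to express each $e_k$ as a polynomial in the power sums $p_l(j(\lambda))$, each $p_l$ equals (modulo a $\lambda$-independent constant) the eigenvalue on $\ket{\lambda}$ of a level-zero $W^{1+\infty}$ operator. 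Commuting these generators through $\alpha_{-r}$ and $\alpha_1$ via their standard commutation relations produces only further vevs of the form $\bra{0}\alpha_1^{rd'}\alpha_{-r}^{d'}\ket{0}$, which again vanish by the same orthogonality.

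For (b), evaluate at $u=-rd$: the factor $i-rd-\lambda_i$ at $i=rd$ forces $\lambda_{rd}\neq 0$, so $\lambda=(1^{rd})$ is the unique contributing partition. For this $\lambda$ one has $\dim(1^{rd})=1$, $\chi^{(1^{rd})}_{(r)^d}=(-1)^{d(r-1)}$ (the sign character applied to $d$ disjoint $r$-cycles), and $\prod_{i=1}^{rd}(i-1-rd)=(-1)^{rd}(rd)!$, whose product is $(-1)^{d(r-1)+rd}(rd)!=(-1)^d(rd)!$.

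The main obstacle is step (a): tracking the descendants produced by commuting the $W^{1+\infty}$ generators past $\alpha_{\pm r}$, $\alpha_{\pm 1}$ and verifying that each either vanishes by character orthogonality or cancels in pairs. An alternative that sidesteps the operator algebra is Lagrange interpolation: since $S_d(u)$ has degree $\leq rd$, verifying $S_d(-k)=(-1)^d(rd)!$ at the $rd+1$ integer points $k=0,1,\ldots,rd$ pins it down. At each $u=-k$ the vanishing of the factor at $i=k$ restricts the sum to partitions with $l(\lambda)\geq k$, and a direct Murnaghan--Nakayama expansion of $\chi^\lambda_{(r)^d}$ together with the hook-length formula for $\dim(\lambda)$ allows one to evaluate $S_d(-k)$ case by case.
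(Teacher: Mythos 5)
Your reduction of the lemma to the polynomial identity $S_d(u)=\sum_{|\lambda|=rd}\dim(\lambda)\,\chi^\lambda_{(r)^d}\prod_{i=1}^{rd}(u+i-\lambda_i)=(-1)^d(rd)!$ is the same first move as the paper's (their $L(x)$ carries a harmless shift by $\hbar/2$ coming from the half-integer indexing), and your endpoint evaluation (b) is correct and in fact slicker than the paper's treatment of the constant term: at $u=-rd$ only $\lambda=(1^{rd})$ survives, giving $1\cdot(-1)^{d(r-1)}\cdot(-1)^{rd}(rd)!=(-1)^d(rd)!$, whereas the paper computes $\langle 0|\alpha_1^{rd}(W_0^r)^d\alpha_{-r}^d|0\rangle=d!(rd)!\,r^d$ by a delicate analysis with restricted commutation relations. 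You also correctly flag the $r=1$ anomaly in the leading coefficient (the lemma as stated actually fails for $r=1$; compare $X_1=x/(x+\hbar)\neq-\hbar/(x+\hbar)=\tilde X_1$), which the paper handles via the $e^{-q/\hbar^2}$ normalization. However, step (a) — constancy of $S_d(u)$ — is the entire content of the lemma, and your argument for it contains a genuine gap.

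The assertion that commuting the level-zero $W$-generators through $\alpha_1$ and $\alpha_{-r}$ ``produces only further vevs of the form $\langle 0|\alpha_1^{rd'}\alpha_{-r}^{d'}|0\rangle$, which again vanish by the same orthogonality'' is false. By the commutation relation $[W_0^s,\alpha_n]=\sum_{i\ge 1}(-n)^i\binom{s}{i}W_n^{s-i}$, the process generates $W$-operators of nonzero energy, and full contraction leaves vevs of paired bosons $\langle 0|\alpha_k\alpha_{-k}|0\rangle=k\neq 0$ for various $k$; whether such pairings can arise depends quantitatively on the total level $s=\sum_j k_j$. Your mechanism nowhere uses the hypothesis $s\le rd-1$, so if it were valid it would apply equally at $s=rd$ and kill the constant term as well, forcing $S_d\equiv 0$ — contradicting your own evaluation $S_d(-rd)=(-1)^d(rd)!\neq 0$. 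This internal inconsistency shows the step is wrong as stated, not merely unfinished. What fills the hole in the paper is precisely the counting argument of Appendix II: each boson pair $\alpha_k\alpha_{-k}$ must be assembled from ``sources'' ($rd$ copies of $\alpha_1$, $d$ copies of $\alpha_{-r}$) through positive-level ``bridges,'' yielding the inequalities $n_1+n_2\le d$, $2(n_1+n_2)\ge rd+d+n-s$, $n_1+n_2\le n$, $(r+1)n_1+rn_2\le s$, which are jointly contradictory exactly when $s\le rd-1$. Your fallback via Lagrange interpolation does not rescue the argument: at small $k$ (already $k=0,1$) the constraint $l(\lambda)\ge k$ is vacuous, so evaluating $S_d(-k)$ by a ``case by case'' Murnaghan--Nakayama expansion is as hard as the original identity. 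If you proved constancy by the paper's counting argument and kept your evaluation at $u=-rd$, you would obtain a proof that is cleaner than the paper's in its second half.
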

We define $L(x)$ by
\begin{equation*}
L(x) = \sum_{|\lambda|=rd}\frac{\chi^{\lambda}_{(r)^d}\chi^{\lambda}_{(1)^{rd}}}{(rd)!}\prod_{i=1}^{rd}\left(x+(i-\lambda_i-\frac{1}{2})\hbar\right),
\end{equation*}
then Lemma \ref{lemA1} is equivalent to
\begin{equation}\label{aim1}
\left[x^{k}\right]L(x) = 0,\quad k =1,2,\cdots, rd
\end{equation}
and
\begin{equation*}
\left[x^{0}\right]L(x) = (-1)^d\hbar^{rd}.
\end{equation*}
First we have the following lemma
\begin{lemma}
The following identity holds for arbitrary partitions $\mu,\nu$
\begin{equation}\label{chiortho}
\sum_{\lambda}\chi^{\lambda}_{\mu}\chi^{\lambda}_{\nu} = \delta_{\mu,\nu}.
\end{equation}
\end{lemma}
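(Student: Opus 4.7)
The statement is (up to a normalization) the classical column orthogonality for the character table of the symmetric group $S_n$, where $n=|\mu|=|\nu|$ (if $|\mu|\neq|\nu|$ there is no partition $\lambda$ indexing both characters and the sum is empty). My plan is to deduce it from the first (row) orthogonality relation for characters of a finite group, which is itself a direct consequence of Schur's lemma: for irreducible $S_n$-representations labelled by partitions $\lambda,\lambda'\vdash n$,
$$
\tfrac{1}{n!}\sum_{g\in S_n}\chi^{\lambda}(g)\,\overline{\chi^{\lambda'}(g)}=\delta_{\lambda,\lambda'}.
$$
Since every permutation of $S_n$ is conjugate to its inverse, the characters are real-valued and the conjugation is harmless. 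Grouping the sum over $g$ by cycle type (indexed by $\mu\vdash n$, with $|C_\mu|=n!/z_\mu$ and $z_\mu=\prod_{i\geq 1}i^{m_i}m_i!$ for $\mu=(1^{m_1}2^{m_2}\cdots)$) this converts into
$$
\sum_{\mu\vdash n}\tfrac{1}{z_\mu}\,\chi^\lambda_\mu\,\chi^{\lambda'}_\mu=\delta_{\lambda,\lambda'}.
$$

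Next I would invert the resulting matrix identity. Arranging the character table as a square matrix $T=(\chi^\lambda_\mu)_{\lambda,\mu\vdash n}$ and setting $D=\diag(z_\mu^{-1})$, the above reads $T\,D\,T^{T}=I$. Because $T$ is square (the number of irreducibles of $S_n$ equals the number of conjugacy classes), inverting gives $T^{T}T=D^{-1}$, i.e.
$$
\sum_{\lambda\vdash n}\chi^\lambda_\mu\,\chi^\lambda_\nu=z_\mu\,\delta_{\mu,\nu}.
$$
This matches \eqref{chiortho} up to the normalization factor $z_\mu$, which the written statement suppresses. The factor is in fact essential in the downstream proof of Lemma \ref{lemA1}: one has $z_{(1)^{rd}}=(rd)!$ and $z_{(r)^d}=r^d d!$, and these are precisely the quantities that cancel the explicit prefactor $1/(d!(rd)!r^d)$ appearing in the definition of $X_d$, once the product $\prod_i(x+(i-\lambda_i)\hbar)$ is expanded in powers of $x$ and column orthogonality is applied to collapse the $\lambda$-sum at each order.

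There is no conceptual obstacle here, as the argument is standard finite-group representation theory, essentially Schur's lemma together with invertibility of the character table. The only item to be careful about is the $z_\mu$ normalization; tracking it correctly is what controls the numerical constants when one later reduces $L(x)$ to the closed form asserted by Lemma \ref{lemA1}.
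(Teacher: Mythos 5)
Your proof is correct, but it runs along a genuinely different route from the paper's. The paper stays entirely inside the free-fermion formalism: it takes the orthonormality $\braket{\mu|\nu}=\delta_{\mu,\nu}$ of the (normalized) power-sum states, notes via the Boson--Fermion correspondence that $\{\ket{\lambda}:|\lambda|=d\}$ is an orthonormal basis of the charge-$0$, energy-$d$ subspace of $\Lambda_0^{\frac{\infty}{2}}V$, inserts the resolution of identity $\mathrm{Id}_d=\sum_{|\lambda|=d}\ket{\lambda}\bra{\lambda}$, and reads off the identity from the Murnaghan--Nakayama expansion $\prod_i\alpha_{-\mu_i}v_\emptyset=\sum_\rho\chi^\rho_\mu v_\rho$ displayed just before the lemma. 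You instead argue by classical $S_n$ character theory: row orthogonality from Schur's lemma, class-summing to get $\sum_{\mu\vdash n}z_\mu^{-1}\chi^\lambda_\mu\chi^{\lambda'}_\mu=\delta_{\lambda,\lambda'}$, and then inverting the square character table $T$ via $TDT^{T}=I\Rightarrow T^{T}T=D^{-1}$. Your route is elementary and self-contained; the paper's is a one-line argument given the Fock-space machinery already in play, and it matches the form in which the identity is actually deployed (vanishing of vevs $\braket{0|\alpha_1^{rd}\cdots\alpha_{-r}^d|0}$ for mismatched cycle types). One point where you are actually more careful than the paper: with the unnormalized characters appearing in the Murnaghan--Nakayama formula, the correct right-hand side is $z_\mu\,\delta_{\mu,\nu}$, exactly as you derive; the paper's statement $\sum_\lambda\chi^\lambda_\mu\chi^\lambda_\nu=\delta_{\mu,\nu}$ tacitly presumes states (equivalently characters) normalized by $z_\mu^{-1/2}$, i.e.\ $\ket{\mu}=z_\mu^{-1/2}\prod_i\alpha_{-\mu_i}\ket{0}$. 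The discrepancy is harmless downstream because the lemma is invoked with $\mu=(r)^d\neq\nu=(1)^{rd}$ (for $r\neq 1$), where both versions give zero; your remark that the diagonal constants $z_{(1)^{rd}}=(rd)!$ and $z_{(r)^d}=r^dd!$ account for the prefactor $1/(d!(rd)!r^d)$ in $X_d$ is the right heuristic, though in the paper the diagonal contribution $d!(rd)!r^d$ is recovered by a direct vev computation with $W_0^r$ insertions (Proposition A.4) rather than by orthogonality alone.
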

\begin{proof}
Recall the orthogonality
\begin{equation}\label{ortho}
\braket{\mu|\nu}=\delta_{\mu,\nu},
\end{equation}
by noticing degree d Schur functions consist a complete basis for the space of degree d symmetric polynomials, by Boson-Fermion correspondence, we have $\{\ket{\mu},|\mu| = d\}$ forming a complete basis for the subspace of energy $d$ of the semi-infinite wedge space $\Lambda_0^{\frac{\infty}{2}}V$ (with 0 charge). Therefore, we can rewrite the identity operator  as
\begin{equation*}
\mathrm{Id}_{d} = \sum_{|\lambda|=d}\ket{\lambda}\bra{\lambda},
\end{equation*}
inserting back into Eq.~(\ref{ortho}) will give Eq.~(\ref{chiortho}).
\end{proof}
From the above lemma, one can  we immediately check the first two orders of Eq.~(\ref{aim1}). The first order is just the lemma itself
\begin{equation*}
[x^{rd}]L(x) = \sum_{|\lambda|=rd}\frac{\chi^{\lambda}_{(r)^d}\chi^{\lambda}_{(1)^{rd}}}{(rd)!} = 0.
\end{equation*}
For the second order, we have
\begin{equation*}
[x^{rd-1}]L(x) = \sum_{|\lambda|=rd}\frac{\chi^{\lambda}_{(r)^d}\chi^{\lambda}_{(1)^{rd}}}{(rd)!} \sum_{i=1}^{rd}(i-\lambda_i-\frac{1}{2})\hbar,
\end{equation*}
since 
\begin{equation*}
\sum_{i=1}^{rd} \lambda_i = rd.
\end{equation*}
we have 
\begin{equation*}
\sum_{i=1}^{rd}(i-\lambda_i-\frac{1}{2})\hbar = \left(\frac{(rd+1)rd}{2}-rd-\frac{rd}{2}\right)\hbar,
\end{equation*}
therefore $[x^{rd-1}]L(x)=0$. Now we expand Eq.~(\ref{aim1}) to lower degree of $x$, we have
\begin{equation*}
[x^{rd-k}]L(x) = \sum_{|\lambda|=rd}\frac{\chi^{\lambda}_{(r)^d}\chi^{\lambda}_{(1)^{rd}}}{(rd)!}\mathrm{e}_k((1-\lambda_1-\frac{1}{2})\hbar,\cdots,(rd-\lambda_{rd}-\frac{1}{2})\hbar),
\end{equation*}
where $\mathrm{e}_k$ is the $k$th elementary symmetric polynomial, which is related to Newton polynomial $p_k$, by the following Newton identities
\begin{lemma}
We have the Newton identities for the symmetric polynomials
\begin{equation}\label{relen}
\mathrm{e}_{n} = (-1)^n\sum_{m_1+2m_2+\cdots+nm_n}\prod_{i=1}^n\frac{(-p_i)^{m_i}}{m_i!i^{m_i}},\quad m_i\geq0.
\end{equation}
\end{lemma}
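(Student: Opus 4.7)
The plan is to prove this classical Newton identity via generating functions. Let $x_1, x_2, \ldots$ be the underlying formal variables, so that $p_k = \sum_i x_i^k$ and $e_n = \sum_{i_1 < \cdots < i_n} x_{i_1}\cdots x_{i_n}$. Introduce the generating series
\[ E(t) := \sum_{n \geq 0} e_n\, t^n = \prod_{i}(1 + x_i t). \]
First I would pass to the logarithm and use $\log(1+u) = \sum_{k \geq 1} (-1)^{k-1} u^k / k$, which yields
\[ \log E(t) = \sum_i \sum_{k \geq 1} \frac{(-1)^{k-1}}{k}(x_i t)^k = \sum_{k \geq 1} \frac{(-1)^{k-1}}{k}\, p_k\, t^k. \]
This is the bridge between the elementary and power-sum bases of the algebra of symmetric functions.

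Next I would exponentiate and expand by the Taylor series of $\exp$, treating the different $k$-th summands independently:
\[ E(t) = \prod_{k \geq 1} \exp\!\left(\tfrac{(-1)^{k-1}}{k}\, p_k\, t^k\right) = \sum_{(m_1, m_2, \ldots)} \prod_{k \geq 1} \frac{1}{m_k!}\left(\frac{(-1)^{k-1}}{k}\, p_k\, t^k\right)^{m_k}, \]
where the sum ranges over tuples of nonnegative integers with only finitely many nonzero entries. Extracting the coefficient of $t^n$ restricts the sum to tuples with $\sum_k k\, m_k = n$ and gives
\[ e_n = \sum_{\sum_k k m_k = n} \prod_{k=1}^{n} \frac{(-1)^{(k-1)m_k}}{m_k!\, k^{m_k}}\, p_k^{m_k}. \]

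The last step is sign bookkeeping. Using $(-1)^{(k-1)m_k} = (-1)^{k m_k}(-1)^{m_k}$, I would pull the factor $\prod_k (-1)^{k m_k} = (-1)^{\sum_k k m_k} = (-1)^n$ out of the sum as a global sign, and absorb each $(-1)^{m_k}$ into the power of $p_k$ to produce $(-p_k)^{m_k}$. This yields exactly
\[ e_n = (-1)^n \sum_{m_1 + 2m_2 + \cdots + n m_n = n} \prod_{i=1}^{n} \frac{(-p_i)^{m_i}}{m_i!\, i^{m_i}}, \]
which is \eqref{relen}. There is no substantive obstacle here; the only point requiring care is the sign rewriting at the final step, which is purely bookkeeping. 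An alternative route would be to establish Newton's recursion $n\, e_n = \sum_{k=1}^{n} (-1)^{k-1} e_{n-k}\, p_k$ (obtained from $E'(t)/E(t) = \sum_k (-1)^{k-1} p_k t^{k-1}$) and induct on $n$, but the direct generating-function expansion is cleaner and produces the explicit summation formula in one stroke.
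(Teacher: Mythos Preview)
Your proof is correct and is the standard generating-function derivation of the Newton identity. The paper, in fact, states this lemma without proof, treating it as a well-known classical result; your argument supplies exactly the kind of justification one would expect, and there is nothing to compare against.
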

Therefore we have, in order to prove Eq.~(\ref{aim1}), it is sufficient to prove for all $n\leq rd-1$
\begin{equation*}
 D^{d}_\mathbf{k}=\sum_{|\lambda|=rd}\chi^{\lambda}_{(r)^d}\chi^{\lambda}_{(1)^{rd}}\prod_{j=1}^np_{k_j}\left(\lambda_i-i+\frac{1}{2}\right) = 0,\quad \forall \sum_{i=1}^nk_i\leq rd-1,d\geq0,k_i\geq1,
\end{equation*}
where $\mathbf{k}=\{k_1,k_2,\cdots,k_n\}$.  The importance of the requirement $n\leq rd-1$ and further $\sum_{i=1}^nk_i\leq rd-1$, as we will see, will be manifest when we utilizing the commutation relation to contract the insertions, after rewriting these expressions in terms of vacuum expectation value.\par
From Lemma. \ref{proeigen}, we can reformulate the above equation in terms of correlation functions (using the the orthogonality Eq.~(\ref{chiortho}) to drop the constant terms)
\begin{equation}\label{Ddk}
 D^{d}_\mathbf{k} := \braket{0|\alpha_{1}^{rd}W_{0}^{k_1}\cdots W_0^{k_n}\alpha_{-r}^d|0}=0,
\end{equation}
where for $\forall r,s\in\mathbb{Z},\,s\geq0$ the operator $W_{r}^s$ was defined in section 1.1:
\begin{equation*}
W_{r}^{s} = \sum_{k\in\mathbb{Z}+\frac{1}{2}}k^s:\psi_{k-r}\psi_{k}^*:.
\end{equation*}
From this definition one can easily notice $W_r^s$ annihilate the vacuum and covacuum when $r>0$ and $r<0$, respectively.\par
For bosonic generators, we have $\alpha_{n} = W_{n}^0$, and their commutation relation is given by
\begin{proposition}
The commutation relations between $W_{r}^{s}$ and $\alpha_{n}$ is given by
\begin{equation*}
[W_{r}^{s},\alpha_{n}] =\sum_{i=1}^s(-n)^i\left(
\begin{matrix}
s\\
i\\
\end{matrix}
\right)W_{r+n}^{s-i} + c^{r}_{s,n}\cdot\delta_{r,-n}.
\end{equation*}
\end{proposition}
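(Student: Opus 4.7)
The plan is to reduce the commutator to an application of the standard Lie-algebra commutator for normal-ordered fermion bilinears, then to identify separately the ``bulk'' contribution (which produces the binomial sum) and the anomalous ``central'' contribution (which only survives when $r+n=0$).

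First, I would write both operators as sums of bilinears, $W_r^s=\sum_{k\in\Z+\half}k^s{:}\psi_{k-r}\psi_k^*{:}$ and $\alpha_n=\sum_{j\in\Z+\half}{:}\psi_{j-n}\psi_j^*{:}$, and invoke the standard identity
\[
[{:}\psi_a\psi_b^*{:},\,{:}\psi_c\psi_d^*{:}]=\delta_{b,c}{:}\psi_a\psi_d^*{:}-\delta_{a,d}{:}\psi_c\psi_b^*{:}+c(E_{a,b},E_{c,d}),
\]
where $c(\cdot,\cdot)$ is the $2$-cocycle normalized in Section~1.2 of the paper (supported on pairs $(E_{i,j},E_{j,i})$ with $i$ and $j$ of opposite sign). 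Applying this to $[{:}\psi_{k-r}\psi_k^*{:},{:}\psi_{j-n}\psi_j^*{:}]$ produces two Kronecker conditions: $j=k+n$ (from $\delta_{b,c}$) and $j=k-r$ (from $\delta_{a,d}$). Summing the non-central pieces over $k,j$ and re-indexing the first sum by $m=k+n$ combines everything into
\[
\sum_{m\in\Z+\half}\bigl((m-n)^s-m^s\bigr){:}\psi_{m-(r+n)}\psi_m^*{:}.
\]

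Next, I would expand $(m-n)^s-m^s=\sum_{i=1}^{s}\binom{s}{i}(-n)^i m^{s-i}$ by the binomial theorem and recognize $\sum_m m^{s-i}{:}\psi_{m-(r+n)}\psi_m^*{:}=W_{r+n}^{s-i}$. This gives the advertised bulk term $\sum_{i=1}^{s}\binom{s}{i}(-n)^i\,W_{r+n}^{s-i}$.

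For the anomalous contribution, the two Kronecker conditions are simultaneously satisfied only when $r+n=0$, forcing $j=k+n=k-r$; this is precisely why the central term carries the factor $\delta_{r,-n}$. In that case, the $2$-cocycle contributes a scalar $c(E_{k+n,k},E_{k,k+n})$, which is $\pm 1$ exactly when one of $k,k+n$ is positive and the other is negative. Summing $k^s$ against this sign over the finitely many half-integers $k\in\Z+\half$ lying between $0$ and $-n$ produces the explicit central constant $c^r_{s,n}$ (e.g.\ $c^{-n}_{s,n}=(-1)^{s+1}\sum_{j=1}^{n}(j-\half)^s$ for $n>0$, with the sign flipped for $n<0$). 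Equivalently, one can read off $c^r_{s,n}$ directly from the vacuum expectation value $\langle 0|[W_{r}^{s},\alpha_n]|0\rangle$, since the normal-ordered bulk terms annihilate both $|0\rangle$ and $\langle 0|$.

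The only genuinely delicate step is the bookkeeping of the central term: one must be careful that only the simultaneous double-delta configuration contributes, that the sign of the $2$-cocycle matches the convention fixed in the paper ($c(E_{i,j},E_{j,i})=1$ for $i<0,j>0$), and that the finite-sum range over half-integers $k$ with $k(k+n)<0$ is correctly identified. Once this is done, the binomial combinatorics of the main sum is automatic, and the proposition follows.
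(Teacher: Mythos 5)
Your proposal is correct and follows essentially the same route as the paper's proof: the commutator is reduced to bifermion commutators sharing a common creator--annihilator pair, the two surviving bilinear sums are re-indexed and combined via the binomial theorem into $\sum_{i=1}^{s}\binom{s}{i}(-n)^{i}W_{r+n}^{s-i}$, and the cocycle contributes only in the double-delta configuration forcing $r+n=0$, hence the factor $\delta_{r,-n}$. Your explicit evaluation of the central constant $c^{r}_{s,n}$ (and its characterization via the vacuum expectation value) goes slightly beyond the paper, which deliberately leaves that constant unspecified, but this is a harmless refinement rather than a different method.
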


\begin{proof}
First we notice the only non-vanishing commutators between the bifermions, are those containing common pairs of creators and annihilators, therefore we have
\begin{equation*}
[W_{r}^{s},\alpha_{n}] = \sum_{k\in\mathbb{Z}+\frac{1}{2}}k^s\left(\left[:\psi_{k-r}\psi_{k}^*:,:\psi_{k}\psi_{k+n}^*:\right]+\left[:\psi_{k-r}\psi_{k}^*:,:\psi_{k-r-n}\psi_{k-r}^*:\right]\right)
\end{equation*}
by using the definition of normal order and $\{\psi_k,\psi_k^*\}=1$, we have
\begin{equation}\label{commuwa}
\begin{split}
[W_{r}^{s},\alpha_{n}] &= \sum_{k\in\mathbb{Z}+\frac{1}{2}}k^s\left(:\psi_{k-r}\psi_{k+n}^*:-:\psi_{k-r-n}\psi_{k}^*:\right)+c^{r}_{s,n}\cdot\delta_{r,-n}\\
&=\sum_{k\in\mathbb{Z}+\frac{1}{2}}\left((k-n)^s:\psi_{k-r-n}\psi_{k}^*:\right)-W_{r+n}^{s}+c^{r}_{s,n}\cdot\delta_{r,-n}\\
&=\sum_{i=1}^s(-n)^i\left(
\begin{matrix}
s\\
i\\
\end{matrix}
\right)W_{r+n}^{s-i}+c^{r}_{s,n}\cdot\delta_{r,-n}
\end{split}
\end{equation}
\end{proof}

\begin{remark}
$c^{r}_{s,n}$ is a central term (commuting with all other operators), appears when the diagonal term is generated, however whose explicit form does not concern our purposes. All the commutation relation in the sequel will be understood stand up to a central term.
\end{remark}
By the same spirit, we calculate the commutators for general pair of $W_r^s$ and $W_p^q$ (without loosing generality, we assume $s\geq q$):
\begin{proposition}
The commutation relations between $W_{r}^{s}$ and $W_p^q$ with $s\geq q$ is given by
\begin{equation*}
[W_r^s,W_p^q] = \sum_{u= q}^{s-1}a^{s,q}_{r,p}(u)W^{u}_{r+p}
\left(\begin{matrix}
s\\
i\\
\end{matrix}
\right)W_{r+n}^{s-i}.
\end{equation*}
where the coefficients $a^{s,q}_{r,p}(u) = [k^{u}]\left((k-p)^sk^q-k^s(k-r)^q\right)$. 
\end{proposition}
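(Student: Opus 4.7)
The plan is to prove this commutation relation by a direct Wick-style calculation, paralleling the earlier computation for $[W_r^s,\alpha_n]$ and using only the bifermion commutator
\[
[:\psi_a\psi_b^*\!:,\,:\psi_c\psi_d^*\!:]=\delta_{b,c}:\!\psi_a\psi_d^*\!: \;-\;\delta_{a,d}:\!\psi_c\psi_b^*\!:\;+\;c_{a,b,c,d},
\]
where the last term is a scalar (central) contribution that arises from normal ordering and that, as noted in the Remark after \eqref{commuwa}, we will suppress throughout.

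First I would substitute the definition $W_r^s=\sum_{k\in\mathbb Z+\frac12}k^s\!:\!\psi_{k-r}\psi_k^*\!:$ (and likewise for $W_p^q$) and exchange summation with the commutator, reducing everything to the pairwise bifermion commutator above. Only two kinds of delta-contractions survive: setting $k=\ell-p$ (which gives a term proportional to $:\!\psi_{k-r}\psi_{k+p}^*\!:$ with weight $k^s(k+p)^q$) and setting $\ell=k-r$ (which gives a term $-:\!\psi_{k-r-p}\psi_k^*\!:$ with weight $k^s(k-r)^q$). Both resulting bifermions have lower index shifted by $r+p$ relative to the upper index, so both naturally contribute to operators of level $r+p$.

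Next I would perform the single change of summation variable $k\mapsto k+p$ in the first sum, which converts $:\!\psi_{k-r}\psi_{k+p}^*\!:$ into $:\!\psi_{k-r-p}\psi_k^*\!:$ and the weight $k^s(k+p)^q$ into $(k-p)^s k^q$. After this shift both surviving contributions are indexed by the same bifermion $:\!\psi_{k-r-p}\psi_k^*\!:$, so they combine into
\[
[W_r^s,W_p^q]=\sum_{k\in\mathbb Z+\frac12}\bigl((k-p)^s k^q-k^s(k-r)^q\bigr)\!:\!\psi_{k-r-p}\psi_k^*\!:\;+\;(\text{central}).
\]
Finally, I would expand the polynomial $(k-p)^sk^q-k^s(k-r)^q$ in powers of $k$: the top coefficient (at $k^{s+q}$) cancels exactly, while the nonzero monomials give precisely the coefficients $a^{s,q}_{r,p}(u)=[k^u]\bigl((k-p)^sk^q-k^s(k-r)^q\bigr)$ defined in the statement. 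Summing over $u$ recognizes $\sum_k k^u\!:\!\psi_{k-(r+p)}\psi_k^*\!:=W^u_{r+p}$, which yields the claimed formula.

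The calculation itself is routine; the only point requiring mild care is the bookkeeping of the central term, which appears whenever the two bifermions form a diagonal $(\delta_{b,c}\delta_{a,d})$ contraction. As in the earlier $[W^s_r,\alpha_n]$ proof, such a term occurs precisely when $r+p=0$, i.e.\ when the output operator would have level zero, and contributes an $(r,s,p,q)$-dependent constant that commutes with everything. Since the rest of the paper only uses these commutation relations up to central terms, I would state that fact as a remark and not track the explicit scalar. Apart from this, the proof is a bookkeeping exercise in reindexing the half-integer sums and reading off coefficients from a polynomial in $k$.
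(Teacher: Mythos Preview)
Your proposal is correct and follows essentially the same route as the paper: compute the bifermion commutator, keep the two surviving delta-contractions with weights $k^s(k+p)^q$ and $k^s(k-r)^q$, shift $k\mapsto k-p$ in the first sum so that both terms carry the same bifermion $:\!\psi_{k-r-p}\psi_k^*\!:$, and then read off the coefficients of $k^u$ to identify $W^u_{r+p}$, suppressing the central term exactly as the paper does. The only cosmetic difference is that you state the bifermion commutator as a general identity first, whereas the paper writes down the two nonvanishing pairings directly; the substance is identical.
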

\begin{proof}
\begin{equation}\label{commuw}
\begin{split}
[W_r^s,W_p^q] &=  \sum_{k\in\mathbb{Z}+\frac{1}{2}}k^s(k+p)^q[:\psi_{k-r}\psi_{k}^*:,:\psi_{k}\psi_{k+p}^*:]+ \sum_{k\in\mathbb{Z}+\frac{1}{2}}k^s(k-r)^q[:\psi_{k-r}\psi_{k}^*:,:\psi_{k-r-p}\psi_{k-r}^*:]\\
&=\sum_{k\in\mathbb{Z}+\frac{1}{2}}k^s(k+p)^q:\psi_{k-r}\psi_{k+p}^*:-\sum_{k\in\mathbb{Z}+\frac{1}{2}}k^s(k-r)^q:\psi_{k-r-p}\psi_{k}^*:\\
& = \sum_{k\in\mathbb{Z}+\frac{1}{2}}\left((k-p)^sk^q-k^s(k-r)^q\right):\psi_{k-r-p}\psi_{k}^*::=\sum_{u= q}^{s-1}a^{s,q}_{r,p}(u)W^{u}_{r+p}
\end{split}
\end{equation}
Apparently, by the assumption $s\geq q$, only when $q\leq u<s$ are the coefficients $a^{s,q}_{r,p}(u)$ non-vanishing.
\end{proof}
Recall the definition of energy operator $H=W_0^1$, by using the above commutation relation we have, the $W_r^s$ operators have energy $-r$, i.e.
\begin{equation*}
[H,W_r^s] = -rW_r^s.
\end{equation*}
Now, we will begin to prove Eq.~(\ref{Ddk}), 
\begin{proposition}\label{pd}
The following evaluation of vev holds
\begin{equation*}
 D^{d}_\mathbf{k} := \braket{0|\alpha_{1}^{rd}W_{0}^{k_1}\cdots W_0^{k_n}\alpha_{-r}^d|0}=0,\quad \forall \sum_{i=1}^nk_i\leq rd-1,d\geq0,k_i\geq1,
\end{equation*}
\end{proposition}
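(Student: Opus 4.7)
My plan is a proof by induction on $K := \sum_{i=1}^n k_i$, exploiting the commutation algebra of the $W_r^s$ with bosonic modes $\alpha_k$.

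For the base case $K = 0$ (forcing $n = 0$ since $k_i \geq 1$), the vev reduces to $\langle 0|\alpha_1^{rd}\alpha_{-r}^d|0\rangle$. Assuming $r\geq 2$ (the $r=1$ case requires the separate normalization noted in Remark \ref{rr=1}), the bosonic commutator $[\alpha_1,\alpha_{-r}]=0$ lets us freely slide every $\alpha_1$ past the $\alpha_{-r}$'s onto the vacuum, where $\alpha_1|0\rangle=0$, giving vanishing.

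For the inductive step, I first observe that $[W_0^s, W_0^q]=0$ (from the general $[W_r^s,W_p^q]$ formula with $r=p=0$, since $a^{s,q}_{0,0}(u)=[k^u](k^{s+q}-k^{s+q})=0$), so the $W_0^{k_i}$'s may be freely reordered. I then push the rightmost $W_0^{k_n}$ to the right through $\alpha_{-r}^d$ using
$$[W_0^s,\alpha_{-r}]=\sum_{i=1}^s r^i\binom{s}{i} W_{-r}^{s-i},$$
which expands
$$W_0^{k_n}\alpha_{-r}^d|0\rangle = \sum_{j=0}^{d-1}\sum_{i=1}^{k_n} r^i\binom{k_n}{i}\,\alpha_{-r}^j W_{-r}^{k_n-i}\alpha_{-r}^{d-1-j}|0\rangle,$$
since the ``passing-through'' term $\alpha_{-r}^d W_0^{k_n}|0\rangle$ vanishes by $W_0^{k_n}|0\rangle=0$ (Lemma \ref{proeigen} with $\lambda=\emptyset$). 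Iterating with $[W_{-mr}^{s'},\alpha_{-r}]=\sum r^{i'}\binom{s'}{i'} W_{-(m+1)r}^{s'-i'}$, each $W_{-mr}^{s'}$ is progressively reduced in superscript by at least $1$ per commutation, until either its superscript reaches $0$ (in which case $W_{-mr}^0=\alpha_{-mr}$ becomes a purely bosonic operator) or it meets $|0\rangle$ as $W_{-mr}^{s'}|0\rangle$.

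In every resulting term, the original superscript $k_n$ has been redistributed into strictly smaller superscripts, so the total superscript has dropped by at least $1$. This allows me to invoke the inductive hypothesis, after suitably generalizing it to allow insertions of $W_{-mr}^{s}$ and $\alpha_{-mr}$ between the $\alpha_1^{rd}$ and $\alpha_{-r}^{d'}$ blocks; such a generalization follows by repeating the same commutation argument on each inserted operator, now pushing it either leftward past the $\alpha_1$'s (where $[W_{-mr}^s,\alpha_1]$ raises the level towards $0$) or ultimately reducing everything to bosonic vevs of $\alpha_{\pm r}$ of strictly smaller total size.

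The main obstacle is bookkeeping: the iterated commutator produces a combinatorial tree of contributions, and I must simultaneously track the sum of surviving superscripts and the energy profile of the inserted operators to ensure the inductive parameter strictly decreases on every branch. A cleaner alternative I might pursue uses representation theory of the symmetric group: the product $\prod_i p_{k_i}(\lambda)$ is the eigenvalue on the irrep $V_\lambda$ of a central element $Z_\mathbf{k}\in Z(\mathbb{C}[S_{rd}])$ built from symmetric functions of Jucys--Murphy elements, so column orthogonality $\sum_\lambda \chi^\lambda_{(r)^d}\chi^\lambda_\mu = z_{(r)^d}\delta_{(r)^d,\mu}$ reduces $D^d_\mathbf{k}$ (up to the nonzero factor $z_{(r)^d}=r^d d!$) to the coefficient of the class sum $K_{(r)^d}$ in $Z_\mathbf{k}$; this coefficient vanishes because $\sum k_i \leq rd-1$ places $Z_\mathbf{k}$ in a lower stratum of the natural Jucys--Murphy degree filtration than the class $(r)^d$ of total support $rd$.
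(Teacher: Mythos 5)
Your primary inductive route has a genuine gap, and it is exactly at the point you flag as ``bookkeeping'': the inductive invariant is never formulated, and no invariant based solely on the total superscript $K=\sum k_i$ can work. After one step your terms have the shape $\braket{0|\alpha_1^{rd}W_0^{k_1}\cdots W_0^{k_{n-1}}\alpha_{-r}^{j}W_{-r}^{k_n-i}\alpha_{-r}^{d-1-j}|0}$, which is outside the class covered by the proposition, so you must prove a generalized vanishing statement for vevs with insertions $W_e^s$ of various energies $e$ and levels $s$ --- and vanishing of those cannot follow from ``total superscript dropped by one'' alone, because the borderline case $\sum k_i=rd$, $\mathbf{k}=(r)^d$ is \emph{nonzero} (Proposition~\ref{laim2} gives $D^d_{(r)^d}=d!(rd)!r^d$). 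Concretely, your commutation tree terminates in bosonic vevs such as $\braket{0|\alpha_1^{a}\alpha_{-1}^{b}\alpha_{-r}^{c}|0}$, which are generically nonzero; what kills them under the hypothesis $\sum k_i\le rd-1$ is a quantitative budget argument: each chain of commutations that converts a $W$ into a pairable boson consumes superscript at least equal to the number of $\alpha_{\pm}$'s it eats, so forming each conjugate pair costs level at least $r$ (or $r+1$ for pairs $\alpha_{\pm k}$ with $k\neq 1,r$). This is precisely the content of the paper's proof, which runs the same reduction to bosonic vevs but then closes it with five counting inequalities (number of pairs $\le d$; $s\le rd-1$; drop in insertion count $\le$ total level, i.e.\ $2(n_1+n_2)\ge rd+d+n-s$; each pair needs a bridge, $n_1+n_2\le n$; and cost $(r+1)n_1+rn_2\le s$), deriving the contradiction $n_1+n_2\ge d+1>d$. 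Your sketch defers exactly this count, so as written the induction does not close. (Your restriction of the base case to $r\ge 2$ is, incidentally, well taken: for $r=1$ the statement as literally written fails, e.g.\ $\braket{0|\alpha_1^2 W_0^1\alpha_{-1}^2|0}=4$ with $\sum k_i=1\le rd-1$; the paper's own fourth inequality silently breaks there because $\alpha_1$ and $\alpha_{-1}$ pair without any bridge, which is the same degeneration the paper absorbs into the $r=1$ normalization of Remark~\ref{rr=1}.)

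Your alternative via the center of $\mathbb{C}[S_{rd}]$ is a genuinely different and essentially viable route --- it is the completed-cycles picture of Okounkov--Pandharipande --- and the reduction is correct: since $\chi^\lambda_{(1)^{rd}}=\dim\lambda$, column orthogonality shows $D^d_{\mathbf k}=(rd)!\,c_{(r)^d}$, the coefficient of the class sum $K_{(r)^d}$ in the central element $Z_{\mathbf k}$ with eigenvalue $\prod_j p_{k_j}(\lambda_i-i+\tfrac12)$. But the filtration you invoke is the wrong one as stated: under the natural Jucys--Murphy (weight) filtration by $|\mu|-\ell(\mu)$, the class $(r)^d$ has weight $(r-1)d\le rd-1$, so degree $\sum k_i\le rd-1$ does \emph{not} exclude it, and indeed ordinary content power sums $p_k(J)$ hit classes of support $k+1$, which would ruin the count. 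What saves the argument is the \emph{support} (size) bound special to the regularized shifted power sums: the central element attached to $p_k(\lambda_i-i+\tfrac12)$ is $k$ times the completed cycle $\overline{(k)}=(k)+\cdots$, supported on classes moving at most $k$ points, so $Z_{\mathbf k}$ is supported on permutations moving at most $\sum k_j\le rd-1<rd$ points, while $(r)^d$ moves all $rd$ points when $r\ge2$ --- note this also correctly predicts the $r=1$ failure, since $(1)^d$ moves none. That support bound is the pivotal nontrivial input, and you assert it rather than prove it; with it supplied (e.g.\ from the completed-cycle expansion), this route is shorter and more conceptual than the paper's inequality chase, but as submitted both of your arguments stop just short of the decisive step.
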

\begin{proof}
We defined the level for any sequence of $W$ operators $w=W_{q_1}^{k_1}\cdots W_{q_n}^{k_n}$  as $\mathrm{deg}(w) = \sum_{i=1}^{n}k_i$. Note this level is not well defined when viewed the sequence of $W$ operators as an element in the $W^{1+\infty}$ algebra, therefore we cannot say things like the level of a vacuum expectation value.\par
We continue with considering implementing a series of operations for:
\begin{equation*}
D^{d}_{\mathbf{k}} = \braket{0|\alpha_{1}^{rd}W_{0}^{k_1}\cdots W_{0}^{k_n}\alpha_{-r}^d|0}=0,\quad \forall \sum_{i=1}^nk_i\leq rd-1,d\geq0,k_i\geq1,
\end{equation*}

In the following operations we will regard $\alpha$ operators as level 0 $W$ operators. The operation can be recursively implement as follow:\par
1. Using the commutation relation Eq.~(\ref{commuw}) to commute the first $W$ operators (since after commutation there may be more but finite parts of contributions, we will treat all these contributions by the same procedure), which will be treat as target operator, in the sequence of insertion (i.e. $D = \left<w\right>$) with non-negative energy and positive level to the right (if there is no such operator commute the first $W$ operator with negative energy and positive level to the left). We will regard the product (if the contraction occurs) or itself (if permuting occurs) as the new target $W$ operator.\par

2. If the sign of the target energy is the same as its formal one, we will keep commuting it on the same direction. On the other hand, the commutation will change its direction when the energy of the product after commutation (they all have the same energy by Eq.~(\ref{commuw})) changed its sign (we treat 0 as with $+$ sign). \par

3. The commutation will stop if: 1' If the target $W$ operator with negative (resp. positive) energy ever adjunct to the vaccum (resp. covacuum), by the fact they annihilate the corresponding state, these parts of contribution will equals to 0. 2' If the target $W$ operator has 0 level, i.e. it becomes $\alpha$ operator. If there is $\alpha_0$ operator in the operator sequence, this part of contribution will equals to 0, by the fact we are working in $\Lambda_0^{\frac{\infty}{2}}$

4. Using the commutation relation Eq.~(\ref{commuwa}) to commute all the $\alpha$ operators in each squence of insertion to the leftmost, if they have negative energy ($\alpha_{-n},n>0$), or to the rightmost, if they have positive energy ($\alpha_{n},n\geq0$). Since all the $\alpha$ operators with negative (resp. positive) energy commute with each other, we can sort them both in ascending order w.r.t the energy.\par

5. If there are $W$ operators with positive level remains, them repeat from procedure 1. If there is no $W$ operators with positive level, then stop.\par

After finite times of recursion, the operation will terminate, and we will arrive at:
\begin{equation*}
 D^{d}_\mathbf{k} = \sum_{\mathbf{p},\mathbf{q}}a_{\mathbf{p},\mathbf{q}}\left<0\right|\alpha_{p_1}\alpha_{p_2}\cdots\alpha_{p_{l(\mathbf{p})}} \alpha_{q_1}\alpha_{q_2}\cdots\alpha_{q_{l(\mathbf{q})}} \left|0\right>,
\end{equation*}
where in each part of contribution $p_1\geq p_2\geq\cdots\geq p_{l(\mathbf{p})}>0>q_1\geq q_2\geq\cdots\geq q_{l(\mathbf{q})}$, with $a_{\mathbf{p},\mathbf{q}}$ the corresponding coefficients. We first note the contribution can only be non-vanishing if the $\alpha$ operators coming in pairs, i.e. $p_i=q_{l(\mathbf{q})}-i+1,l(\mathbf{p})=l(\mathbf{q})$. Therefore we are left with
\begin{equation*}
 D^{d}_\mathbf{k} = \sum_{\mathbf{p}}a_{\mathbf{p},\mathbf{q}}\left<0\right|\alpha_{p_1}\alpha_{p_2}\cdots\alpha_{p_{l(\mathbf{p})}} \alpha_{-p_{l(\mathbf{p})}}\alpha_{-p_{l(\mathbf{p})-1}}\cdots\alpha_{-p_1} \left|0\right>
\end{equation*}
Now, from the commutation relations, we observe an important fact: if the number in the sequence of insertion reduce by 1 its level will reduce by at least one. \par

Next, in order to generate a pair of $\alpha_{-k},\alpha_k$, if $k\neq1,r$, one need to use at least 1 positive level $W$ operator as a bridge and at least $r$ $\alpha_1$ operators and 1 $\alpha_{-r}$ operator as source.  The total reduction of level will be at least r+1. However, when $k=1,r$, since one of the corresponding $\alpha$ operator already exist, we can save one time of using the bridge, therefore the total reduction of level will be at least r. We will denote the number of pairs with $k\neq1,-1,r,-r$ as $n_1$ and $k=1,-1,r,-r$ as $n_2$.\par

By the above argument we can get several inequalities: First, since the total number of source is $rd+d$ we have

$$ 
n_1+n_2\leq d.
$$

Second, by our assumption, we have 

$$
\sum_{i=1}^{n}k_i:=s\leq rd-1.
$$

Third, since decrease in number of the insertions must no more than the total level, we have

$$
2(n_1+n_2)\geq rd+d+n-s.
$$

Forth, since each of the pair will cost at least one positive level $W$ operator to form, we have 
$$
n_1+n_2\leq n.
$$

 Aside from these four inequalities, we also have the fifth, which will be used in the prove of next proposition, i.e. the total reduction of level will be no less than those needed for forming the pairs: 
 
 $$
 (r+1)n_1+rn_2\leq s.
 $$\par

Now combining the second and third inequalities will give us
\begin{equation*}
2(n_1+n_2)\geq rd+d+n-s \geq d+n+1,
\end{equation*}
combining this inequality with the forth one, gives
\begin{equation*}
\begin{split}
n_1+n_2&\geq \frac{d+n+1}{2}\geq\frac{n_1+n_2+d+1}{2}\\
\Longrightarrow n_1+n_2&\geq d+1.
\end{split}
\end{equation*}
Which is a contradiction when we taking the first inequality into consideration.\par
Therefore, there can be no non-vanishing contribution after the operation, i.e. $ D^{d}_\mathbf{k} =0$. 
\end{proof}
Since  Eq.~(\ref{aim1}) is equivalent to Proposition \ref{pd}, it is thus proved.\par
Our final aim will be proving:
\begin{equation}\label{aim2}
\left[x^{0}\right]L(x) = \left[x^{0}\right]\sum_{|\lambda|=rd}\frac{\chi^{\lambda}_{(r)^d}\chi^{\lambda}_{(1)^{rd}}}{(rd)!}\prod_{i=1}^{rd}\left(x+(i-\lambda_i-\frac{1}{2})\hbar\right)= (-1)\hbar^{rd}.
\end{equation}
Again, transforming the above equation to the operator formalism, we only need to prove the following proposition
\begin{proposition}\label{laim2}
\begin{equation*}
 D^{d}_\mathbf{k} := \braket{0|\alpha_{1}^{rd}W_{0}^{k_1}\cdots W_0^{k_n}\alpha_{-r}^d|0} = d!(rd)!r^d\cdot\delta_{(r)^d,\mathbf k},
\end{equation*}
for $s:=\sum_{i=1}^{n}k_i=rd$.
\end{proposition}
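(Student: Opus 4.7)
The plan is to extend the argument of Proposition~\ref{pd} to the critical degree $s := \sum k_i = rd$, where one must both rule out $\mathbf{k} \neq (r)^d$ and explicitly compute the leading nonzero contribution at $\mathbf{k} = (r)^d$.

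For the vanishing at $\mathbf{k} \neq (r)^d$, I would run the commutation algorithm of Proposition~\ref{pd} on $D^d_{\mathbf{k}}$. With $s = rd$, the inequalities derived there saturate: $n_1 + n_2 \le d$, $2(n_1+n_2) \ge d+n$, and $n_1+n_2 \le n$ force $n = d$ and $n_1+n_2 = d$, and then the fifth inequality $(r+1)n_1 + rn_2 \le s$ forces $n_1 = 0$ and $n_2 = d$. Thus every pair must be of type $(\alpha_{\pm 1},\alpha_{\mp 1})$ or $(\alpha_{\pm r},\alpha_{\mp r})$, and the total bridge level $\sum k_i = rd$ exactly equals the minimum $r$ required per pair, summed over the $d$ pairs. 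No level can be dissipated into $W$-$W$ commutators, so each $W_0^{k_i}$ must contribute its full level to a single pair, forcing $k_i \ge r$ individually. Combined with $\sum k_i = rd$ over $d$ indices, this pins down $k_i = r$ for all $i$, i.e.\ $\mathbf{k} = (r)^d$.

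For the evaluation at $\mathbf{k} = (r)^d$, I would invoke the character formula
$$D^d_{(r)^d} \;=\; \sum_{|\lambda|=rd} \dim(\lambda)\, \chi^{\lambda}_{(r)^d}\, \mathrm{ev}_{W_0^r}(\lambda)^d,$$
obtained by combining $\alpha_{-r}^d|0\rangle = \sum_\lambda \chi^\lambda_{(r)^d}\,v_\lambda$, Lemma~\ref{proeigen}, and $\braket{0|\alpha_1^{rd}|v_\lambda} = \dim\lambda$. Alternatively, continuing the commutation algorithm, the $d!$ ways of matching the commuting $W_0^r$'s with the $d$ sources $\alpha_{-r}$ supply the only nonvanishing configurations; the leading $i=r$ piece of $[W_0^r,\alpha_{-r}] = \sum_{i=1}^r r^i\binom{r}{i} W_{-r}^{r-i}$ provides a factor of $r$ per match (after the created $\alpha_{-1}$-like operators contract with the $rd$ factors of $\alpha_1$ via the Heisenberg algebra), and the residual Heisenberg contraction $\braket{0|\alpha_1^{rd}\alpha_{-1}^{rd}|0}=(rd)!$ supplies the last factor. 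The product of the three combinatorial contributions is exactly $d!(rd)!\,r^d$.

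The main obstacle will be the careful combinatorial bookkeeping in this second step: one must show that the subleading pieces of $[W_0^r,\alpha_{-r}]$ (those with $i<r$) and all non-``perfect-matching'' configurations contribute zero. This cancellation is precisely what the tightness of the Step~1 inequalities guarantees, but it has to be verified match-by-match. A secondary subtlety arises for $r=1$, where the nontrivial 0-point contribution $\braket{0|e^{\alpha_1/\hbar}q^H e^{\alpha_{-1}/\hbar}|0} = e^{q/\hbar^2}$ from Remark~\ref{rr=1} must be normalized away before the clean identity holds. Once Proposition~\ref{laim2} is established, substituting it together with Proposition~\ref{pd} into the Newton expansion of $e_{rd}$ via Eq.~\eqref{relen} reduces $[x^0]L(x)$ to the single term with $m_r = d$ and $m_{j\neq r} = 0$, which evaluates to $(-1)^d\hbar^{rd}$ and closes the proof of Eq.~\eqref{aim2}.
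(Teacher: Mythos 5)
Your vanishing argument for $\mathbf{k}\neq(r)^d$ is exactly the paper's: the saturation of the inequalities from Proposition~\ref{pd} forces $n=d$, $n_1=0$, $n_2=d$, and the tight level budget $\sum k_i=rd$ with one bridge per pair pins each $k_i=r$. That part is sound. The gap is in the evaluation of $D^d_{(r)^d}$, which is the actual new content of this proposition. Your first route is circular: the formula $D^d_{(r)^d}=\sum_{|\lambda|=rd}\dim(\lambda)\,\chi^\lambda_{(r)^d}\,\mathrm{ev}_{W_0^r}(\lambda)^d$ is just Lemma~\ref{proeigen} together with the Murnaghan--Nakayama expansion run backwards --- it is precisely the character sum $[x^0]L(x)$ that the operator reformulation was introduced in order to evaluate, and you supply no independent way to compute that sum, so no progress is made.

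Your second route is the paper's method, but the proposed bookkeeping would fail as described. The $i=r$ piece of $[W_0^r,\alpha_{-r}]=\sum_{i=1}^r r^i\binom{r}{i}W_{-r}^{r-i}$ is $r^r\alpha_{-r}$ --- a factor of $r^r$, not $r$ --- and for $r>1$ an $\alpha_{-r}$ cannot contract against the $\alpha_1$'s at all. More seriously, the terminal contraction you invoke, $\braket{0|\alpha_1^{rd}\alpha_{-1}^{rd}|0}=(rd)!$, never arises: each $W_0^r$ that descends to $\alpha_{-1}$ consumes one $\alpha_{-r}$ together with $r-1$ of the $\alpha_1$'s, so at most $d$ (never $rd$) factors of $\alpha_{-1}$ are ever produced, and the all-descent channel leaves only a residual $d!$-type contraction. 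The paper instead sums over \emph{mixed} channels $\braket{0|(\alpha_1\alpha_{-1})^{d_1}(\alpha_r\alpha_{-r})^{d_2}|0}=r^{d_2}$ with $d_1+d_2=d$: each $W_0^r$ either climbs to $\alpha_r$ by absorbing $r$ copies of $\alpha_1$ (weight $r!$, from the restricted relation $[W_r^s,\alpha_n]^r=-ns\,W_{r+n}^{s-1}$, which is itself justified by the tightness of the Step~1 inequalities) or descends to $\alpha_{-1}$ (weight $r\cdot r!$), with selection prefactors $\frac{d!(rd)!}{(r!)^d}$ and $\binom{d}{d_1}$, and these assemble to $d!(rd)!\,r^d$. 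Your product $d!\cdot r^d\cdot(rd)!$ matches the answer numerically but is reverse-engineered rather than produced by any actual contraction scheme; since you yourself defer the match-by-match verification, the proposal omits exactly the computation that constitutes the proposition. (A minor point: the $r=1$ normalization $e^{q/\hbar^2}$ of Remark~\ref{rr=1} is irrelevant here --- $D^d_{\mathbf{k}}$ is a finite vacuum expectation value and the identity holds verbatim at $r=1$; the normalization only enters when assembling the generating function.)
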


\begin{proof}
 By the same argument as in the proof of Proposition \ref{pd}, we will arrive at an inequality
\begin{equation*}
d \geq n_1+ n_2\geq\frac{d+n}{2}\geq\frac{d+n_1+n_2}{2}\geq d.
\end{equation*}
The equality will be satisfied for $n_1+n_2 = n = d$, and recall the fifth inequality stated before, we have now 
\begin{equation*}
rd=r(n_1+n_2)\leq(r+1)n_1+rn_2\leq s=rd,
\end{equation*}
the equality can only be saturated for $n_1=0,n_2=d$. Now since we have $d$ pairs of $\alpha$ which need at least 1 positive level $W$ to form. However, due to $n=d$, each pair can only use 1 $W$, which force it has level no less than $r$, and since $s=rd$, their level should all equals to $r$. Therefore the only non-vanishing contribution will come from:
\begin{equation*}
D^{d}_{(r)^d}=\left<0\right|\alpha_{1}^{rd}W_0^{r}W_0^{r}\cdots W_0^{r}\alpha_{-r}^d\left|0\right>,
\end{equation*}
and having the following form
\begin{equation}\label{resultform}
\left<0\right|\left(\alpha_1\alpha_{-1}\right)^{d_1}\left(\alpha_r\alpha_{-r}\right)^{d_2}\left|0\right>,\quad d_1+d_2=d.
\end{equation}
which equals to $r^{d_2}$ (by the commutation relation $[\alpha_r,\alpha_{-r}]=r$) for given $(d_1,d_2)$.\par
Besides, in order for the equality to hold, we also need the requirement that whenever we contract two $W$ operators, only the leading level contribution will remain. Now we can calculate the above vacuum expectation value by the restricted version of Eq.~(\ref{commuwa}) (we do not need the restricted commutation relation between general $W$ operators, since they do not allowed to commute with each other in this case):
\begin{equation}\label{reslaw}
[W_r^s,\alpha_{n}]^r = -nsW_{r+n}^{s-1}.
\end{equation}
then in order to get the contribution in the form of Eq.~(\ref{resultform}), we first commute the $\alpha_{-r}$ to the right of the $W$ operators and $\alpha_1$ to their left, this can be done since these $\alpha$ operators commute with each other in the case $r\neq1$. We note although all the insertions are ordered, the $\alpha_{-1}$ after commute to the left of $W$ operators are unordered since they all commute each other. Therefore choice of the $\alpha$ operators will give us an prefactor
\begin{equation*}
\left(
\begin{split}
&d\\
1,1&,\cdots,1
\end{split}\right)
\left(
\begin{split}
&rd\\
r,r&,\cdots,r
\end{split}\right)=\frac{d!(rd)!}{(r!)^d}
\end{equation*}
now we have
\begin{equation*}
\begin{split}
D^{d}_{(r)^d}&=\left<0\right|\alpha_{1}^{rd}W_0^{r}W_0^{r}\cdots W_0^{r}\alpha_{-r}^d\left|0\right>\\
&=\frac{d!(rd)!}{(r!)^d}\left<0\right|\left(\alpha_{1}^{r}W_0^{r}\alpha_{-r}\right)^d\left|0\right>
\end{split}
\end{equation*}
Now, we use the restricted commutation relation Eq.~(\ref{reslaw}) to contract (the permute part in the commutation law is fobidden) the $\alpha$ operators with the $W_0^{r}$ operator. If the insertion $\alpha_{1}^{r}W_0^{r}\alpha_{-r}$ results into $\alpha_{r}\alpha_{-r}$ then we will get a prefactor $r!$, and if the result insertion is $\alpha_1\alpha_{-1}$, we will get a prefactor $r\cdot r!$, also, for the order of these pairs, we have a prefactor $\left(\begin{matrix}d\\d_1\\ \end{matrix}\right)$. Then for fixed $(d_1,d_2)$, we have
\begin{equation*}
\begin{split}
D^{d_1,d_2}_{(r)^d}&=\frac{d!(rd)!}{(r!)^d}\left(\begin{matrix}d\\d_1\\ \end{matrix}\right)(r\cdot r!)^{d_1}(r!)^{d_2}\left<0\right|\left(\alpha_1\alpha_{-1}\right)^{d_1}\left(\alpha_r\alpha_{-r}\right)^{d_2}\left|0\right>\\
&=\frac{d!(rd)!}{(r!)^d}\left(\begin{matrix}d\\d_1\\ \end{matrix}\right)(r\cdot r!)^{d_1}(r!)^{d_2}r^{d_2}=d!(rd)!r^d\left(\begin{matrix}d\\d_1\\ \end{matrix}\right).
\end{split}
\end{equation*}
Summing $d_2$ from 0 to $d$ gives us,
\begin{equation*}
D^{d}_{(r)^d} = d!(rd)!r^d
\end{equation*}
where we have divided an overall factor $d!$ for permuting the $W_0^r$ operators, since one cannot know a priori which of the $W_0^r$ will result into pair of  $\alpha_1\alpha_{-1}$ or $\alpha_r\alpha_{-r}$.\par
Now, recall
\begin{equation*}
[x^{0}]L(x) = \sum_{|\lambda|=rd}\frac{\chi^{\lambda}_{(r)^d}\chi^{\lambda}_{(1)^{rd}}}{(rd)!}\mathrm{e}_{rd}((1-\lambda_1-\frac{1}{2})\hbar,\cdots,(rd-\lambda_{rd}-\frac{1}{2})\hbar),
\end{equation*}
by the relationship between $\mathrm{e}_n$ and $p_k$ Eq.~(\ref{relen}), i.e. the Newton identity, we get the coefficients in front of the contribution $p_r^d$ is $\frac{1}{(rd)!}(-1)^{rd}(-1)^d\frac{1}{d!r^d}$, and the factor $(-1)^{rd}$ is cancelled by
\begin{equation*}
\mathrm{e}_{rd}((1-\lambda_1-\frac{1}{2})\hbar,\cdots,(rd-\lambda_{rd}-\frac{1}{2})\hbar) = (-1)^{rd}\mathrm{e}_{rd}((\lambda_1-1+\frac{1}{2})\hbar,\cdots,(\lambda_{rd}-rd+\frac{1}{2})\hbar)
\end{equation*}
therefore we arrive at
\begin{equation*}
[x^{0}]L(x)=\frac{\hbar^{rd}(-1)^d}{d!r^d(rd)!}\left<0\right|\alpha_{1}^{rd}W_0^{r}W_0^{r}\cdots W_0^{r}\alpha_{-r}^d\left|0\right>=(-1)^d\hbar^{rd}.
\end{equation*}
\end{proof}
Combining Proposition \ref{pd} and Proposition \ref{laim2}, we have shown
\begin{equation*}
X_d = \tilde{X}_d,\quad \forall d .
\end{equation*}
This finish the proof of Lemma \ref{lemA1}.\par
\subsection{Proof of Equation \eqref{AAB}}
In this appendix we will prove 

\begin{lemma}
Recall  the Matrices $A^{+-}, A^{--}, B^{+-}$ are defined by 
\begin{equation*}
\begin{split}
A^{--}_{i,j} &= \sum_{d=0}^{\infty}\frac{(-1)^dq^{rd}}{r^dd!\hbar^{(r+1)d}}[x^{i-1}]\rho(x)x_{[rd-j+1]} ,\\
A^{+-}_{-i,j} &= \sum_{rd>j-1}\frac{(-1)^dq^{rd}}{r^dd!\hbar^{(r+1)d}}[x^{-i}]\rho(x)x_{[rd-j+1]},\\
B^{+-}_{-i,j} & = [x^{-i}y^{-j}]\frac{\rho(x)}{\rho(y)}\sum_{d=1}^{\infty}\frac{q^{rd}}{d\cdot r^d\hbar^{(r+1)d}}\sum_{k=0}^{d-1}\frac{(-1)^{k-1}}{k!(d-1-k)!}\sum_{n=1}^{r}x_{[rk+n]}y_{[r(d-k)+1-n]}.
\end{split}
\end{equation*}
We have the following combinatorial identity
\begin{equation*}
A^{+-}  = B^{+-}A^{--}.
\end{equation*}
\end{lemma}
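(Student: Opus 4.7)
The identity $A^{+-}=B^{+-}A^{--}$ has a natural interpretation: defining $\tilde\phi_{k-1}(x):=x^{k-1}+\sum_{i\geq 1}B^{+-}_{-i,k}x^{-i}$, it is equivalent to the Laurent expansion identity $\phi_{j-1}(x)=\sum_{k=1}^{j}A^{--}_{k,j}\tilde\phi_{k-1}(x)$. The positive-power parts match tautologically (that is the defining property of $A^{--}_{\cdot,j}$), so the content of the lemma lies entirely in matching the negative-power parts, which is precisely the asserted $A^{+-}_{-i,j}=\sum_k B^{+-}_{-i,k}A^{--}_{k,j}$. My plan is to verify this Laurent-series identity directly.

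The first step is to rewrite $\sum_k B^{+-}_{-i,k}A^{--}_{k,j}$ as a single coefficient-extraction via a formal residue. Using $A^{--}_{k,j}=[y^{k-1}]\phi_{j-1}(y)$ and $B^{+-}_{-i,k}=[x^{-i}y^{-k}]G(x,y)$ where $G(x,y)$ denotes the generating function in \eqref{Bxy}, one gets
\begin{equation*}
\sum_{k\geq 1} B^{+-}_{-i,k}A^{--}_{k,j} \;=\; [x^{-i}]\,[y^{-1}]\,\phi_{j-1}(y)\,G(x,y),
\end{equation*}
since $[y^{-1}]\bigl(\phi_{j-1}^{-}(y)\,G(x,y)\bigr)=0$ (both factors lie in $y^{-1}\mathbb{Q}[[y^{-1}]]$ and their product lies in $y^{-2}\mathbb{Q}[[y^{-1}]]$). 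Substituting $G(x,y)=\rho(x)S(x,y)/\rho(y)$ with $S$ denoting the double sum in \eqref{Bxy}, and $\phi_{j-1}(y)=\rho(y)H_j(y)$ where $H_j(y):=\sum_{d\geq 0}\tfrac{(-1)^d q^{rd}}{d!r^d\hbar^{(r+1)d}}y_{[rd-j+1]}$, the $\rho(y)$ factors cancel and the task reduces to showing
\begin{equation*}
A^{+-}_{-i,j} \;=\; [x^{-i}]\rho(x)\cdot[y^{-1}]\bigl(H_j(y)\,S(x,y)\bigr).
\end{equation*}

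The remaining content is a coefficient computation. Expanding $H_j(y)\,S(x,y)$ produces products $y_{[rd'-j+1]}\cdot y_{[r(d-\ell)+1-n]}$ which, as rational functions of $y$, share poles in the arithmetic progression $y=(m-\tfrac12)\hbar$. A partial-fractions decomposition (yielding simple poles where the two denominators are disjoint and double poles where they overlap) extracts $[y^{-1}]$ as a sum of residues, after which the double sum over $(d',\ell)$ with weights $\tfrac{(-1)^{\ell-1}}{\ell!(d-1-\ell)!}$, together with the inner $n$-sum, should telescope under the reorganization $D:=d+d'$ to produce exactly $\sum_{D\geq 1}\tfrac{(-1)^D q^{rD}}{D!r^D\hbar^{(r+1)D}}[x^{-i}]\rho(x)x_{[rD-j+1]}=A^{+-}_{-i,j}$. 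The main obstacle is isolating the exact combinatorial identity that enforces this telescoping: I expect the vanishing $\sum_{\ell=0}^{d-1}\binom{d-1}{\ell}(-1)^\ell=0$ for $d\geq 2$ to conspire with the partial-fraction residues so that only the "diagonal" contributions with $d'=0$ survive, and for the $n$-sum $\sum_{n=1}^{r}x_{[rk+n]}y_{[r(d-\ell)+1-n]}$ to encode a Pochhammer identity telescoping the index interval $\{rk+1,\dots,rk+r\}$ into a single factor $x_{[rD-j+1]}$. Once this identity is isolated, the rest is bookkeeping.
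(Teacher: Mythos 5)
Your setup is sound and matches the paper's own strategy more closely than you might expect: the residue-pairing reduction $\sum_{k\geq1} B^{+-}_{-i,k}A^{--}_{k,j} = [x^{-i}]\,[y^{-1}]\bigl(\phi_{j-1}(y)\,G(x,y)\bigr)$, justified by noting that the negative part of $\phi_{j-1}$ pairs into $y^{-2}\mathbb{Q}[[y^{-1}]]$, is precisely the contraction the paper performs via its identity
\begin{equation*}
\sum_{m=1}^{\infty}[y^{-m}]\frac{1}{\rho(y)}y_{[r(d_1-k)+1-n]}\,[x^{m-1}]\rho(x)x_{[rd_2-j+1]}=[y^{-1}]\prod^{r(d_1-k)-n}_{p=j-1-rd_2}\frac{\hbar}{y-p\hbar+\frac{\hbar}{2}}.
\end{equation*}
But your proof stops exactly where the real work begins, and the mechanism you predict for the final step is wrong on both counts. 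First, no partial fractions and no double poles arise: the two Gamma ratios are of \emph{opposite} type ($x_{[b]}$ carries the shifted Gamma in the denominator, $y_{[a]}$ in the numerator, cf. \eqref{defforxk}), so after the contraction their product telescopes into the single ratio $\Gamma(\frac{y}{\hbar}-a+\frac12)/\Gamma(\frac{y}{\hbar}+b+\frac12)$, a finite product over one interval with only simple poles. Then $[y^{-1}]$ of such a product vanishes unless the interval degenerates to a single factor, which pins $n=r(d-k)-j+1$, and the constraint $1\leq n\leq r$ forces the single value $k=\left[d-\frac{j}{r}\right]$, the same for every splitting of $d$.

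Second, and decisively: it is \emph{not} true that only the $d'=0$ terms survive. For each splitting $d=d_1+d_2$ (your $d'=d_2$) exactly one $(k,n)$ term survives, all with the same $k$, and the lemma reduces to the alternating identity
\begin{equation*}
\sum_{d_2=0}^{d-k-1}\frac{(-1)^{d_2+d-1-k}\,d!}{d_1\,k!\,(d-d_2-1-k)!\,d_2!}=1,\qquad d_1=d-d_2,\ \ k=\left[d-\tfrac{j}{r}\right],
\end{equation*}
which the paper establishes by induction (its Eq.~\eqref{cbl}). This is genuinely different from the vanishing $\sum_{\ell=0}^{d-1}(-1)^\ell\binom{d-1}{\ell}=0$ that you hoped would kill all off-diagonal contributions: the terms with $d_2>0$ are individually nonzero, and what equals $1$ is their weighted sum — note the factor $d_1$ in the denominator varies with $d_2$, so this is not a pure binomial expansion and the cancellation you anticipate does not occur. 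Until you derive the degenerate-interval constraint on $(k,n)$ and prove this identity, the argument is incomplete; ``the rest is bookkeeping'' is precisely where the content of the lemma sits.
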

\begin{proof}
By direct calculation, we have
\begin{equation*}
\begin{split}
(B^{+-}A^{--})_{-i,j} &= \sum_{m=1}^{\infty}B^{+-}_{-i,m}A^{--}_{m,j} = \sum_{m=1}^{\infty}[x^{-i}y^{-m}]\frac{\rho(x)}{\rho(y)}\sum_{d=1}^{\infty}\frac{q^{rd}}{d\cdot r^d\hbar^{(r+1)d}}\\
&\cdot\sum_{k=0}^{d-1}\frac{(-1)^{k-1}}{k!(d-1-k)!}\sum_{n=1}^{r}x_{[rk+n]}y_{[r(d-k)+1-n]} \sum_{d=0}^{[(j-1)/r]}\frac{(-1)^dq^{rd}}{r^dd!\hbar^{(r+1)d}}\left([x^{m-1}]\rho(x)x_{[rd_2-j+1]}\right)\\
&=\sum_{m=1}^{\infty}\sum_{d=1}^{\infty}\frac{q^{rd}}{r^d\hbar^{(r+1)d}}\sum_{d_1=0}^{d}\sum_{k=0}^{d_1-1}\frac{(-1)^{d_2+1-k}}{d_1k!(d_1-1-k)!d_2!}\\
&\cdot\left(\sum_{n=1}^{r}[x^{-i}y^{-m}]\frac{\rho(x)}{\rho(y)}x_{[rk+n]}y_{[r(d_1-k)+1-n]}\right)\left([x^{m-1}]\rho(x)x_{[rd_2-j+1]}\right),
\end{split}
\end{equation*}
which means we need to prove $\forall d\geq0$
\begin{equation}\label{lastl}
\begin{split}
\sum_{m=1}^{\infty}\sum_{d_1=0}^{d}&\sum_{k=0}^{d_1-1}\frac{(-1)^{d_2+d-1-k}d!}{d_1k!(d_1-1-k)!d_2!}\left(\sum_{n=1}^{r}[x^{-i}y^{-m}]\frac{\rho(x)}{\rho(y)}x_{[rk+n]}y_{[r(d_1-k)+1-n]}\right)\\
&\cdot\left([x^{m-1}]\rho(x)x_{[rd_2-j+1]}\right)= [x^{-i}]\rho(x)x_{[rd-j+1]}.
\end{split}
\end{equation}
First we have lhs of Eq.~(\ref{lastl}) can be written as (noticing we have $rd>j-1$)
\begin{equation*}
lhs = \sum_{d_2=0}^{[(j-1)/r]}\sum_{k=0}^{d_1-1}\frac{(-1)^{d_2+d-1-k}d!}{d_1k!(d_1-1-k)!d_2!}\left(\sum_{n=1}^{r}[x^{-i}]\rho(x)x_{[rk+n]}[y^{-1}]\prod^{r(d_1-k)-n}_{p=j-1-rd_2}\frac{\hbar}{y-p\hbar+\frac\hbar 2}\right),\quad d_1+d_2=d,
\end{equation*}
by using
\begin{equation*}
\sum_{m=1}^{\infty}[y^{-m}]\frac{1}{\rho(y)}y_{[r(d_1-k)+1-n]}[x^{m-1}]\rho(x)x_{[rd_2-j+1]}=[y^{-1}]\prod^{r(d_1-k)-n}_{p=j-1-rd_2}\frac{\hbar}{y-p\hbar+\frac\hbar 2},
\end{equation*}
and the fact that the term
\begin{equation*}
[y^{-1}]\prod^{r(d_1-k)-n}_{p=j-1-rd_2}\frac{\hbar}{y-p\hbar+\frac\hbar2},
\end{equation*}
can only be non-vanishing ( and equals to 1) for 
\begin{equation*}
j-1-rd_2 = r(d_1-k)-n\Longleftrightarrow n=-j+1+r(d-k),
\end{equation*}
since $1\leq n\leq r$. This constraint for $n$ also constrain the value of $k$ by:
\begin{equation*}
d-\frac{r-1+j}{r}\leq k\leq d-\frac{j}{r}.
\end{equation*}
Furthermore, by $k\in\mathbb{Z}$ and $d-\frac{j}{r}-\left(d-\frac{r-1+j}{r}\right)=\frac{r-1}{r}<1$, we have $k=\left[d-\frac{j}{r}\right]$. Noticong that one has
\begin{equation*}
-\left[-\frac{j}{r}\right]-1=\left[\frac{j-1}{r}\right],
\end{equation*}
since $d_2\leq\left[\frac{j-1}{r}\right]$, we have $k\leq d_1-1$ coincide with its definition for $\forall d_2$.\par
Therefore we have
\begin{equation*}
\begin{split}
lhs &= \sum_{d_2=0}^{[(j-1)/r]}\frac{(-1)^{d_2+d-1-k}d!}{d_1k!(d_1-1-k)!d_2!}[x^{-i}]\rho(x)x_{[rd+1-j]} \\
&= [x^{-i}]\rho(x)x_{[rd+1-j]}\sum_{d_2=0}^{d-k-1}\frac{(-1)^{d_2+d-1-k}d!}{d_1k!(d-d_2-1-k)!d_2!}.
\end{split}
\end{equation*}
Therefore, if the following combinatorial indentity hold, the proof will be completed.
\begin{equation}\label{cbl}
\sum_{d_2=0}^{d-k-1}\frac{(-1)^{d_2+d-1-k}d!}{d_1k!(d-d_2-1-k)!d_2!}=1,\quad k=\left[d-\frac{j}{r}\right].
\end{equation}
\end{proof}
The only task remains is to prove Eq.~(\ref{cbl}), which is a specialization of the following lemma
\begin{lemma}
$\forall d_1,d_2\in\mathbb{Z}$, denote $d = d_1+d_2$, we have $\forall k\geq0, k\leq d$ 
$$
\sum_{d_2=0}^{d-k-1}\frac{(-1)^{d_2+d-1-k}d!}{d_1k!(d-d_2-1-k)!d_2!}=1.
$$
\end{lemma}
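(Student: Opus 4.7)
The plan is to reduce the identity to the classical Beta function evaluation
$$
\int_0^1 x^k(1-x)^n\,dx = \frac{k!\,n!}{(n+k+1)!},
$$
via an appropriate change of summation index.

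First, I would set $n := d-k-1 \geq 0$, so that $d = n+k+1$ and the summation range becomes $0 \leq d_2 \leq n$. The substitution $m := n - d_2$ then sends $d_2! \mapsto (n-m)!$, $(d-d_2-1-k)! = (n-d_2)! \mapsto m!$, and $d_1 = d-d_2 \mapsto m+k+1$. For the sign, $(-1)^{d_2+d-1-k}=(-1)^{d_2+n}=(-1)^{2n-m}=(-1)^m$. Grouping $n!$ into a binomial coefficient, the claim becomes
\begin{equation*}
\frac{(n+k+1)!}{k!\,n!}\sum_{m=0}^{n}\frac{(-1)^m\binom{n}{m}}{m+k+1}=1.
\end{equation*}

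Second, I would expand $(1-x)^n$ by the binomial theorem inside the Beta integral to evaluate the alternating sum:
\begin{equation*}
\int_0^1 x^k(1-x)^n\,dx=\sum_{m=0}^{n}(-1)^m\binom{n}{m}\int_0^1 x^{m+k}\,dx=\sum_{m=0}^{n}\frac{(-1)^m\binom{n}{m}}{m+k+1}.
\end{equation*}
Since the left-hand side equals $B(k+1,n+1) = \frac{k!\,n!}{(n+k+1)!}$, multiplying by the prefactor $\frac{(n+k+1)!}{k!\,n!}$ yields exactly $1$, proving the identity.

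The argument is elementary once the correct substitution is in place; there is no real obstacle beyond careful bookkeeping of signs and factorials. The one step that requires attention is verifying $(-1)^{d_2+n}\mapsto (-1)^m$, and confirming that the index range $0 \leq d_2 \leq n$ transforms consistently to $0 \leq m \leq n$; everything else is a direct application of the Beta integral.
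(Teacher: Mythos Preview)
Your argument is correct. The substitution $n=d-k-1$, $m=n-d_2$ is carried out accurately, the sign computation $(-1)^{d_2+n}=(-1)^{2n-m}=(-1)^m$ is right, and the resulting alternating sum
\[
\sum_{m=0}^{n}\frac{(-1)^m\binom{n}{m}}{m+k+1}
\]
is indeed the Beta integral $B(k+1,n+1)=\dfrac{k!\,n!}{(n+k+1)!}$, which cancels the prefactor to give $1$.

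The paper's own proof proceeds by induction on $k$; your route is genuinely different. Your Beta-integral evaluation is a closed-form computation that identifies the sum at once, whereas the inductive proof presumably reduces the $k$-case to the $(k-1)$-case via a telescoping or Pascal-type manipulation. What your approach buys is an explanation of \emph{why} the identity holds (it is a disguised Euler Beta evaluation), with no need to discover the right inductive reformulation; what the inductive approach buys is that it stays purely combinatorial and avoids importing any analytic machinery. Both are elementary, and yours is arguably the cleaner one here.

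One minor remark: you correctly restrict to $n=d-k-1\geq 0$, i.e.\ $k\leq d-1$. The lemma as stated allows $k\leq d$, but for $k=d$ the sum is empty and the identity fails; this is a boundary imprecision in the statement rather than a flaw in your proof, and in the paper's application one always has $k\leq d-1$.
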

\begin{proof}
Easy to prove by induction on $k$, $\forall d_1,d_2$.
\end{proof}
\end{appendix}

\renewcommand{\refname}{Reference}
\bibliographystyle{plain}
\bibliography{qcbib.bib} 

\begin{thebibliography}{10}

\bibitem{abramovich2008gromov}
Dan Abramovich, Tom Graber, and Angelo Vistoli.
\newblock Gromov-witten theory of deligne-mumford stacks.
\newblock {\em American Journal of Mathematics}, 130(5):1337--1398, 2008.

\bibitem{ADKMV}
Mina Aganagic, Robbert Dijkgraaf, Albrecht Klemm, Marcos Marino, and Cumrun
  Vafa.
\newblock Topological strings and integrable hierarchies.
\newblock {\em Communications in mathematical physics}, 261(2):451--516, 2006.

\bibitem{alexandrov2015enumerative}
A~Alexandrov.
\newblock Enumerative geometry, tau-functions and heisenberg--virasoro algebra.
\newblock {\em Communications in Mathematical Physics}, 338(1):195--249, 2015.

\bibitem{alexandrov2016ramifications}
Alexander Alexandrov, Danilo Lewanski, and Sergey Shadrin.
\newblock Ramifications of hurwitz theory, kp integrability and quantum curves.
\newblock {\em Journal of High Energy Physics}, 2016(5):124, 2016.

\bibitem{babelon2003introduction}
Olivier Babelon, Denis Bernard, and Michel Talon.
\newblock {\em Introduction to classical integrable systems}.
\newblock Cambridge University Press, 2003.

\bibitem{bouchard2014mirror}
Vincent Bouchard, Daniel~Hern{\'a}ndez Serrano, Xiaojun Liu, Motohico Mulase,
  et~al.
\newblock Mirror symmetry for orbifold hurwitz numbers.
\newblock {\em Journal of Differential Geometry}, 98(3):375--423, 2014.

\bibitem{chen2001orbifold}
Weimin Chen and Yongbin Ruan.
\newblock Orbifold gromov-witten theory.
\newblock {\em arXiv preprint math/0103156}, 2001.

\bibitem{chen2004new}
Weimin Chen and Yongbin Ruan.
\newblock A new cohomology theory of orbifold.
\newblock {\em Communications in Mathematical Physics}, 248(1):1--31, 2004.

\bibitem{dijkgraaf2009quantum}
Robbert Dijkgraaf, Lotte Hollands, and Piotr Su{\l}kowski.
\newblock Quantum curves and $\mathcal{D}$-modules.
\newblock {\em Journal of High Energy Physics}, 2009(11):047, 2009.

\bibitem{dijkgraaf2008supersymmetric}
Robbert Dijkgraaf, Lotte Hollands, Piotr Su{\l}kowski, and Cumrun Vafa.
\newblock Supersymmetric gauge theories, intersecting branes and free fermions.
\newblock {\em Journal of High Energy Physics}, 2008(02):106, 2008.

\bibitem{dijkgraaf2007two}
Robbert Dijkgraaf and Cumrun Vafa.
\newblock Two dimensional kodaira-spencer theory and three dimensional
  chern-simons gravity.
\newblock {\em arXiv preprint arXiv:0711.1932}, 2007.

\bibitem{ding2016r}
Xiang-Mao Ding, Yuping Li, and Lingxian Meng.
\newblock From r-spin intersection numbers to hodge integrals.
\newblock {\em Journal of High Energy Physics}, 2016(1):15, 2016.

\bibitem{dumitrescu2013spectral}
Olivia Dumitrescu, Motohico Mulase, Brad Safnuk, and Adam Sorkin.
\newblock The spectral curve of the eynard-orantin recursion via the laplace
  transform.
\newblock {\em Contemp. Math}, 593:263--315, 2013.

\bibitem{dunin2017quantum}
Petr Dunin-Barkowski, Motohico Mulase, Paul Norbury, Alexander Popolitov, and
  Sergey Shadrin.
\newblock Quantum spectral curve for the gromov--witten theory of the complex
  projective line.
\newblock {\em Journal f{\"u}r die reine und angewandte Mathematik (Crelles
  Journal)}, 2017(726):267--289, 2017.

\bibitem{eynard2007invariants}
Bertrand Eynard and Nicolas Orantin.
\newblock Invariants of algebraic curves and topological expansion.
\newblock {\em arXiv preprint math-ph/0702045}, 2007.

\bibitem{fang2017eynard}
Bohan Fang, Chiu-Chu Liu, and Zhengyu Zong.
\newblock The eynard--orantin recursion and equivariant mirror symmetry for the
  projective line.
\newblock {\em Geometry \& Topology}, 21(4):2049--2092, 2017.

\bibitem{goulden1997transitive}
Ian Goulden and David Jackson.
\newblock Transitive factorisations into transpositions and holomorphic
  mappings on the sphere.
\newblock {\em Proceedings of the American Mathematical Society},
  125(1):51--60, 1997.

\bibitem{gukov2012polynomial}
Sergei Gukov and Piotr Sulkowski.
\newblock A-polynomial, b-model, and quantization.
\newblock {\em Journal of High Energy Physics}, 2012(2):70, 2012.

\bibitem{johnson2009equivariant}
Paul~D Johnson.
\newblock Equivariant gromov-witten theory of one dimensional stacks.
\newblock {\em arXiv preprint arXiv:0903.1068}, 2009.

\bibitem{kac1994infinite}
Victor Kac.
\newblock {\em Infinite dimensional Lie algebras}.
\newblock Cambridge University Press, 1994.

\bibitem{kac2013bombay}
Victor~G Kac, Ashok~K Raina, and Natasha Rozhkovskaya.
\newblock {\em Bombay lectures on highest weight representations of infinite
  dimensional Lie algebras}, volume~29.
\newblock World scientific, 2013.

\bibitem{kharchev1998kadomtsev}
S~Kharchev.
\newblock Kadomtsev-petviashvili hierarchy and generalized kontsevich model.
\newblock {\em arXiv preprint hep-th/9810091}, 1998.

\bibitem{liu2013quantum}
Xiaojun Liu, Motohico Mulase, and Adam Sorkin.
\newblock Quantum curves for simple hurwitz numbers of an arbitrary base curve.
\newblock {\em arXiv preprint arXiv:1304.0015}, 2013.

\bibitem{mulase2013spectral}
Motohico Mulase, Sergey Shadrin, and Loek Spitz.
\newblock The spectral curve and the schr{\"o}dinger equation of double hurwitz
  numbers and higher spin structures.
\newblock {\em arXiv preprint arXiv:1301.5580}, 2013.

\bibitem{mulase2012spectral}
Motohico Mulase and Piotr Su{\l}kowski.
\newblock Spectral curves and the schr{\"o}dinger equations for the
  eynard-orantin recursion.
\newblock {\em arXiv preprint arXiv:1210.3006}, 2012.

\bibitem{norbury2016quantum}
Paul Norbury.
\newblock Quantum curves and topological recursion.
\newblock In {\em Proceedings of Symposia in Pure Mathematics}, volume~93,
  pages 41--65, 2016.

\bibitem{oblomkov2018gw}
Alexei Oblomkov, Andrei Okounkov, and Rahul Pandharipande.
\newblock Gw/pt descendent correspondence via vertex operators.
\newblock {\em arXiv preprint arXiv:1806.00714}, 2018.

\bibitem{okounkov2001infinite}
Andrei Okounkov.
\newblock Infinite wedge and random partitions.
\newblock {\em Selecta Mathematica}, 7(1):57, 2001.

\bibitem{okounkov2006equivariant}
Andrei Okounkov and Rahul Pandharipande.
\newblock The equivariant gromov-witten theory of {$\mathbb{P}^1$}.
\newblock {\em Annals of mathematics}, pages 561--605, 2006.

\bibitem{okounkov2006gromov}
Andrei Okounkov and Rahul Pandharipande.
\newblock Gromov-witten theory, hurwitz theory, and completed cycles.
\newblock {\em Annals of mathematics}, pages 517--560, 2006.

\bibitem{segal1985loop}
Graeme Segal and George Wilson.
\newblock Loop groups and equations of kdv type.
\newblock {\em Publications Math{\'e}matiques de l'IH{\'E}S}, 61:5--65, 1985.

\bibitem{tang2017equivariant}
Dun Tang.
\newblock Equivariant mirror symmetry for the weighted projective line.
\newblock {\em arXiv preprint arXiv:1712.04836}, 2017.

\bibitem{zhou2012intersection}
Jian Zhou.
\newblock Intersection numbers on deligne-mumford moduli spaces and quantum
  airy curve.
\newblock {\em arXiv preprint arXiv:1206.5896}, 2012.

\bibitem{zhou2012quantum}
Jian Zhou.
\newblock Quantum mirror curves for $\mathbb{C}^3$ and the resolved confiold.
\newblock {\em arXiv preprint arXiv:1207.0598}, 2012.

\end{thebibliography}
\end{document}